\documentclass[a4paper,fleqn]{cas-sc}

\usepackage[authoryear,longnamesfirst]{natbib}
\setcitestyle{numbers,square,comma,sort&compress}

\usepackage[english]{babel}
\usepackage{macros}

\usepackage{amsmath, amsfonts, amssymb}
\usepackage{graphicx}
\usepackage{subcaption}
\usepackage{thmtools} 
\usepackage{thm-restate}
\usepackage{xcolor}
\usepackage{lineno}
\usepackage{verbatim}
\usepackage{booktabs}
\usepackage{cleveref}
\usepackage{orcidlink}
\usepackage{cancel}
\usepackage{overpic}
\setcitestyle{maxnames=4, minnames=3}
\def\tsc#1{\csdef{#1}{\textsc{\lowercase{#1}}\xspace}}
\tsc{WGM}
\tsc{QE}
\usepackage{xcolor}

\begin{document}
\let\WriteBookmarks\relax
\def\floatpagepagefraction{1}
\def\textpagefraction{.001}

% Short author
\shortauthors{F. Naudot et al.}

% Main title of the paper
\title [mode = title]{Dimensions of Power: A Systematic Guide to Power Indices for Explainable AI}
\shorttitle{Power Indices and their Principles for XAI}

% Title footnote mark
% eg: \tnotemark[1]
% \tnotemark[1] 

% Title footnote 1.
% eg: \tnotetext[1]{Title footnote text}
% \tnotetext[1]{} 

% First author

% Options: Use if required
% eg: \author[1,3]{Author Name}[type=editor,
%       style=chinese,
%       auid=000,
%       bioid=1,
%       prefix=Sir,
%       orcid=0000-0000-0000-0000,
%       facebook=<facebook id>,
%       twitter=<twitter id>,
%       linkedin=<linkedin id>,
%       gplus=<gplus id>]

\author[]{Filip Naudot}[orcid=0009-0001-8354-6265]

% Corresponding author indication
\cormark[1]

% Footnote of the first author
% \fnmark[1]

% Email id of the first author
\ead{filipn@cs.umu.se}

% URL of the first author
% \ead[url]{}

% Credit authorship
% eg: \credit{Conceptualization of this study, Methodology, Software}
\credit{Conceptualization, Methodology, Software, Validation, Formal analysis, Writing - Original Draft, Visualization, Project administration}

% Address/affiliation
\affiliation[]{organization={Department of Computing Science},
            addressline={Ume{\aa} University},
            city={SE-901~87 Ume{\aa}},
            % citysep={}, % Uncomment if no comma needed between city and postcode
            % postcode={},
            % state={},
            country={Sweden}}

\author[]{Arunavo Ganguly}[orcid=0009-0004-9098-9904]

% Footnote of the second author
% \fnmark[2]

% Email id of the second author
\ead{aganguly@cs.umu.se}

% URL of the second author
% \ead[url]{}

% Credit authorship
\credit{Conceptualization, Methodology, Software, Validation, Formal analysis, Writing - Original Draft, Visualization}

% Address/affiliation
% \affiliation[2]{organization={},
%             addressline={}, 
%             city={},
% %          citysep={}, % Uncomment if no comma needed between city and postcode
%             postcode={}, 
%             state={},
%             country={}}

\author[]{Timotheus Kampik}[orcid=0000-0002-6458-2252]
\ead{tkampik@cs.umu.se}
\credit{Conceptualization, Methodology, Validation, Writing - Review \& Editing, Supervision}

\author[]{Vicen\c{c} Torra}[orcid=0000-0002-0368-8037]
\ead{vtorra@cs.umu.se}
\credit{Conceptualization, Validation, Writing - Review \& Editing}

\author[]{Christopher Bl{\"o}cker}[orcid=0000-0001-7881-2496]
\ead{cblocker@cs.umu.se}
\credit{Conceptualization, Methodology, Software, Validation, Formal analysis, Writing - Original Draft, Visualization}

% Corresponding author text
\cortext[1]{Corresponding author}

% Footnote text
% \fntext[1]{}

% For a title note without a number/mark
%\nonumnote{}

\begin{abstract}
  Power indices, originating in cooperative game theory, quantify each player's influence on the outcome of a given game.
  Originally designed to distribute profits or costs among players and to analyse the fairness of voting systems, power indices have recently gained prominence as methods for attributing outputs of AI-based systems to inputs, thus facilitating \emph{explainability}.
  However, selecting the appropriate power index for a given explanation task is an understudied problem.
  To address this, we organise power indices along three attribution dimensions: single-player, set-based, and cardinality-based.
  For each dimension, we review the corresponding power indices, generalise existing ones where applicable, and analyse which formal principles they satisfy.
  We provide proofs for properties that are missing in the literature and show that moving to the cardinality-based setting removes player-identity information while preserving some index-level distinctions.
  Using concrete examples, we illustrate how the choice of dimension and index affects the resulting attributions in practice, and offer guidance for practitioners seeking to select a suitable power index for a given application context.
\end{abstract}

% Use if graphical abstract is present
%\begin{graphicalabstract}
%\includegraphics{}
%\end{graphicalabstract}

% Research highlights
% \begin{highlights}
% \item 
% \item 
% \item 
% \end{highlights}

% Keywords
% Each keyword is seperated by \sep
\begin{keywords}
 Power indices
 \sep 
 Cooperative game theory
 \sep 
 Explainable AI
 \sep
 Feature attribution
 \sep
 Attribution granularity
\end{keywords}

\maketitle

%% Definitions/Commands.
\newcommand{\tnote}[1]{{\color{blue}{[TK: #1]}}}
\newcommand{\fnote}[1]{{\color{orange}{[FN: #1]}}}
\newcommand{\anote}[1]{{\color{green!80!black}{[AG: #1]}}}
\newcommand{\cnote}[1]{{\color{violet}{[CB: #1]}}}

\newcommand{\mycomment}[1]{}

\Crefname{principle}{Principle}{Principle}

\newtheorem{definition}{Definition}

%%%%%%% PAPER %%%%%%%
% \linenumbers
%%%%%%%%%%%%%%%%%%%%%%%%%%%%%%%%

%%%%%%%%%%%%%%%%%%%%%%%%%%%%%%%%
\section{Introduction}
\label{section:introduction}
%%%%%%%%%%%%%%%%%%%%%%%%%%%%%%%%
Explainable AI (XAI) techniques often rely on so-called \emph{attribution functions} that quantify the effect of inputs, such as features or feature values, on outputs, such as classification decisions, or generated text.
Fundamentally, such attribution functions can be viewed as \emph{power indices}, that is, functions quantifying players' value contributions in a cooperative game.
The arguably most prominent power index is the Shapley value~\cite{shapley1953}, which relates a player's power to its average marginal contribution across all possible coalitions of players.
Since the advent of XAI research, the Shapley value and its variants have been applied to explain feature importance in machine learning models~\cite{pmlr-v119-sundararajan20b,ijcai2022p778,Chen2023, naudot2025llmshap}.
The idea is to treat each feature as a game-theoretic player and evaluate its importance based on how much it influences the model's predictions.
For example, measuring feature importance can unearth undesirable biases at the instance or model level \cite{naudot2025scalable}.

Motivated by different use cases, several power indices exist beyond the Shapley value.
For example, the Shapley-Shubik power index assesses the voting power of players in simple yes-no voting scenarios \cite{shapley-shubik1954}.
The Banzhaf power index can be understood as a generalisation of the Shapley-Shubik power index to situations where voting rights are not equally distributed \cite{Banzhaf1965}.
The Owen value refines the Shapley value by accounting for the fact that coalition formation may be constrained by a priori group structure \cite{owen1977}.
While most power indices were designed to measure the power of individual players, we also consider alternative ``player dimensions'', shifting the focus from measuring the impact of single players to sets of players and the cardinality of those sets, that is, how much the number of players influences the output regardless of individual identities.

Power indices are associated with \emph{principles}, also called \emph{axioms}, that capture useful or intuitive properties.
In XAI, these principles matter because they make explicit what kind of behaviour an attribution method guarantees, such as whether explanations ignore irrelevant features or remain stable under feature renaming.
However, to date, there is no systematic overview of power indices across player dimensions---single players, sets of players, or set cardinality---a gap we close here.
We review the background for power indices and summarise the most common ones along with their properties.
Furthermore, we introduce additional principles and provide proofs that show which power indices satisfy them.
Our work intends to serve two purposes,
the first is to collect power indices and their properties in one place, including providing proofs for properties that are so far lacking in the literature; the second is to provide a starting point for those who seek to select a power index for a specific application.

An open-source reference implementation of all surveyed power indices is---alongside documentation, tests, and all examples provided in this paper---available at \url{https://github.com/filipnaudot/powerXAI}.

%%%%%%%%%%%%%%%%%%%%%%%%%%%%%%%%
\section{Preliminaries}
\label{section:preliminaries}
%%%%%%%%%%%%%%%%%%%%%%%%%%%%%%%%
This section introduces the fundamental game-theoretic notions we use throughout the paper.
For an intuitive overview, see Figure~\ref{fig:overview-prelims}.
We work in the standard framework of cooperative games in characteristic-function form, as used by Shapley, with a finite set of players \cite{shapley1953}.

\begin{figure}[h]
    \centering
    \includegraphics[width=\linewidth]{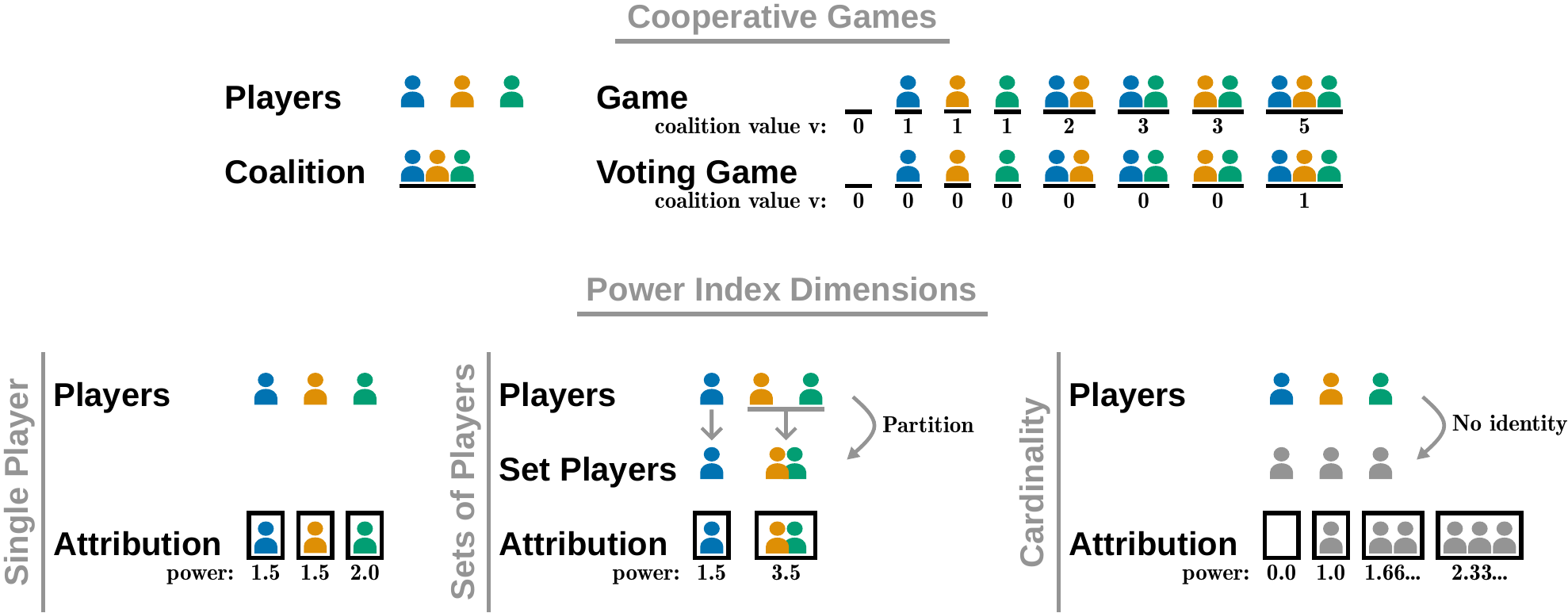}
    \vspace*{0\baselineskip}
    \caption{Illustration of the core concepts behind power indices. We depict \textbf{players} as game pieces, and \textbf{coalitions} as groups of players underlined with a bar. The \emph{grand coalition} contains all players. In cooperative \textbf{games}, the value function $v$ assigns values $v\left(S\right)$ to coalitions $S$, which we denote with a number under the coalition. In the example game shown here, each player alone ``creates'' a value of $1$; the green player's presence further increases the value by $1$ for each other player in the coalition. \textbf{Voting games} have a value of either $0$ or $1$; here we show an example voting game that requires unanimity. We consider three power-index dimensions in this work, using the value function from the game above: \textbf{Single-player} power indices assign power to individual players; here, we use the Shapley value \cite{shapley1953}. \textbf{Set-based} power indices consider player partitions and assign power to sets of players; here, we use the set-based adaptation of the Shapley value. Importantly, the orange and green players form an inseparable unit in this case. \textbf{Cardinality-based} power indices only consider the number of players, but discard their identities, which we represent with grey players; here, we use the $\Upsilon$ value \cite{Upsilon-values2024}.}
    \label{fig:overview-prelims}
\end{figure}

\topic{Players} 
Let $N$ be a finite, non-empty set.
We call the elements of $N$ \emph{players}.
These are the primitive units of interest, for example, agents or model features.

\topic{Coalition}
Given a player set $N$, a \emph{coalition} is any subset $S \subseteq N$.
Intuitively, we can interpret $S$ as a set of players that ``act together''.
The coalition containing all players $S = N$ is called the \emph{grand coalition}.

\topic{Game}
A cooperative \emph{game} on $N$ is a function $v: 2^{N} \rightarrow \mathbb{R}$, with the convention that $v(\emptyset) = 0$.
Following standard terminology, we may refer to both the function $v$ and the pair $\game$ simply as ``the game''.
For a coalition $S$, we call $v(S)$ the \emph{value} of that coalition.
Intuitively, $v(S)$ measures the game's outcome when exactly the players in $S$ ``act together''.
We deliberately leave the interpretation of $v$ open.
For a fixed $N$, the game $v$ with $v(S) = 0$ for every coalition $S \subseteq N$ is referred to as the \emph{trivial game}.

\topic{Voting games}
A \emph{voting game} on $N$ is a game $v$ 
such that for all $S, S' \subseteq N$ it holds that \emph{(i)} $v(S) \in \{ 0, 1 \}$ and \emph{(ii)} $v$ is \emph{monotone}, that is, if $S \subseteq S'$ then $v(S) \leq v(S')$.
In this setting, coalitions with $v(S) = 1$ are winning, and those with $v(S) = 0$ are losing.
\emph{Voting games} are standard models of yes/no collective decisions, such as voting rules in committees and parliaments.
In our explainability setting, we mainly consider real-valued games, that is, games that are not restricted to $\{0, 1\}$, and need not be monotone.

\topic{Power-index dimensions}
A \emph{power-index dimension} specifies the conceptual level at which power is attributed.
Formally, for a player set $N$, it specifies the domain $\mathbb{D}(N)$, and a power index $\mu$ assigns a value to each object in $\mathbb{D}\left(N\right)$, that is, $\powerindexempty(N,v)\colon \mathbb{D}(N) \mapsto \mathbb{R}$.

We distinguish three dimensions:
\begin{dimension}[Single-player dimension]\label{def:single-player-dim}
A power index $\powerindexempty$ is \emph{single-player dimension} iff the units of interest are the individual players, that is, $\mathbb{D}(N) = N$. 
Equivalently, it assigns a score $\powerindexempty_i(N,v)$ to each player $i \in N$.
\end{dimension}

\begin{dimension}[Set-based dimension]\label{def:set-based-dim}
A power index $\setpowerindexempty$ is \emph{set-based dimension} iff the units of interest are sets of players, that is, $\mathbb{D}(N) \subseteq 2^N$. 
Equivalently, it assigns a score $\setpowerindexempty_X(N,v)$ to player sets $X \in \mathbb{D}\left(N\right)$.
\end{dimension}

\begin{dimension}[Cardinality-based dimension]\label{def:cardinality-based-dim}
A power index $\cardpowerindexempty$ is \emph{cardinality-based dimension} iff it only depends on the number of considered players, but not on which exact players they are with $\mathbb{D}\left(N\right) = \left[1, \left|N\right|\right] \subset \mathbb{N}$.
Formally, for every game $(N,v)$ and all $X,Y \subseteq N$,
\[
    |X| = |Y| \; \Rightarrow \; \cardpowerindexempty_{|X|}(N,v) = \cardpowerindexempty_{|Y|}(N,v).
\]
Equivalently, it can be written as scores $\cardpowerindexempty_c(N,v)$ indexed only by $c \in \{1, \dots, |N|\}$.
\end{dimension}

\topic{Payouts} Given a game $\game$, for each player $i \in N$, player set $X \subseteq N$ and cardinality $c \in \{1, 2,\dots, n\}$, $\powerindex{i}{v}$, $\setpowerindex{X}{v}$ and $\cardpowerindex{c}{v}$ are also called \emph{payouts} w.r.t. $i, X$ and $c$, respectively.

%%%%%%%%%%%%%%%%%%%%%%%%%%%%%%%%
\section{Power Indices}
\label{section:power-indices}
%%%%%%%%%%%%%%%%%%%%%%%%%%%%%%%%
We restrict our analysis to a set of well-established power indices.
Although not exhaustive, this selection spans the main conceptual differences between attribution schemes and is therefore sufficient to support the broader conclusions of our analysis.
We present these indices for a principle-based analysis (\Cref{section:principle-analysis}), and to demonstrate how different notions of power naturally extend from single players to sets of players and cardinality-based measures.

%%%%%%%%%%%%%%%%%%%%%%%%%%%%%%%%%%%%%%%%%%%%
% Single-Player Power Indices
%%%%%%%%%%%%%%%%%%%%%%%%%%%%%%%%%%%%%%%%%%%%
\subsection{Single-Player Power Indices}
The simplest notion of contribution is the \emph{leave-one-out} contribution of player $i$, obtained by comparing the value of the grand coalition with the value obtained by removing only $i$ from it.
This is counterfactual in the sense that it asks what would happen if $i$ were absent while all other players remained present.
\begin{definition}[Counterfactual]\label{powerindex:counterfactual}
    Given a game $\game$ and a player $i \in N$, the counterfactual value for player $i$ is:
    \[
        \counterfact{i}{v} = v\left(N\right) - v\left(N \setminus \left\{i\right\}\right).
    \]
\end{definition}
This corresponds to player $i$'s marginal contribution in the single-coalition context where all other players are already present.
Moving beyond this baseline, we consider other coalition contexts as well.
The key idea of power indices is to determine marginal contributions, intuitively by asking: \emph{how much does a coalition's value change when player $i$ joins?}
A sub-question is then which coalitions we consider ``likely'' to form.
The Shapley value~\cite{shapley1953} answers the latter question by assuming that players join coalitions in uniform random order; a player's power is then its average marginal contribution at the moment they join.
\begin{definition}[Shapley Value]\label{powerindex:shapley}
    Given a game $\game$ and a player $i \in N$, the Shapley value for player $i$ is defined as:
    \begin{align*}
        \shap{i}{v}
        = \sum_{S \subseteq N \setminus \left\{i\right\}} \frac{\left|S\right|! \left(|N| - \left|S\right| - 1\right)!}{|N|!} \left(v\left(S \cup \left\{i\right\}\right) - v\left(S\right)\right)
        = \frac{1}{|N|!} \sum_{R \in \mathcal{R}} \left[ v\left(P^R_i \cup \left\{i\right\}\right) - v\left(P^R_i\right) \right].
    \end{align*}
\end{definition}
Here, $R$ denotes an ordering, or permutation, of the players, and $P_i^R$ is the set of players that precede player $i$ in the sequence defined by $R$, and $\mathcal{R}$ is the set of all orderings.
The Shapley value can be understood as a player's contribution to the grand coalition's outcome and, depending on the application, can be used to fairly allocate gains or costs based on each player's contribution.
Power indices based on marginal contributions can differ in two main ways: (i) they may differ in how the value function $v$ is interpreted or restricted, for example, when moving from general real-valued games to binary voting games; (ii) they may differ in how they weight the coalitions for which marginal contributions are evaluated.
This weighting reflects an assumption about which coalitions are more likely, or more relevant, to consider.
The Shapley value weights coalitions according to the number of player orderings that generate them, whereas the Banzhaf value~\cite{Banzhaf1965} assigns equal weight to all coalitions.
Figure~\ref{fig:shapley-banzhaf-weights} illustrates the resulting differences in weights across coalition sizes.
\begin{figure}
    \centering
    \includegraphics[width=0.5\linewidth]{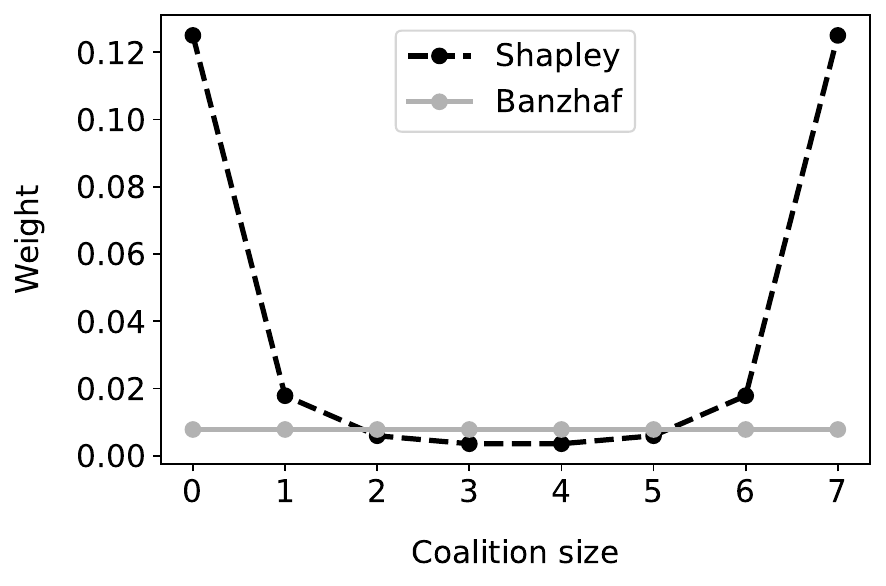}
    \caption{Coalition weights for the Shapley value and the Banzhaf index with $8$ players.
    The plot shows coalition sizes $0$ to $7$ because marginal contributions are evaluated with respect to coalitions that exclude the player under consideration.}
    \label{fig:shapley-banzhaf-weights}
\end{figure}

\begin{definition}[Banzhaf Value]\label{powerindex:banzhaf}
    Given a game $\game$ and a player $i \in N$, the Banzhaf value for player $i$ is defined as:
    \begin{equation*}
        \banzhaf{i}{v} = \frac{1}{2^{|N|-1}} \sum_{S \subseteq N \setminus \left\{i\right\}} \left( v\left(S \cup \left\{i\right\}\right) - v\left(S\right) \right).
    \end{equation*}
\end{definition}

The Owen value~\cite{owen1977} follows the Shapley value in measuring a player's power through average marginal contributions, but restricts the coalitions that may form, assuming that some coalitions would never form in practice.
When we have prior information that players tend to form specific groups, we should not treat all coalitions as equally likely.
For example, in an XAI setting, features may naturally be grouped by their source or by semantic similarity.
The core idea behind the Owen value is to compute an attribution that respects an a priori coalition structure by effectively applying the Shapley value in two stages: ``between'' and ``within'' the relevant groups defined in the a priori coalition structure.
This way, a player's contribution can be evaluated in a more realistic context.
\begin{definition}[Owen Value]\label{powerindex:owen}
    Given a player set $N$ with an a priori coalition structure $M_N = \{S^1, \ldots, S^m\}$, that is, a partition of $N$, let $S^k$ denote the block containing player $i$.
    Then, for each subset $T\subseteq M_N \setminus \{S^k\}$, we define $Q := \bigcup_{S^j \in T} S^j$,
    that is, the set of all players belonging to the blocks in $T$ (since $T$ is a coalition of blocks).
    The Owen value for player $i \in S^k$ is defined as:
    \begin{equation*}
        \owen{i}{v} = \sum_{T \subseteq M_N \setminus \{S^k\}} \sum_{K \subseteq S^k \setminus \{i\}} \frac{|K|! \left(|S^k| - |K| - 1\right)! |T|! \left(|M_N| - |T| - 1\right)!}{|S^k|! |M_N|!} \left( v\left(Q \cup K \cup \left\{i \right\}\right) - v\left(Q \cup K \right) \right).
    \end{equation*}
\end{definition}
The outer sum ranges over possible collections of groups other than $i$'s own group, while the inner sum ranges over possible subsets of $i$'s group that do not contain $i$.
Each term measures the marginal contribution of adding $i$ to the coalition formed by the players of those outside groups together with the selected members of $i$'s own group, and the weight combines the Shapley-style weights between and within groups.

%%%%%%%%%%%%%%%%%%%%%%%%%%%%%%%%%%%%%%%%%%%%
% Set-based Power Indices
%%%%%%%%%%%%%%%%%%%%%%%%%%%%%%%%%%%%%%%%%%%%
\subsection{Set-based Power Indices}
When using power indices, one may be interested in assigning power to an entire set of players rather than to each player individually.
This can be useful whenever players naturally cluster into coherent groups, such as correlated features, input modalities, or measurements from the same sensor, where single-player attribution is either uninterpretable or unnecessary.
However, the set-based power indices we introduce are not fundamentally different from single-player power indices.
The reason for introducing set-based indices separately is therefore not that they require a different conceptual foundation, but that using them correctly requires care. 
One must be explicit about which sets are treated as inseparable units, how the partition is chosen, and what question the resulting attribution is meant to answer.
Once a partition of the player set has been fixed, each block of the partition can be treated as a new ``player'', and the standard single-player indices can, in principle, be applied to the resulting relabelled game.
However, comparisons should only be made within the game induced by that same partition.
Changing the partition changes the game itself, so scores obtained under different partitions are not directly comparable, and the properties of the index need not transfer across such comparisons.
In particular, one must distinguish whether the goal is to capture interactions among players acting together as a coalition, similar to \citep{Grabisch1999}, or to measure how much a group contributes to the grand total when treated as an inseparable unit.
This choice matters because the same group can receive a high score for different reasons: either because its members create a strong joint effect, or because the group contains several individually important features.
In this paper, we adopt the latter perspective: set-based power indices apply the logic of single-player indices while treating sets of players as the basic units of attribution.
Grouping also yields a computational benefit: treating each ``block'' as a single entity reduces the number of players, thereby considerably reducing the number of coalitions we need to compute.

We start from an underlying player set $N$ and a partition $\playerpartition = \{P_1, \dots, P_{|\playerpartition|}\}$ of $N$, where each block $P_i \subseteq N$ represents a set of players that is assumed to act as an \emph{inseparable unit}, meaning, its members are only allowed to join coalitions together.
Accordingly, the resulting game is defined on the blocks, that is, $(\playerpartition, \setgame)$ rather than $\game$, where for any $S \subseteq \playerpartition$,
\[
    \setgame(S) \;:= \; v \!\left(\bigcup_{ s \in S} s \right).
\]
All set-based power indices are applied to the game $(\playerpartition, \setgame)$ and return scores for sets of players $P_i \in \playerpartition$.
Note that modifying the partition $\playerpartition$ alters the induced game $\setgame$, since it determines which coalitions can form and thereby affects the resulting attributions.

The set-based counterfactual $\setcounterfactempty$ extends the single-player counterfactual naturally by attributing power to the pre-defined sets of players.
\begin{definition}[Set-based Counterfactual]\label{powerindex:set-counterfactual}
    Given a game $(\playerpartition, \setgame)$ and a set of players $\playerset \in \playerpartition$, the set-based counterfactual value for $\playerset$ is defined as:
    \[
        \setcounterfact{\playerset}{\setgame} = \setgame\left(\playerpartition \right) - \setgame\left(\playerpartition \setminus \{\playerset\}\right).
    \]
\end{definition}

Similar to the set-based counterfactual, the set-based Shapley $\setshapempty$ and Banzhaf $\setbanzhafempty$ value extend their single-player versions by considering sets of players as the basic units of attribution when forming coalitions and computing weight factors.
\begin{definition}[Set-based Shapley Value]\label{powerindex:set-shapley}
    Given a game ($\playerpartition, \setgame)$ and a set of players $\playerset \in \playerpartition$, the set-based Shapley value for $\playerset$ is defined as:
    \begin{align*}
    \setshap{X}{\setgame} =
    \sum_{S \subseteq \playerpartition \setminus \{\playerset\}} \frac{|S|! (|\playerpartition| - |S| - 1)!}{|\playerpartition|!}\left(\setgame(S \cup \{\playerset\}) - \setgame(S)\right).
    \end{align*}
\end{definition}
\begin{definition}[Set-based Banzhaf Value]\label{powerindex:set-banzhaf}
    Given a game ($\playerpartition, \setgame)$ and a set of players $\playerset \in \playerpartition$, the set-based Banzhaf value for $\playerset$ is defined as:
    \[
        \setbanzhaf{X}{\setgame} = \frac{1}{2^{|\playerpartition|-1}} \sum_{S \subseteq \playerpartition \setminus \{\playerset\}} \left( \setgame\left(S \cup \{\playerset\}\right) - \setgame\left(S\right) \right).
    \]
\end{definition}

Even when evaluating sets of players, the Owen value's ability to respect a priori information regarding partition formation is useful.
Instead of evaluating how individuals interact between and within predetermined groups, one may want to evaluate how the groups themselves interact with other groups within a predetermined ``structure''.
\begin{definition}[Set-based Owen Value]\label{powerindex:set-owen}
    Given a game $(\playerpartition, \setgame)$ and a set-based player $X \in \playerpartition$, let $M_\playerpartition = \{S^1, \ldots, S^m\}$ be an a priori coalition structure,
    that is, a partition of $\playerpartition$.
    Let $S_X \in M_\playerpartition$ denote the block containing $X$.
    Then, for each subset $T \subseteq M_\playerpartition \setminus \{S_X\}$, define $Q := \bigcup_{S^j \in T} S^j$, that is, $Q$ is the set of all set-based players contained in the
    coalition-structure blocks in $T$.
    The Owen value for the set-based player $X$ is defined as:
    \[
        \setowen{X}{\setgame} =
        \sum_{T \subseteq M_\playerpartition \setminus \{S_X\}}
        \sum_{K \subseteq S_X \setminus \{X\}}
        \frac{
            |K|! \,
            \bigl(|S_X| - |K| - 1\bigr)! \,
            |T|! \,
            \bigl(|M_\playerpartition| - |T| - 1\bigr)!
        }{
            |S_X|! \, |M_\playerpartition|!
        }
        \left(
            \setgame\!\left(Q \cup K \cup \{X\}\right)
            -
            \setgame\!\left(Q \cup K\right)
        \right).
    \]
\end{definition}

%%%%%%%%%%%%%%%%%%%%%%%%%%%%%%%%%%%%%%%%%%%%
% Cardinality-based Power Indices
%%%%%%%%%%%%%%%%%%%%%%%%%%%%%%%%%%%%%%%%%%%%
\subsection{Cardinality-based Power Indices}\label{section:cardinality-power-indices}
The previously introduced power indices all measure the value of a specific player or a group of players.
However, one might be interested in how much the outcome typically changes when one more player is added, regardless of which player is added.
Consider a scenario in which a system must receive enough supporting evidence before approving a decision, such as flagging a transaction as legitimate or granting access to a service.
Here, the players can be interpreted as pieces of evidence or input features, and the question is not which particular feature is decisive, but how much additional evidence is typically needed for the system output to change.
Cardinality-based power indices are well-suited for this because they abstract away from player identity and treat features as interchangeable.
The Shapley value, in contrast, assumes that identities matter, replacing a highly informative feature may require a very different combination of other features to recover the same output.
Cardinality-based power indices deliberately ignore this distinction and instead ask how the expected outcome changes when one more feature is added, regardless of which feature it is.

The cardinality-based counterfactual compares the value of the grand coalition with the average value of coalitions of size $c$, regardless of which particular players are in those coalitions.
\begin{definition}[Cardinality-based Counterfactual Value]\label{powerindex:cardinality-counterfactual}
    Given a game $\game$ and a cardinality $c \in \{1, \ldots, |N|\}$, the cardinality-based counterfactual value for $c$ is defined as:
    \[
        \cardcounterfact{c}{v}
        \;=\;
        v(N) - 
        \frac{1}{\binom{|N|}{c}}
        \sum_{\substack{S\subseteq N\\
        |S|=c}} v(S).
    \]
\end{definition}

We now derive cardinality versions of the single-player Shapley, Banzhaf, and Owen power indices.
We leave the derivation of these cardinality-based alternatives to the appendix (\Cref{appx:cardinality-derivations}) and present the final derived definitions below.

The cardinality-based Shapley value considers the average change in output of $v$ when moving from coalitions of size $c - 1$ to coalitions of size $c$.
\begin{definition}[Cardinality-based Shapley Value]\label{powerindex:cardinality-shapley}
    Given a game $\game$ and a cardinality $c \in \{1, \ldots, |N|\}$, the cardinality-based Shapley value for $c$ is defined as:
    \[
        \cardshap{c}{v}
        \;=\;
        \frac{1}{\binom{|N|}{c}}
        \sum_{\substack{S \subseteq N\\ |S| = c}} v(S) - 
        \frac{1}{\binom{|N|}{c - 1}}
        \sum_{\substack{S' \subseteq N\\
        |S'|=c-1}} v(S').
    \]
\end{definition}

The cardinality-based Banzhaf value considers the same change as the cardinality-based Shapley value, namely the change in output when moving from coalitions of size $c - 1$ to coalitions of size $c$.
Unlike the cardinality-based Shapley value, however, it does not treat all coalition sizes equally.
In this sense, the relationship between the two indices is reversed compared to the single-player setting.
There, Banzhaf treats individual coalitions equally, while Shapley adjusts for coalition size, while at the cardinality-based dimension, Shapley compares coalition sizes evenly, while Banzhaf gives more influence to sizes that can be formed in more ways.
\begin{definition}[Cardinality-based Banzhaf Value]\label{powerindex:cardinality-banzhaf}
    Given a game $\game$ and a cardinality $c \in \{1, \ldots, |N|\}$, the cardinality-based Banzhaf value for $c$ is defined as:
    \[
        \cardbanzhaf{c}{v}
        \;=\;
        \frac{|N| \binom{|N|-1}{c-1}}{2^{|N|-1}}
        \left(
        \frac{1}{\binom{|N|}{c}}
        \sum_{\substack{S \subseteq N\\ |S| = c}} v(S) - 
        \frac{1}{\binom{|N|}{c-1}}
        \sum_{\substack{S' \subseteq N\\
        |S'|=c-1}} v(S')
        \right)
        \;=\; \frac{|N| \binom{|N|-1}{c-1}}{2^{|N|-1}} \cardshap{c}{v}.
    \]
\end{definition}

The cardinality-based Owen value computes how much the value changes when a coalition grows from size $c - 1$ to size $c$, but only over coalitions that the a priori structure permits, which are the whole outside blocks together with some members of the joining player's own block.
Collecting all such steps whose resulting coalition has size $c$ gives the payout for that cardinality.
\begin{definition}[Cardinality-based Owen Value]\label{powerindex:cardinality-owen}
    Given a game $\game$ and a cardinality $c \in \{1, \ldots, |\players|\}$, let $M_\players = \{S^1, \ldots, S^m\}$ be a partition of $\players$ that defines an a priori coalition structure, where $\mathcal{I} = \left\{1,\ldots,m\right\}$ are indices into $M_\players$.
    For each subset $R \subseteq \mathcal{I}$, define
    \[
        Q_R = \bigcup_{j\in R} S^j.
    \]
    Thus, $Q_R$ is the set of players contained in the blocks indexed by $R$.
    For fixed $k$ and $R$, let $m_{k,R}(t)$ denote the average marginal change obtained by increasing the number of players selected from $S^k$ from $t - 1$ to $t$, while keeping all players in $Q_R$ present (see Appendix~\ref{appendix:derivation-owen-cardinality}).
    In the cardinality-$c$ term, the local block cardinality is therefore $t = c - |Q_R|$.
    The cardinality-based Owen value for $c$ is defined as:
    \begin{equation}
        \cardowen{c}{v} = \sum_{k=1}^m \sum_{r=0}^{m-1} \frac{r!\left(m-1-r\right)!}{m!} \sum_{\substack{R \subseteq \mathcal{I} \setminus \left\{k\right\}, \left|R\right| = r \\ t = c - \left|Q_R\right|, 1 \leq t \leq |S^k|}} m_{k,R}\left(t\right).
    \end{equation}
\end{definition}

The cardinality-based power index in \Cref{powerindex:cardinality-shapley} coincides with a cardinality-based power index called the $\upsiempty$-value \cite{Upsilon-values2024}, defined via maximal chains, which are all possible coalition sequences from $\emptyset$ to $N$ in which one player is added at each step.
Proposition~2 in~\cite{Upsilon-values2024} proves that this chain-based definition reduces exactly to \Cref{powerindex:cardinality-shapley} above.
$\upsiempty$-values measure the average marginal change in payoff when moving from coalitions of size $c - 1$ to size $c$ over all maximal chains. 
As such, the $\upsiempty$-value is fundamentally a measure over coalition cardinality, not player identity.
\begin{definition}[$\upsiempty$ Value]\label{powerindex:upsilon}
    Given a set of all maximal chains $\mathcal{M}$ over the player set $N$, the $\upsiempty$ value is defined as:
    \[
    \upsi{c}{v} = \frac{1}{|\mathcal{M}(N)|}\sum_{{\mathcal{C} \in \mathcal{M}}(N)}(v(\mathcal{C}_c) - v(\mathcal{C}_{c-1})),
    \]
    where ${\cal C}_c$ denotes the $c^{th}$ member of the chain ${\cal C}$.
\end{definition}

In the context of cooperative game theory, the $\upsiempty$ value was introduced for the class of games $\game$ in which $v$ is monotone.
For our purposes, we adapt it for non-monotonic games $v$.

%%%%%%%%%%%%%%%%%%%%%%%%%%%%%%%%%%%%%%%%%%%%%%%%%%%%%%%%%%%%%%%%%%%%%%
\section{Principles}
\label{section:principles}
%%%%%%%%%%%%%%%%%%%%%%%%%%%%%%%%%%%%%%%%%%%%%%%%%%%%%%%%%%%%%%%%%%%%%%
We introduce seven principles, and group them into classical and non-classical ones.
To explain the purpose behind each principle, we provide examples either over the feature set $\{x_1,x_2,x_3\}$ or $\{x_1,x_2\}$ for various classifiers.
The classical principles capture expectations about how power should be distributed, and have historically been used to characterise indices such as the Shapley value, while the non-classical principles capture requirements that become important when power indices are used for XAI.

\subsection{Classical Principles}
The principles appear in the literature on cooperative game theory, where they provide a principle-based equivalence characterisation of power indices~\cite{Banzhaf1965,mathematical_properties_Banzhaf_value,shapley-shubik1954,shapley1953}.
We focus on efficiency, symmetry, weak anonymity, and null-player because they express basic expectations for attribution, such as, the total value should be accounted for, equivalent features should be treated alike, irrelevant features should receive no credit, and feature names should not affect the resulting explanation.

\subsubsection*{Efficiency}
The principle of efficiency requires the burden of explainability to be shared among all features,
demanding that the aggregate of the payouts equals the total value generated by the players.
From the perspective of feature engineering, the sum of the explainability scores of individual features, that is, their respective payout, must equal the total explainability prowess of the feature set. 
Formally, the principle is stated as follows.
\begin{principle}[Efficiency]
\label{principle:efficiency}
    We say that $\powerindexempty$ satisfies efficiency iff for every game $\game$ it holds that
    $$
    \sum_{i \in N} \powerindex{i}{v} = v(N).
    $$ 
\end{principle}

\subsubsection*{Symmetry}
The principle of symmetry corresponds to ``feature blindness'': explainability scores should ignore a feature's labelling and depend on its actual contribution towards explainability. 
Formally, if a value function always treats two features identically, then by the principle of symmetry, their final payouts should be identical.
\begin{principle}[Symmetry]
\label{principle:symmetry}
    We say $\powerindexempty$ satisfies symmetry iff whenever player $i,j\in N$ satisfy
    $$
    v(S \cup \{j\})
    =
    v(S\cup \{i\})
    $$
    for every $S\subseteq N\setminus\{i,j\}$ in the game $(N,v)$, it follows that $\powerindex{i}{v}=\powerindex{j}{v}.$
\end{principle}
%
% \begin{example}
%     Notice that $\kappa$ in \Cref{tab:classifier} does not distinguish between the feature $\mathbf{x_1}$ and $\mathbf{x_2}$.
%     For the sake of fairness, we pick a value function 
%     $v:2^{\{\mathbf{x_1,x_2,x_3}\}}\rightarrow \mathbb{R}$ 
%     that behaves the same whenever either of $\mathbf{x_1}$ or $\mathbf{x_2}$ are included in the coalition.
%     We formally define $v$ as:
%     \[v(S):=
%     \begin{cases}
%         0, &\text{if $S=\{\bf x_3\}$, $S=\emptyset$}
%         \\
%         1, &\text{otherwise}
%     \end{cases}
%     \]
%     Expectation is, final payouts will be identical for the features $\mathbf{x_1}$ and $\mathbf{x_2}$.
%     If we use the Shapley index, then it shows that $\shap{1}{v}=\shap{2}{v}= 0.5$, whereas $\shap{3}{v}=0$.
% \end{example}

\subsubsection*{Null Player}

The null player principle implies a zero payout towards features that do not contribute according to the value function $v$.
The closely related concept of a dummy player instead gives the player a non-zero constant value.
Formally, if the value function does not change upon the inclusion of some feature across all coalitions, then that feature is attributed a zero explainability score as per the null-player principle.
\begin{principle}[Null Player]
\label{principle:null-player}
    We say $\powerindexempty$ satisfies null player iff every player $i \in N$ who satisfies 
    $$
    v(S\cup \{i\}) = v(S)
    $$
    for every $S\subseteq N\setminus\{i\}$ in the game $\game$ receives the payout $\powerindex{i}{v} = 0$.
\end{principle}

\subsubsection*{Weak Anonymity}\label{subsection:weak-anonymity}
The principle of weak anonymity ensures that the order of features does not affect the payout of individual players.
In other words, if the features are renamed and the value function is adjusted accordingly, and each feature receives the same payout as it did before renaming.
The principle is formally defined via permutations over the set of players $N=\{1,\dots,n\}$.
A permutation $\pi:N\rightarrow N$ is a bijective function. 
Based on $\pi$, we define a function $\hat{\pi} :2^N \rightarrow 2^N$ that alters every coalition $S\subseteq N$ in the following way:
$$
\hat{\pi}(S):=\{\pi(m):m\in S\}.
$$ 
We abuse notation and denote both the functions $\pi$ and $ \hat{\pi}$, by $\pi$.
Based on that, we can derive $(N,v_\pi)$ from the original game $\game$, where $v_\pi(\hat{\pi}(S)):=v(S)$ for every coalition $S\subseteq N$. 
We also note, w.r.t. all versions of the Owen value, that a permutation acts on the whole specification of the game.
Renaming the players induces the game $(N, v_\pi)$ as above and, for indices that take an a priori coalition structure, the renamed structure $\pi(M_\players) = \{\pi(S^1), \ldots, \pi(S^m)\}$.
Renaming a feature should rename it wherever it occurs, including in the block structure.

The weak anonymity principle is defined as follows.
\begin{principle}[Weak Anonymity]
\label{principle:weak-anonymity}
     We say $\powerindexempty$ satisfies weak anonymity iff for every permutation $\pi$ over $N$, and the game $(N,v_\pi)$, it follows:
     $$
     \powerindex{\pi(i)}{v_\pi}=\powerindex{i}{v},
     $$
     for every player $i \in N$. 
\end{principle}

\subsection{Non-classical principles}
We now introduce the non-classical principles meant to answer two questions that arise while using power indices: (i) Under what circumstances can the use of power indices be considered successful? (ii) How do the more involved power indices relate to the simpler cases of explainability? 
Based on the above, the non-classical principles of success, counterfactuality, and quantitative counterfactuality are introduced.
These principles are formulated in the context of the broader explainability literature (see~\cite{Leila2022-axiomatic-foundations,KampikPYCT24-principle}) and are intended to capture aspects beyond classical principles.

%%%%%%%%%%%%%%%%%%%%%
\subsubsection*{Success}
%%%%%%%%%%%%%%%%%%%%%
The principle of success is based on the work of \citet{Leila2022-axiomatic-foundations}.
At an intuitive level, the success principle stipulates that every output of a black-box classifier can be explained with respect to the chosen explainability method. 
This stipulation is straightforward for power indices as every output can be assigned payouts via designated value functions.  
Hence, we adapt the principle at the level of payout classification. 
In other words, the success principle for power indices ensures that a power index ``successfully'' explains a black-box if for every output there is some feature that receives a non-zero payout. 
Formally, it is introduced as follows.
\begin{principle}[Success]
\label{principle:success}
    We say $\powerindexempty$ satisfies success iff every game $\game$, with $v$ not being the zero function, has a player $i \in N$ such that $\powerindex{i}{v} \neq 0$.
\end{principle}

%%%%%%%%%%%%%%%%%%%%%
\subsubsection*{Counterfactuality}
%%%%%%%%%%%%%%%%%%%%%
The principle of counterfactuality proposes that the explainability score of any given feature should be aligned with the direction of change in value between the grand coalition, containing all features, and the coalition obtained by removing that feature from the grand coalition.
The notion distinguishes between three cases: positive, zero and negative payout, which correspond to added, trivial, and negative contribution, respectively, when the feature is added to the coalition of all the other features.
\begin{principle}[Counterfactuality]
\label{principle:counterfactuality}
We say $\powerindexempty$ satisfies counterfactuality
iff for every $i \in N$ the following holds:
\begin{itemize}
    \item 
    If $\powerindex{i}{v} < 0$, then
    $v(N) < v(N \setminus \{i\})$;
    \item 
    If $\powerindex{i}{v} = 0$,
    then 
    $v(N) = v(N \setminus \{i\})$;
    \item 
    If $\powerindex{i}{v} > 0$,
    then 
    $v(N) > v(N \setminus \{i\})$;
\end{itemize}
\end{principle}

The counterfactuality principle mirrors the counterfactual power index, that is, it is motivated by the same intuition as the counterfactual power index.
It abstracts the idea that a feature's payout should track the change in value obtained when that feature is removed from the grand coalition.
In this way, the sign of the payout reflects whether the feature contributes positively, makes no difference, or contributes negatively to the overall value.

%%%%%%%%%%%%%%%%%%%%%%%%%%%%%%%%%%%%%%%%%%%%%%%%%%%%%%
\subsubsection*{Quantitative Counterfactuality}
%%%%%%%%%%%%%%%%%%%%%%%%%%%%%%%%%%%%%%%%%%%%%%%%%%%%%%
The last non-classical principle is a refinement of the principle of counterfactuality (Principle~\ref{principle:counterfactuality}), 
based on~\cite{KampikPYCT24-principle}.
In this case, the explainability burden is precisely the difference as defined in Principle~\ref{principle:counterfactuality}.
Specifically, the counterfactuality principle suggests that if $\powerindex{i}{v} > 0$ then $v(N)-v(N\setminus\{i\}) > 0$.
However, this comparison only considers positive, neutral, or negative change and is not quantified.
Quantitative counterfactuality proposes that the difference should precisely be $\powerindex{i}{v}$.
\begin{principle}[Quantitative Counterfactuality]
\label{principle:quantitative-counterfactuality}
    A power index $\powerindexempty$ satisfies the \emph{quantitative counterfactuality principle} iff for every $i \in N$, it holds that
    $$
    \powerindex{i}{v} = v(N) - v(N \setminus \{i\}).
    $$
\end{principle}
%

%%%%%%%%%%%%%%%%%%%%%%%%%%%%%%%%%%%%%%%%%%%%%%%%%%%%%%%%%%%%%%%%%%%%%
\section{Principle-based Analysis}
\label{section:principle-analysis}
%%%%%%%%%%%%%%%%%%%%%%%%%%%%%%%%%%%%%%%%%%%%%%%%%%%%%%%%%%%%%%%%%%%%%
We analyse the power indices introduced in \Cref{section:power-indices} using the classical and non-classical principles discussed in \Cref{section:principles}. 
We organise the analysis along our three attribution dimensions: based on single players, sets of players, and cardinalities.

For each dimension, we examine which principles are satisfied by the corresponding power indices, either by giving a formal proof or by presenting a counterexample.
The proofs of all results in this section are deferred to the Appendix.

We assume $N$ to be a set of players, and $\game$ to denote a fixed but arbitrary game with the function $v:N\rightarrow \mathbb{R}$.
We fix $i \in N$ to denote an arbitrary player, and $S \subseteq N$ to be an arbitrary coalition.
Lastly, $\powerindexempty$ stands for any arbitrary power-index introduced in \Cref{section:power-indices}.

%%%%%%%%%%%%%
% SINGLE & SET
%%%%%%%%%%%%%
%%%%%%%
% CLASSICAL
%%%%%%%
\begin{table*}[ht!]
    \centering
    \renewcommand{\arraystretch}{1.25}
    \begin{tabular}{ll|cccc}
        \hline
        Dimension & Power Index 
        & Efficiency 
        & Symmetry 
        & Null player 
        & Weak Anonymity \\
        \hline
        \multirow{4}{*}{Single-player}
        & Counterfactual & \xmark & \cmark & \cmark & \cmark \\
        & Shapley        & \cmark & \cmark & \cmark & \cmark \\
        & Banzhaf        & \xmark & \cmark & \cmark & \cmark \\
        & Owen           & \cmark & \xmark & \cmark & \cmark \\
        \hline
        \multirow{4}{*}{Set-based}
        & Counterfactual & \xmark & \cmark & \cmark & \cmark \\
        & Shapley        & \cmark & \cmark & \cmark & \cmark \\
        & Banzhaf        & \xmark & \cmark & \cmark & \cmark \\
        & Owen           & \cmark & \xmark & \cmark & \cmark \\
        \hline
    \end{tabular}
    \caption{Classical axiom satisfaction for single-player and set-based power indices.}
    \label{tab:axiomatic-compliance-summary-classical}
\end{table*}

%%%%%%%
% NON-CLASSICAL
%%%%%%%
\begin{table*}[ht!]
    \centering
    \renewcommand{\arraystretch}{1.25}
    \begin{tabular}{ll|ccc}
        \hline
        Dimension & Power Index
        & Success
        & Counterfactuality
        & Quantitative Counterfactuality \\
        \hline
        \multirow{4}{*}{Single-player}
        & Counterfactual & \xmark / [\cmark] & \cmark & \cmark \\
        & Shapley        & \xmark / (\cmark) & \xmark / [\cmark] & \xmark \\
        & Banzhaf        & \xmark / (\cmark) & \xmark / [\cmark] & \xmark \\
        & Owen           & \xmark / (\cmark) & \xmark / [\cmark] & \xmark \\
        \hline
        \multirow{4}{*}{Set-based}
        & Counterfactual & \xmark / [\cmark] & \cmark & \cmark \\
        & Shapley        & \xmark / (\cmark) & \xmark / [\cmark] & \xmark \\
        & Banzhaf        & \xmark / (\cmark) & \xmark / [\cmark] & \xmark \\
        & Owen           & \xmark / (\cmark) & \xmark / [\cmark] & \xmark \\
        \hline
    \end{tabular}
    \caption{
    Non-classical axiom satisfaction for single-player and set-based power indices.
    A plain symbol indicates satisfaction or failure for arbitrary games.
    Parentheses indicate results that hold when the game is monotone, and square brackets indicate results that hold when the game is strictly monotone.
    Thus, for example, \xmark / (\cmark) means the axiom fails in general but holds for monotone games, while \xmark / [\cmark] means it fails in general but holds for strictly monotone games.
    }
    \label{tab:axiomatic-compliance-summary-non-classical}
\end{table*}

%%%%%%%%%%%%%
% CARDINALITY
%%%%%%%%%%%%%
\begin{table}[ht!]
    \centering
    \renewcommand{\arraystretch}{1.25}
    \begin{tabular}{l|ccccc}
         & \multicolumn{5}{c}{Cardinality-based Indices} \\
        \cline{2-6}
        % \hline
        Principle
        & Counterfactual
        & $\boldsymbol{\upsiempty}$ 
        & Shapley
        & Banzhaf
        & Owen \\
        \hline
        Efficiency & \xmark & \cmark & \cmark & \xmark & \cmark \\
        Dummy Cardinality & \xmark & \cmark & \cmark & \xmark & \cmark \\
        Absolute Anonymity & \cmark & \cmark & \cmark & \cmark & \cmark \\
        \hline
        Success & \xmark & \xmark & \xmark & \xmark & \xmark \\
        Counterfactuality & \cmark & \xmark & \xmark & \xmark & \xmark \\
        Quantitative Counterfactuality & \cmark & \xmark & \xmark & \xmark & \xmark \\
        \hline
    \end{tabular}
    \caption{Classical and non-classical axiom satisfaction for cardinality-based power indices.}
    \label{tab:cardinality-based-axiomatic-compliance-summary}
\end{table}

%%%%%%%%%%%%%%%%%%%%%%%%%%%%%%%%
\subsection{Single Player Dimension Analysis}
\label{section:single-player-indices-analysis}
%%%%%%%%%%%%%%%%%%%%%%%%%%%%%%%%
For the single-player dimension, we study power indices where explainability scores are assigned to single players as per the chosen power index.
The main goal is to understand what each index reveals about an individual player's contribution, and which principles it respects or violates.
This helps clarify what kind of explanation each index provides at the most basic level.
The results are summarised in \Cref{tab:axiomatic-compliance-summary-classical,tab:axiomatic-compliance-summary-non-classical}.

%%%%%%%%%%%%%%%%%%%%%%%%%%%%%%%%
\subsubsection*{Counterfactual}
\label{section:single-player-counterfactual}
%%%%%%%%%%%%%%%%%%%%%%%%%%%%%%%%
We start our analysis with the counterfactual power index $\counterfactempty$ (\Cref{powerindex:counterfactual}).
Among the non-classical principles, the counterfactual power index satisfies counterfactuality and quantitative counterfactuality.
Firstly, we note that it is enough to show that quantitative counterfactuality holds, because it implies counterfactuality.
\begin{restatable}{proposition}{quanimpliesimple}
\label{proposition:quan-implies-simple}
    If a power index $\powerindexempty$ satisfies quantitative counterfactuality, then it satisfies counterfactuality.
\end{restatable}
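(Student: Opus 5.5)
The plan is a direct substitution argument. Assume $\powerindexempty$ satisfies quantitative counterfactuality (\Cref{principle:quantitative-counterfactuality}), and fix an arbitrary game $\game$ and an arbitrary player $i \in N$. By assumption,
\[
    \powerindex{i}{v} = v(N) - v(N \setminus \{i\}).
\]
Every condition in \Cref{principle:counterfactuality} is phrased in terms of the sign of $\powerindex{i}{v}$. So the idea is to move that sign across this equality to the difference $v(N) - v(N\setminus\{i\})$.

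The proof then checks the three bullets of \Cref{principle:counterfactuality} one at a time.
\begin{itemize}
    \item If $\powerindex{i}{v} < 0$, the displayed equality gives $v(N) - v(N\setminus\{i\}) < 0$, that is, $v(N) < v(N\setminus\{i\})$.
    \item If $\powerindex{i}{v} = 0$, the difference is zero, so $v(N) = v(N\setminus\{i\})$.
    \item If $\powerindex{i}{v} > 0$, the difference is positive, so $v(N) > v(N\setminus\{i\})$.
\end{itemize}
Each case uses only that subtracting $v(N\setminus\{i\})$ preserves order on $\mathbb{R}$. Since $i$ and $v$ were arbitrary, all three implications hold for every player and every game, which is exactly counterfactuality.

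There is no real obstacle. The one point to be careful about is quantifiers: quantitative counterfactuality must be applied to the same game and player for which the counterfactuality conditions are checked. Both principles are stated pointwise in $i$ for the game under consideration, so this is immediate. As a side remark, this also justifies checking only quantitative counterfactuality for the counterfactual index $\counterfactempty$, since it satisfies that principle by \Cref{powerindex:counterfactual}.
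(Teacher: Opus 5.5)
Your proof is correct and follows the paper's argument: substitute the identity $\mu_i(v) = v(N) - v(N\setminus\{i\})$ from quantitative counterfactuality, then carry the sign of $\mu_i(v)$ over to the difference in each of the three cases.
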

In the light of \Cref{proposition:quan-implies-simple}, it suffices to show that the counterfactual index satisfies quantitative counterfactuality, which requires a player's payout to equal the difference between the value of the grand coalition and the value obtained by removing that player.
This is exactly how the counterfactual index is defined (\Cref{powerindex:counterfactual}).
In that sense, the result is mostly a sanity check since the principle is tailored to capture precisely the intuition built into the index.
\begin{restatable}{theorem}{counterfactualnonstandardpositive}
\label{theorem:counterfactual-non-standard-positive}
    The single-player counterfactual power index $\counterfactempty$ satisfies the quantitative counterfactuality principle, and hence the counterfactuality principle. 
\end{restatable}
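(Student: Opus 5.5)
The argument is a direct unfolding of definitions followed by an appeal to \Cref{proposition:quan-implies-simple}. I will fix an arbitrary game $\game$ and an arbitrary player $i \in N$. By \Cref{powerindex:counterfactual}, the payout is $\counterfact{i}{v} = v(N) - v(N \setminus \{i\})$. This is literally the equality demanded by \Cref{principle:quantitative-counterfactuality} with $\powerindexempty = \counterfactempty$. Since $i$ and $v$ were arbitrary, the quantitative counterfactuality principle holds.

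For the second claim I will invoke \Cref{proposition:quan-implies-simple}, which immediately gives the counterfactuality principle (\Cref{principle:counterfactuality}). To keep the argument self-contained, I may also note why that proposition holds here. Write $d_i := v(N) - v(N \setminus \{i\})$. Each of the three implications in \Cref{principle:counterfactuality} then becomes a tautology, because $\counterfact{i}{v} < 0$, $=0$, or $>0$ is the same statement as $d_i < 0$, $=0$, or $>0$. These in turn are equivalent to $v(N) < v(N\setminus\{i\})$, $v(N) = v(N\setminus\{i\})$, or $v(N) > v(N\setminus\{i\})$, respectively.

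There is no genuine obstacle. The only point needing care is to stress that the equality holds for \emph{every} game and every player, with no monotonicity or other assumption on $v$. This makes the result unconditional, in contrast to the bracketed entries in \Cref{tab:axiomatic-compliance-summary-non-classical}. It also covers the degenerate case $|N| = 1$, where $N \setminus \{i\} = \emptyset$ and $v(\emptyset) = 0$ by convention, so the formula remains well defined.
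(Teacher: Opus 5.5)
Your proof is correct and follows the paper's own argument. Both observe that the definition of $\counterfactempty$ is exactly the equality demanded by quantitative counterfactuality, for every game and every player. Both then invoke \Cref{proposition:quan-implies-simple} to obtain counterfactuality.
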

However, the counterfactual index does not satisfy the success condition.
Intuitively, this is because it only considers what happens when a player is removed from the grand coalition.
As a result, it may fail to detect contributions that arise only in smaller coalitions.
In particular, a game may be non-trivial even though removing any one player from the grand coalition does not change its value.
In such cases, every player receives a payout of $0$, and success fails.
A specific instance of a value function witnessing the failure of success is as follows. 
Consider the value function below, over the feature set $\{x_1, x_2, x_3\}$.
\[  v(S):=
    \begin{cases}
        1, &\text{if $S=\{ x_3\}$ or $S=\{ x_2\}$};
        \\
        0, &\text{otherwise}.
    \end{cases}
\]
As per the counterfactuality index, $\counterfact{{x_1}}{v} = v(\{{x_1,x_2,x_3}\}) - v(\{{x_2,x_3}\}) = 0$, and similarly $\counterfact{{x_2}}{v}= 0 =\counterfact{{x_3}}{v}$.
As success requires at least one of $\counterfact{{x_1}}{v}, \counterfact{{x_2}}{v}, \counterfact{{x_3}}{v}$ to be non-zero, $v$ clearly violates success. 

For classical principles, the counterfactual index satisfies symmetry, null player, and weak anonymity.
Intuitively, this is because the index is defined as a leave-one-out comparison.
If two players affect the game in the same way under this comparison, then the counterfactual index assigns them the same payout, which is exactly what symmetry requires.
Likewise, if removing a player does not change the value, then its payout is $0$, in line with the null player principle.
For weak anonymity, merely renaming players does not change the underlying leave-one-out comparison, leaving the assigned payouts unchanged.
It is worth emphasising that these are conditional principles: whenever the relevant condition applies, the index provides the corresponding guarantee.
This is different from a principle such as efficiency, which imposes a universal constraint on the payouts in every game.
\begin{restatable}{theorem}{counterfactualpositive}
\label{theorem:counterfactual-positive}
    The single-player counterfactual index satisfies symmetry, null player, and weak anonymity.
\end{restatable}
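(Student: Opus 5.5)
The plan is to verify the three principles one at a time, each directly from \Cref{powerindex:counterfactual}, since $\counterfact{i}{v} = v(N) - v(N\setminus\{i\})$ involves only two evaluations of $v$.

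\emph{Null player.} Suppose $i$ satisfies $v(S\cup\{i\}) = v(S)$ for every $S \subseteq N\setminus\{i\}$. We instantiate this with the particular coalition $S = N\setminus\{i\}$, which gives $v(N) = v(N\setminus\{i\})$. Hence $\counterfact{i}{v} = 0$.

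\emph{Symmetry.} Let $i \neq j$ satisfy $v(S\cup\{i\}) = v(S\cup\{j\})$ for all $S \subseteq N\setminus\{i,j\}$. We choose $S = N\setminus\{i,j\}$. Then $S\cup\{i\} = N\setminus\{j\}$ and $S\cup\{j\} = N\setminus\{i\}$, so $v(N\setminus\{j\}) = v(N\setminus\{i\})$. Subtracting both sides from $v(N)$ yields $\counterfact{i}{v} = \counterfact{j}{v}$. The case $i=j$ is trivial.

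\emph{Weak anonymity.} Fix a permutation $\pi$ and the game $(N, v_\pi)$ with $v_\pi(\pi(S)) = v(S)$. Because $\pi$ is a bijection on $N$, we have $\pi(N) = N$ and $\pi(N\setminus\{i\}) = N\setminus\{\pi(i)\}$. This gives
\[
\counterfact{\pi(i)}{v_\pi} = v_\pi(N) - v_\pi(N\setminus\{\pi(i)\}) = v_\pi(\pi(N)) - v_\pi(\pi(N\setminus\{i\})) = v(N) - v(N\setminus\{i\}) = \counterfact{i}{v}.
\]

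None of the three steps is a real obstacle. The only point needing care is the set identity $\pi(N\setminus\{i\}) = N\setminus\{\pi(i)\}$, which follows from the bijectivity of $\pi$. Beyond that, each step just specialises the hypothesis of the principle to the single coalition that the counterfactual index inspects.
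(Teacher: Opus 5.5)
Your proposal is correct and follows essentially the same route as the paper: each principle is obtained by specialising its hypothesis to the single leave-one-out comparison that the counterfactual index inspects ($S = N\setminus\{i,j\}$ for symmetry, $S = N\setminus\{i\}$ for null player, and the identities $\pi(N) = N$, $\pi(N\setminus\{i\}) = N\setminus\{\pi(i)\}$ for weak anonymity). You are slightly more explicit than the paper, since you name the instantiating coalition for symmetry and note the trivial case $i=j$, but the argument is the same.
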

Efficiency, on the other hand, is violated by the counterfactual index.
As a counter-example, consider the initial value function $v$ defined over ${x_1,x_2,x_3}$: 
\[  v(S):=
    \begin{cases}
        1, &\text{if $S=\{{x_1,x_2,x_3}\}$};
        \\
        0, &\text{otherwise}.
    \end{cases}
\]
\Cref{fig:efficiency-failure-counterfactual-index} plots the cumulative payout $c$ for the counterfactual index, as per \Cref{eq:cumulative-payoff}, with $\powerindexempty$ substituted by $\counterfactempty$.
Notice that the cumulative payoff exceeds the $v(\{{x_1,x_2,x_3}\})$-line at $x_3$. 
Efficiency is clearly violated, as it requires the cumulative payoff to coincide with the $v(\{{x_1,x_2,x_3}\})$-line at $x_3$.
For the sake of completeness, we provide the formula by which the cumulative payoff is calculated:
\begin{equation}
\label{eq:cumulative-payoff}
    c_{x_i}(v) = \sum_{j \leq i} \mu_{j}(v).
    % \mu_{x_1}(v) + \mu_{x_2}(v)+ \dots + \mu_{x_{i-1}}(v) + \mu_{x_i}(v), 
\end{equation}

In other words, for a specific feature $x_i$, we sum over all the other features $x_j$ whose sub-indexing is less than $i$.

%%%%%%%%%%%%%%%%%%%%%%%%%%%%%%%%%%%%%%%%
\subsubsection*{Banzhaf}
\label{section:single-player-Banzhaf}
%%%%%%%%%%%%%%%%%%%%%%%%%%%%%%%%%%%%%%%%
The Banzhaf index is closely related to the Shapley value, since both are based on averaging marginal contributions across coalitions.
The main difference is that the Banzhaf index assigns a fixed weight to each coalition, whereas the Shapley value assigns weights based on coalition size.
Intuitively, this makes the Banzhaf index simpler and less ``informed''.
One consequence of this simpler weighting scheme is that the Banzhaf index neither satisfies efficiency nor any of the non-classical principles.
In particular, success can fail, even if only in edge cases; the intuitive reasoning is as follows: the Banzhaf index averages contributions uniformly across all coalition contexts, and positive and negative effects can cancel each other out.
As a result, a non-trivial game may still assign a payout of $0$ to every player.
A specific value function witnessing this failure is given below. Again, the value function is considered over the feature set $\{x_1,x_2,x_3\}$:
\[  v(S):=
    \begin{cases}
        0, &\text{if $S=\{{x_1, x_2, x_3}\}$ or $S=\emptyset$};
        \\
        1, &\text{otherwise}.
    \end{cases}
\]
Adding each player is beneficial in some coalition contexts but harmful in others.
In this example, these two kinds of effects occur equally often: one positive contribution and one negative one, with the remaining marginal contributions equal to zero.
Because the Banzhaf index averages all coalition contexts uniformly, these effects cancel out, and the resulting payout is $\banzhaf{i}{v}=0$, for each $i \in \{{x_1,x_2,x_3}\}$.
This clearly violates success, as it requires at least one of $\banzhaf{i}{v}$ to be non-zero.
The failure of the counterfactuality principle for the Banzhaf index is also witnessed by $v$ defined above.
Since $v(N)-v(\{x_2, x_3\}) = -1$, the principle demands that $\banzhaf{1}{v}<0$, however $\banzhaf{1}{v}=0$ as calculated above.
Hence, the counterfactuality principle is violated.
Consequently, the quantitative counterpart is also violated due to \Cref{proposition:quan-implies-simple}.

For the classical principles, it is well established that the Banzhaf index satisfies all but efficiency; we refer the reader to the proofs provided by \citet{mathematical_properties_Banzhaf_value}.
\begin{theorem}[see Theorem 1 of \cite{mathematical_properties_Banzhaf_value}]
    The Banzhaf index satisfies null player, symmetry, and weak anonymity.
\end{theorem}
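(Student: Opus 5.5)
We verify the three principles directly from \Cref{powerindex:banzhaf}, treating the normalising factor $2^{-(|N|-1)}$ as a constant that depends only on $|N|$ and therefore plays no role in any of the arguments. Throughout, write $\Delta_i(S) := v(S\cup\{i\}) - v(S)$ for $S \subseteq N\setminus\{i\}$, so that $\banzhaf{i}{v} = 2^{-(|N|-1)}\sum_{S\subseteq N\setminus\{i\}}\Delta_i(S)$.

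\textbf{Null player.} If $i$ is a null player, then every term $\Delta_i(S)$ vanishes by hypothesis. The sum is therefore $0$, and hence $\banzhaf{i}{v}=0$.

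\textbf{Weak anonymity.} Fix a permutation $\pi$. The map $S \mapsto \pi(S)$ is a bijection from the subsets of $N\setminus\{i\}$ onto the subsets of $N\setminus\{\pi(i)\}$. For each such $S$ we have $v_\pi(\pi(S)\cup\{\pi(i)\}) = v_\pi(\pi(S\cup\{i\})) = v(S\cup\{i\})$, and likewise $v_\pi(\pi(S)) = v(S)$. The marginal contributions of $\pi(i)$ in $v_\pi$ are thus exactly the marginal contributions of $i$ in $v$, reindexed. Since the normaliser is unchanged, $\banzhaf{\pi(i)}{v_\pi} = \banzhaf{i}{v}$.

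\textbf{Symmetry.} This is the step needing the most care. Suppose $v(S\cup\{i\}) = v(S\cup\{j\})$ for all $S\subseteq N\setminus\{i,j\}$, and split the sum for $\banzhaf{i}{v}$ according to whether $j\in S$.
\begin{itemize}
\item For $S = T$ with $T\subseteq N\setminus\{i,j\}$, the term is $v(T\cup\{i\}) - v(T)$. It equals $v(T\cup\{j\}) - v(T)$, which is the term for $j$ at the same $T$.
\item For $S = T\cup\{j\}$, the term is $v(T\cup\{i,j\}) - v(T\cup\{j\})$. Applying the hypothesis to $v(T\cup\{j\}) = v(T\cup\{i\})$ turns it into $v(T\cup\{i,j\}) - v(T\cup\{i\})$, which is $j$'s term at $T\cup\{i\}$.
\end{itemize}
The two sums therefore agree term by term under the bijection $T \leftrightarrow T$ and $T\cup\{j\}\leftrightarrow T\cup\{i\}$. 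The normaliser is the same for $i$ and $j$, so $\banzhaf{i}{v}=\banzhaf{j}{v}$. The main obstacle is only bookkeeping: the hypothesis must be applied to the subtracted term in the second case rather than to the leading term $v(T\cup\{i,j\})$, which is already common to both players.

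The same three arguments reappear, with $\Delta_i$ reweighted by Shapley coefficients, in the Shapley and set-based analyses.
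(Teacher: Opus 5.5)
Your proof is correct. It differs from the paper mainly in that the paper gives no argument of its own here: it defers to Theorem~1 of Dubey and Shapley. That result is an axiomatic characterisation of the Banzhaf index framed for simple (monotone, $\{0,1\}$-valued) games, using a dummy axiom and symmetry in the sense of invariance under relabelling.

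Your direct verification from the defining formula gives something the citation does not literally provide. It establishes null player, equal-treatment symmetry and weak anonymity for arbitrary real-valued, possibly non-monotone games, which is the class over which the paper's principles quantify. It also makes clear that the only feature of the index being used is that its weighting ignores player identities. The citation supplies context your argument lacks: these properties, together with further axioms, single out the Banzhaf index uniquely.

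One economy is available: equal-treatment symmetry follows from weak anonymity, so your case split is not needed. Suppose $i$ and $j$ satisfy the symmetry hypothesis and $\pi$ is the transposition of $i$ and $j$. Then $v_\pi = v$, because $\pi$ fixes every coalition containing both or neither of $i,j$, while $v_\pi(T\cup\{j\}) = v(T\cup\{i\}) = v(T\cup\{j\})$ for $T\subseteq N\setminus\{i,j\}$, and likewise with $i$ and $j$ exchanged. Weak anonymity then gives $\beta_j(v) = \beta_{\pi(i)}(v_\pi) = \beta_i(v)$. This is how equal treatment is usually recovered from the relabelling-invariance axiom in the cited work.
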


The Banzhaf index violates efficiency because its fixed weights do not generally ensure that the sum of all players' scores equals the value of the grand coalition.
Consequently, the equation corresponding to efficiency (Principle~\ref{principle:efficiency}) only depends on the size and value of the largest coalition, neglecting the subtleties of the value function.
We demonstrate this observation through a concrete example. Consider the following value function over the feature set $\{x_1, x_2, x_3\}$: 
\[  v(S):=
    \begin{cases}
        0, &\text{if $S=\{{x_1, x_2, x_3}\},$ or $S=\{{ x_1}\}$ or $S=\emptyset$};
        \\
        1, &\text{otherwise}.
    \end{cases}
\]
Recall that the efficiency condition states $\sum_{1\leq i\leq 3}\banzhaf{x_i}{v}=v(\{x_1, x_2, x_3\})$. 
Some simple manipulations can transform the previous equation as follows: 
$$
\sum_{1\leq i\leq 3}2^{|\{x_1, x_2, x_3\}|-1}\cdot \banzhaf{x_i}{v}=2^{|\{x_1, x_2, x_3\}|-1}\cdot v(\{x_1, x_2, x_3\}).
$$

The right-hand side of the above only depends on the size of the coalition $\{x_1, x_2, x_3\}$, and the value generated by it as per $v$. 
On the left-hand side, the value of $2^{|\{x_1,x_2, x_3\}|-1}\cdot \banzhaf{x_1}{v}$ is calculated as:  
\begin{align*}
    2^{|\{x_1, x_2, x_3\}|-1}\cdot\banzhaf{x_1}{v}:&=  v(\{x_1, x_2, x_3\})-v(\{x_2, x_3\})\\
    &+v(\{x_1, x_3\})-v(\{x_3\})+v(\{x_1, x_2\})-v(\{x_2\}) +v(\{x_1\})-v(\emptyset) = -1.
\end{align*}

By similar calculations, $2^{|\{x_1, x_2, x_3\}|-1} \cdot\banzhaf{x_2}{v} = 1 = 2^{|\{x_1, x_2, x_3\}|-1} \cdot \banzhaf{x_3}{v}$.
Together, we see that the left-hand side amounts to $\sum_{1\leq i\leq 3}2^{|\{x_1, x_2, x_3\}|-1}\cdot \banzhaf{x_i}{v}=1$, whereas the right-hand side evaluates to $2^{|\{x_1, x_2, x_3\}|-1} \cdot v(\{x_1, x_2, x_3\})=0$.
This violates efficiency.
We provide a plot of the cumulative payoff w.r.t. $v$, using \Cref{eq:cumulative-payoff}, against $v(\{x_1,x_2,x_3\})$ in \Cref{fig:efficiency-failure-banzhaf}.

\begin{figure}
    \centering
    \begin{subfigure}[b]{0.35 \linewidth}
        \includegraphics[width=\linewidth]{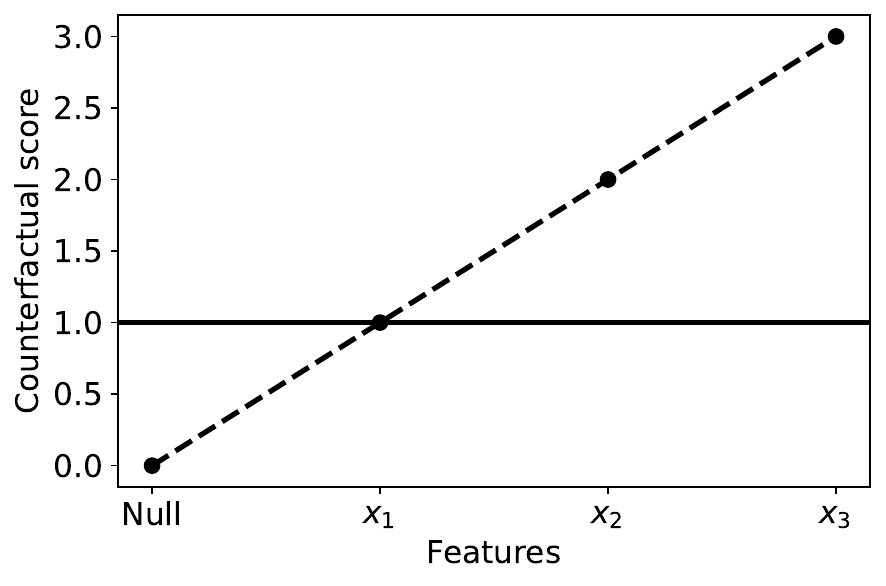}
        \caption{Efficiency failure for counterfactual index.}
        \label{fig:efficiency-failure-counterfactual-index}
    \end{subfigure}
    \hspace{0.03\linewidth}
    \begin{subfigure}[b]{0.35 \linewidth}
        \includegraphics[width=\linewidth]{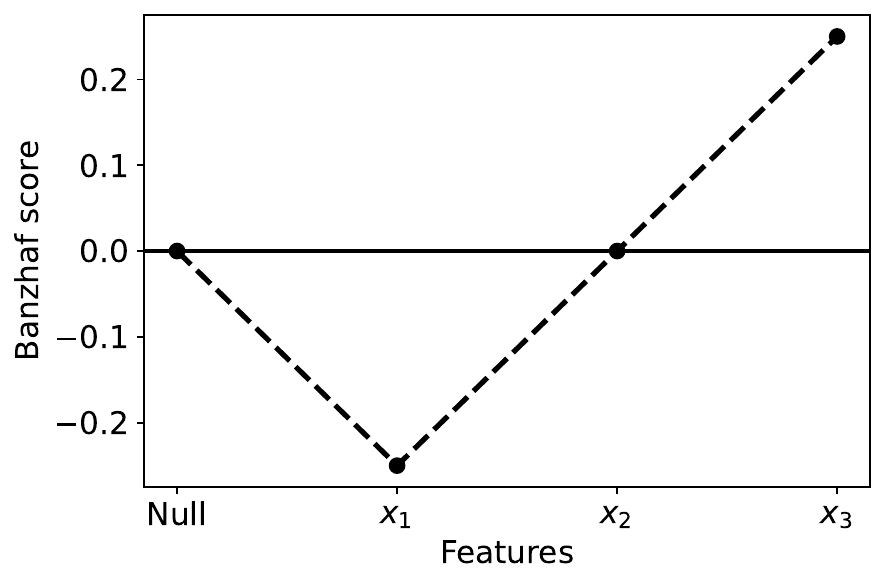}
        \caption{Efficiency failure for Banzhaf index.}
        \label{fig:efficiency-failure-banzhaf}
    \end{subfigure}
    \caption{The counterfactual and Banzhaf indices do not satisfy efficiency.
    Depending on the definition $v$, the solid line denotes the constant value $v(\{ x_1,x_2,x_3\})$; the dotted line is the cumulative power index score calculated via \Cref{eq:cumulative-payoff}. 
    In particular, the arbitrary index  $\powerindexempty$ is replaced by $\counterfactempty$ and $\banzhafempty$, respectively.}
    \label{fig:efficiency-failure-counterfactual}
\end{figure}
%
%%%%%%%%%%%%%%%%%%%%%%%%%%%%%%%%%%%%%%
\subsubsection*{Shapley}
\label{section:single-player-shapley}
%%%%%%%%%%%%%%%%%%%%%%%%%%%%%%%%%%%%%%

We turn our attention to the well-known Shapley power index. 
The classical principles for the Shapley index have been thoroughly analysed. 
In fact, together with the so-called additivity principle, they provide a principle-based characterisation of the Shapley index.
For proofs, we refer the reader to \citet{shapley1953}.
\begin{theorem}[see Section 3 in \cite{shapley1953}]
\label{theorem:shapley-standard-principles}
    The Shapley value can be characterised by the principles of efficiency, symmetry, null player, additivity and weak anonymity.
    Accordingly, the Shapley index satisfies the classical principles.
\end{theorem}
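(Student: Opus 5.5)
We verify each of the four classical principles directly from \Cref{powerindex:shapley}, using the permutation form for efficiency and the subset form for the rest. Characterisation (uniqueness under additivity) is Shapley's classical argument, which we would only sketch. Its key idea is that the unanimity games $u_T(S)=[T\subseteq S]$ form a basis of the space of games on $N$. The other principles then force any index on $u_T$ to be $1/|T|$ for $i\in T$ and $0$ otherwise, and additivity (together with homogeneity, obtained from additivity on rational multiples and linearity of the formula) propagates this to every game. The paper only needs the satisfaction direction, so that is where the plan concentrates.

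\emph{Efficiency.} We use the ordering representation. For a fixed ordering $R=(i_1,\dots,i_n)$, we have $P^R_{i_k}=\{i_1,\dots,i_{k-1}\}$. Hence $\sum_{i\in N}\bigl(v(P^R_i\cup\{i\})-v(P^R_i)\bigr)$ telescopes to $v(N)-v(\emptyset)=v(N)$. Averaging over the $|N|!$ orderings gives $\sum_i \shap{i}{v}=v(N)$.

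\emph{Null player.} Every marginal term $v(S\cup\{i\})-v(S)$ vanishes by hypothesis, so the sum is $0$.

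\emph{Symmetry.} We build a weight-preserving bijection between the coalitions $S\subseteq N\setminus\{i\}$ and $S'\subseteq N\setminus\{j\}$. If $j\notin S$, map $S$ to itself; the marginal terms agree because $v(S\cup\{i\})=v(S\cup\{j\})$. If $j\in S$, write $S=T\cup\{j\}$ with $T\subseteq N\setminus\{i,j\}$ and map it to $T\cup\{i\}$. Then $v(T\cup\{i,j\})$ appears in both marginal terms, and the subtracted parts $v(T\cup\{j\})$ and $v(T\cup\{i\})$ coincide by hypothesis. Both cases preserve $|S|$, so the weights match and $\shap{i}{v}=\shap{j}{v}$.

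\emph{Weak anonymity.} We reindex the sum for $\shap{\pi(i)}{v_\pi}$ by $S=\pi(S_0)$ with $S_0\subseteq N\setminus\{i\}$. Because $\pi$ is a bijection, $|S|=|S_0|$, and $v_\pi(\pi(S_0)\cup\{\pi(i)\})=v(S_0\cup\{i\})$. Term by term the sum therefore equals $\shap{i}{v}$.

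The only step needing care is the bijection in the symmetry argument, specifically the case $j\in S$, where one must check that the shared term $v(T\cup\{i,j\})$ cancels correctly. Everything else is routine bookkeeping.
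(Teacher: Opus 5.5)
Your verification is correct, but it takes a different route from the paper. The paper gives no argument of its own and simply defers to Section~3 of \cite{shapley1953}, where satisfaction of the axioms is the existence half of a uniqueness theorem.

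You instead check the four classical principles directly from the explicit formula:
\begin{itemize}
\item efficiency, by telescoping along each ordering;
\item null player, termwise;
\item symmetry, via the size-preserving bijection $T\cup\{j\}\mapsto T\cup\{i\}$ (your treatment of the shared term $v(T\cup\{i,j\})$ is right);
\item weak anonymity, by reindexing $S=\pi(S_0)$.
\end{itemize}

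The citation buys the stronger uniqueness claim in the first sentence of the statement. Your route buys a self-contained proof of exactly the part the paper later relies on. It also works in the paper's own formulation: arbitrary games on a fixed $N$, symmetry as equal treatment, and weak anonymity via $v_\pi(\hat\pi(S)):=v(S)$. Shapley's setting differs: he works with superadditive games on a universe of players, his ``symmetry'' axiom is the paper's weak anonymity, and efficiency and null player are bundled into his carrier axiom. So the citation silently requires a translation step that your direct check avoids.

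One correction to your uniqueness sketch: homogeneity is not needed, and your justification for it does not work. Additivity yields homogeneity only for rational multiples. Linearity of the Shapley formula says nothing about the arbitrary index whose uniqueness is at stake. Instead, apply efficiency, symmetry and null player directly to $c\,u_T$ for every real $c$ (these are games too). This gives payout $c/|T|$ on $T$ and $0$ elsewhere. Then, writing $v=\sum_{T\neq\emptyset}c_T u_T$, additivity alone finishes the argument.
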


The Shapley index does not satisfy any of the non-classical principles.
These failures are due to the general nature of value functions, we only stipulate that the value function assigns $0$ to the empty feature set, whereas the values corresponding to other coalitions are arbitrary. 
This might produce non-monotonic effects, thereby cancelling the effects across various coalitions while assigning Shapley scores. 
As a specific example, consider the value function over the feature set $\{x_1, x_2, x_3\}$.
\[v(S):=
  \begin{cases}
     1, &\text{if $S=\{ x_1\}$ or $S=\{ x_2, x_3\}$};\\
     0, &\text{otherwise}.
  \end{cases}
\]

Calculation of $\shap{x_1}{v}$ yields the score of $0$, which is shown explicitly below.
Notice that the positive contribution obtained from the empty coalition is balanced by the negative contribution obtained from the coalition of size $2$, thereby amounting to $\shap{x_1}{v} = 0$.
And by similar calculations, $\shap{x_2}{v}=0=\shap{x_3}{v}$.
Therefore, the Shapley index does not satisfy success.
\begin{align*}
    \shap{x_1}{v}:= &\frac{1}{3}\cdot [v(\{x_1,x_2,x_3\})-v(\{x_2,x_3\})]+
    \frac{1}{6}\cdot\left[v(\{x_1,x_3\})-v(\{x_3\})+v(\{x_1,x_2\})-v(\{x_2\})\right]+\\
    &\frac{1}{3}\cdot [v(\{x_1\})-v(\emptyset)]=0.
\end{align*}

Additionally, as $\shap{1}{v}=0$, the principle of counterfactuality requires that $v(\{{ x_1, x_2, x_3}\})=v(\{ x_2,x_3\})$, however $v(\{ x_1, x_2, x_3\})-v(\{{x_2,x_3}\})=-1$.
Hence, the Shapley index does not satisfy the counterfactual principle, and by \Cref{proposition:quan-implies-simple}, the quantitative counterpart.

%%%%%%%%%%%%%%%%%%%%%%%%%%%%%%%%%%%%%%%%%%%%
\subsubsection*{Owen values}
%%%%%%%%%%%%%%%%%%%%%%%%%%%%%%%%%%%%%%%%%%%%
We conclude the single-player analysis with a discussion of Owen values.
In its simplest form, when players form only singleton coalitions, Owen values coincide with the Shapley values.
\Cref{theorem:simplest-case-owen} below formalises this, establishing Owen values as a generalisation of Shapley values.
\begin{restatable}{theorem}{simplestcaseowen}
\label{theorem:simplest-case-owen}
    Let $\game$ be a game with players only forming singleton partitions, that is, $M_N = \{S^1, \ldots, S^m\}$ with  $|S^j| = 1$ for every $1 \leq j \leq m$.
    It follows that $\owen{i}{v}=\shap{i}{v}$.
\end{restatable}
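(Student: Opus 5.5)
The plan is to substitute the singleton structure directly into \Cref{powerindex:owen} and show that the double sum collapses to the sum in \Cref{powerindex:shapley}. Fix $i \in N$. Since every block is a singleton, the block containing $i$ is $S^k = \{i\}$, so $|S^k| = 1$ and $S^k \setminus \{i\} = \emptyset$. The inner sum over $K \subseteq S^k \setminus \{i\}$ therefore has exactly one term, $K = \emptyset$. Its within-block weight factor is
\[
\frac{|K|!\,(|S^k|-|K|-1)!}{|S^k|!} = \frac{0!\,0!}{1!} = 1 .
\]
Consequently,
\[
\owen{i}{v} = \sum_{T \subseteq M_N \setminus \{S^k\}} \frac{|T|!\,(|M_N|-|T|-1)!}{|M_N|!}\bigl(v(Q \cup \{i\}) - v(Q)\bigr),
\]
where $Q = \bigcup_{S^j \in T} S^j$.

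The second step is to re-index this sum by coalitions of players. There are two facts to use.
\begin{itemize}
\item Because the blocks are singletons, $m = |M_N| = |N|$.
\item The map $T \mapsto Q$ is a bijection from subsets of $M_N \setminus \{\{i\}\}$ onto subsets $S \subseteq N \setminus \{i\}$. Its inverse is $S \mapsto \{\{j\} : j \in S\}$.
\end{itemize}
Each singleton block contributes exactly one player, so this bijection preserves cardinality: $|T| = |Q|$. Substituting $S = Q$, $|T| = |S|$ and $|M_N| = |N|$ turns each weight into $|S|!\,(|N|-|S|-1)!/|N|!$ and each marginal term into $v(S\cup\{i\}) - v(S)$. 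The result is exactly the first expression for $\shap{i}{v}$ in \Cref{powerindex:shapley}.

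No step here is genuinely hard. The only point needing care is the re-indexing: one must verify that the correspondence between block collections $T$ and player coalitions $Q$ is one-to-one and preserves size. Both properties fail for non-singleton structures, which is precisely why the Owen value differs from the Shapley value in general. I would state the bijection explicitly, including its inverse, before substituting.
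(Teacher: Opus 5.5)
Your proposal is correct and matches the paper's proof. Both collapse the inner sum to the single term $K=\emptyset$ with unit within-block weight. Both then use the cardinality-preserving bijection $T \mapsto Q$ from subsets of $M_N \setminus \{S^k\}$ onto subsets of $N \setminus \{i\}$, together with $|M_N| = |N|$, to identify the remaining sum with the Shapley formula.
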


Because the Owen and Shapley indices coincide under singleton partitions, \Cref{theorem:simplest-case-owen} lets us transfer negative results from the Shapley value to the Owen value.
Any principle violated by the Shapley value is also violated by the Owen value.
This does not hold in the other direction, since the Owen value may violate a principle that the Shapley value satisfies when the a priori block structure is non-trivial.
In particular, Owen values fail to satisfy any of the non-classical principles.
\begin{restatable}{proposition}{owennegative}
     None of the non-classical principles is satisfied by the Owen power index.
\end{restatable}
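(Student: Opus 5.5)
The plan is to transfer the Shapley counterexample through \Cref{theorem:simplest-case-owen}. The non-classical principles are universally quantified over games (and, for the Owen value, over a priori coalition structures). So one game with one admissible structure that breaks each principle is enough.

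\textbf{Choice of game and structure.} I will take the game already used for the Shapley value over $\{x_1,x_2,x_3\}$, namely $v(S)=1$ if $S=\{x_1\}$ or $S=\{x_2,x_3\}$ and $v(S)=0$ otherwise. I pair it with the singleton structure $M_N=\{\{x_1\},\{x_2\},\{x_3\}\}$. By \Cref{theorem:simplest-case-owen}, $\owen{x_i}{v}=\shap{x_i}{v}$ for every $i$. The computation in the Shapley subsection then gives $\owen{x_i}{v}=0$ for all three players.

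\textbf{The three principles.} Success fails because $v$ is not the zero function, yet every Owen payout vanishes. For counterfactuality, $\owen{x_1}{v}=0$ would force $v(N)=v(\{x_2,x_3\})$. In fact $v(N)-v(\{x_2,x_3\})=-1$, so the second clause of Principle~\ref{principle:counterfactuality} is violated. Quantitative counterfactuality then fails by the contrapositive of \Cref{proposition:quan-implies-simple}. One can also see this directly, since $0\neq -1$.

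\textbf{Main obstacle.} The only delicate point is whether a counterexample using the trivial singleton structure is legitimate. The principles are stated per index, and the Owen value is parametrised by $M_N$. I read failure as the existence of some game together with some structure for which the principle breaks, which is enough. If a non-trivial structure were demanded, I would add a remark that still uses a single game. For instance, take the grand-block structure $M_N=\{N\}$: the Owen formula then reduces to the within-block Shapley sum over $K\subseteq N\setminus\{i\}$ with $T=\emptyset$, so the Owen value again equals the Shapley value. That case is a one-line check of the weights.
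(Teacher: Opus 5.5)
Your proposal is correct and is essentially the paper's argument. The paper also reduces to the Shapley case via \Cref{theorem:simplest-case-owen} under the singleton structure, phrased as a contradiction, whereas you instantiate the Shapley counterexample directly and check each principle; your side remark that $M_N=\{N\}$ also collapses Owen to Shapley is correct (the appendix notes the same) but not needed.
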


Likewise, Owen values satisfy all of the classical principles except symmetry.
The formal proofs of these results follow arguments similar to those in the proofs of the Shapley values and we refer the reader to the proof provided by \citet{owen1977}.
\begin{theorem}[see Section 4 in \cite{owen1977}]\label{theorem:owen-classical-principles}
    The principles of efficiency, null player and weak anonymity are satisfied by the Owen power index.
\end{theorem}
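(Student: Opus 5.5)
Theorem~\ref{theorem:owen-classical-principles} asks for three separate properties, and I will handle each one directly from \Cref{powerindex:owen}.
Throughout, I use the standard two-stage random-order reading of the Owen weights.
First an ordering of the blocks of $M_N$ is drawn uniformly at random.
Then, independently, an ordering of the players inside each block is drawn uniformly at random.
Concatenating these gives an ordering $R$ of $N$ that is consistent with $M_N$.
Player $i$'s predecessors in $R$ are then $Q\cup K$: $Q$ is the union of the blocks preceding $S^k$, and $K$ is the set of members of $S^k$ preceding $i$.
The probability of a given pair $(T,K)$ is exactly the weight in \Cref{powerindex:owen}.
Hence
\[
\owen{i}{v}=\frac{1}{|\mathcal{R}_{M}|}\sum_{R\in\mathcal{R}_{M}}\bigl(v(P_i^R\cup\{i\})-v(P_i^R)\bigr),
\]
where $\mathcal{R}_{M}$ is the set of orderings consistent with $M_N$.
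Establishing this representation is the first step: I will count consistent orderings, which gives $|\mathcal{R}_M|=m!\prod_j|S^j|!$, and then check the multiplicity of each $(T,K)$.

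\textbf{Efficiency.} For a fixed consistent ordering $R$, the marginal contributions of all players telescope to $v(N)-v(\emptyset)=v(N)$.
Summing $\owen{i}{v}$ over $i\in N$ and swapping the two sums therefore gives the average of $v(N)$ over $\mathcal{R}_M$, which is $v(N)$.

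\textbf{Null player.} If $i$ is null, then every term $v(Q\cup K\cup\{i\})-v(Q\cup K)$ in the defining double sum vanishes, because $Q\cup K\subseteq N\setminus\{i\}$.
Hence $\owen{i}{v}=0$ directly, and the representation above is not even needed.

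\textbf{Weak anonymity.} Let $\pi$ be a permutation, and let the renamed structure be $\pi(M_N)$, as stipulated in the paper.
Player $\pi(i)$ lies in block $\pi(S^k)$.
The map $(T,K)\mapsto(\pi(T),\pi(K))$ is a bijection between the index sets for $i$ under $M_N$ and for $\pi(i)$ under $\pi(M_N)$.
It preserves $|T|$, $|K|$, $|S^k|$ and $m$, so the weights are unchanged.
The marginal term also matches:
\[
v_\pi(\pi(Q)\cup\pi(K)\cup\{\pi(i)\})-v_\pi(\pi(Q)\cup\pi(K))=v(Q\cup K\cup\{i\})-v(Q\cup K),
\]
by the definition $v_\pi(\pi(S))=v(S)$.
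Summing term by term gives $\owen{\pi(i)}{v_\pi}=\owen{i}{v}$.

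The only step needing care is the probabilistic reinterpretation used for efficiency, namely checking that the product of the two Shapley-type weights is exactly the probability of each $(T,K)$ configuration.
I expect this to be the main obstacle.
I will do it by noting that the weights factor:
\begin{itemize}
\item $\frac{|T|!(m-|T|-1)!}{m!}$ is the probability that exactly the blocks in $T$ precede $S^k$ in a uniform block order;
\item $\frac{|K|!(|S^k|-|K|-1)!}{|S^k|!}$ is the probability that exactly $K$ precedes $i$ in a uniform order of $S^k$.
\end{itemize}
The two draws are independent, so the probability of the pair is the product of these factors.
As an alternative route to efficiency, one can sum over $i\in S^k$ first, so that the inner sums telescope within each block.
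This reduces the claim to Shapley efficiency for the quotient game on $M_N$, which is available from \Cref{theorem:shapley-standard-principles}.
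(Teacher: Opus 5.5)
Your proof is correct. The paper gives no argument of its own here: it cites Section~4 of Owen's paper and only remarks that the proofs are Shapley-like, so your proposal is a self-contained version of what the paper delegates.

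Your random-order representation is right. The uniform distribution on block-contiguous orderings is the product of a uniform block order and independent uniform within-block orders, and this yields exactly the weight in \Cref{powerindex:owen}. All three verifications go through: telescoping along each consistent ordering for efficiency, termwise vanishing for a null player, and the relabelling bijection $(T,K)\mapsto(\pi(T),\pi(K))$ for weak anonymity.

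Compared with the citation, your route has two advantages. First, Owen's treatment is axiomatic: properties of this kind enter as axioms and the formula is derived from them. Relying on it therefore requires matching his value and his permutation convention to \Cref{powerindex:owen} and \Cref{principle:weak-anonymity}. Your check works directly from the formula and makes explicit that weak anonymity genuinely needs $\pi$ to rename $M_N$ as well. For instance, hold $M_N=\{\{x_1,x_2\},\{x_3\}\}$ fixed and take the game from the proof of \Cref{theorem:owen-symmetry-principle-negative}. That game is invariant under the transposition of $x_2$ and $x_3$, yet $x_2$ receives $1/2$ and $x_3$ receives $0$.

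Second, your alternative efficiency route is exactly the computation the paper already carries out in \Cref{appendix:derivation-owen-cardinality}. Summing over $i\in S^k$ with $R$ fixed leaves $\sum_{t=1}^{|S^k|} m_{k,R}(t)=v(Q_R\cup S^k)-v(Q_R)$. Shapley efficiency for the quotient game $R\mapsto v(Q_R)$ then finishes the argument. This would let the paper establish \Cref{theorem:cardinality-owen-value-efficiency} without depending on the external citation.
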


Symmetry fails because the a priori structure restricts which coalitions can form, so two players whom $v$ treats identically may still be evaluated in different coalition contexts when they belong to different blocks.
\begin{restatable}{theorem}{owenSymmetryNegativeProof}
\label{theorem:owen-symmetry-principle-negative}
    The Owen value violates the symmetry principle.
\end{restatable}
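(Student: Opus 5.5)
Since the statement is negative, the plan is to exhibit a single counterexample. We need a game $\game$ with an a priori coalition structure $M_N$ and two players $i,j$ that $v$ treats identically, in the sense of Principle~\ref{principle:symmetry}, but with $\owen{i}{v}\neq\owen{j}{v}$. The mechanism to exploit is the one described just before the statement. Two symmetric players placed in blocks of different sizes are evaluated in different coalition contexts, so the weighting asymmetry should separate them. We will use the smallest nontrivial configuration: three players $N=\{x_1,x_2,x_3\}$ with structure $M_N=\{\{x_1,x_2\},\{x_3\}\}$, and a fully symmetric game, namely the unanimity game $v(S)=1$ if $S=N$ and $0$ otherwise. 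Since $v$ depends only on $|S|$, every pair of players satisfies the hypothesis of symmetry, in particular $x_1$ and $x_3$.

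The key steps are two direct evaluations of the formula in \Cref{powerindex:owen}. For $x_3$, the own block is $S^k=\{x_3\}$, so $K=\emptyset$ is forced, and $T$ ranges over $\emptyset$ and $\{\{x_1,x_2\}\}$ with $|M_N|=2$. Each choice of $T$ gets weight $\tfrac12$. The marginal contribution is nonzero only when $Q=\{x_1,x_2\}$, where it equals $1$. Hence $\owen{x_3}{v}=\tfrac12$.

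For $x_1$, the own block is $\{x_1,x_2\}$. Both $T$ and $K$ range over two options, each with outer weight $\tfrac12$ and inner weight $\tfrac12$. Only $T=\{\{x_3\}\}$ together with $K=\{x_2\}$ gives a nonzero contribution, namely $1$. Hence $\owen{x_1}{v}=\tfrac14$. As a cross-check, the same computation gives $\owen{x_2}{v}=\tfrac14$, and the three values sum to $1=v(N)$, consistent with efficiency (\Cref{theorem:owen-classical-principles}).

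Since $\tfrac14\neq\tfrac12$ while $x_1,x_3$ are symmetric in $v$, symmetry fails. For contrast, we can remark that the Shapley value gives $\tfrac13$ to each player, so the asymmetry comes entirely from the block structure; by \Cref{theorem:simplest-case-owen} it vanishes for singleton blocks. The only real obstacle is bookkeeping the weights in the double sum correctly, in particular remembering that the outer Shapley weights run over blocks, not players. The choice of a unanimity game keeps this to a single nonzero term per player.
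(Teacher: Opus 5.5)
Your proof is correct and follows essentially the paper's strategy: the same three players with a priori structure $\{\{x_1,x_2\},\{x_3\}\}$, and a game that depends only on $|S|$, so that $x_1$ and $x_3$ are symmetric. The paper instead uses $v(S)=1$ iff $|S|\geq 2$, where $x_3$ gets payout $0$, and then invokes efficiency of the Owen value to conclude that $x_1$ or $x_2$ must be non-zero; your unanimity game lets you compute both payouts ($\tfrac12$ versus $\tfrac14$) directly, so your argument needs efficiency only as a sanity check.
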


The only constraint on the valuation function is that the empty coalition is always assigned the value $0$. 
These restrictions are not enough to guarantee non-classical principles, which fail to hold Banzhaf, Shapley and Owen values.
To try and salvage some of the non-classical principles, we briefly discuss the class of \emph{monotonic} and \emph{strictly monotonic} games.
A game $(N,v)$ is said to be monotonic, or strictly monotonic, iff the value function satisfies
\begin{align*}
    A\subseteq B \Rightarrow v(A)\leq v(B),  &\tag{monotone}
    \\
    A\subsetneq B \Rightarrow v(A)< v(B). &\tag{strictly-monotone}
\end{align*}
Given such constraints, the non-classical principles can be salvaged, as shown in \Cref{theorem:non-stanadard-principles-of-monotone-games}.
\begin{restatable}{theorem}{nonstanadardprinciplesofmonotonegames}
\label{theorem:non-stanadard-principles-of-monotone-games}
    If $\game$ is a game, then
    \begin{enumerate}
        \item[i.]
        If $\game$ is monotone, then the Banzhaf value $\banzhafempty$, the Shapley value $\shapempty$, and the Owen value $\owenempty$ satisfy success.
        \item[ii.]
        If $\game$ is strictly monotone, then Banzhaf value $\banzhafempty$, the Shapley value $\shapempty$ and the Owen value $\owenempty$ satisfy counterfactuality.
    \end{enumerate} 
\end{restatable}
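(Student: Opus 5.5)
The plan is to write each of the three indices as a nonnegative weighted sum of marginal contributions $v(S\cup\{i\})-v(S)$. For the Shapley and Banzhaf values the sum runs over all $S \subseteq N\setminus\{i\}$. For the Owen value it runs over the admissible coalitions $Q\cup K$ of \Cref{powerindex:owen}. In every case each weight is strictly positive: $\frac{|S|!(|N|-|S|-1)!}{|N|!}$, $2^{-(|N|-1)}$, or the product of Shapley-type factors for Owen. Both parts then follow from sign information on the marginal contributions that monotonicity or strict monotonicity supplies.

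\textbf{Part (i).} First I show that monotonicity with $v(\emptyset)=0$ gives $v(S)\ge 0$ for all $S$. So if $v$ is not the zero function there is some $S$ with $v(S)>0$, and monotonicity gives $v(N)\ge v(S)>0$. Monotonicity also makes every marginal contribution nonnegative, so every payout is $\ge 0$. To find one player with a strictly positive payout, I fix an ordering of $N$ that lists the players block by block according to $M_N$ (the block structure is only needed for Owen). Along the chain of prefixes $\emptyset=C_0\subset C_1\subset\dots\subset C_n=N$ the marginal contributions telescope to $v(N)-v(\emptyset)=v(N)>0$. Hence some step $C_{j-1}\to C_j=C_{j-1}\cup\{i\}$ has a strictly positive marginal contribution. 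For Shapley and Banzhaf, $C_{j-1}$ is a coalition not containing $i$, and it carries positive weight. For Owen, the block-by-block ordering makes $C_{j-1}$ a union of whole blocks $Q$ together with a subset $K\subseteq S^k\setminus\{i\}$ of $i$'s own block. It is therefore one of the admissible terms and also carries positive weight. A sum of nonnegative terms containing one strictly positive term is positive, so $\mu_i(v)>0$ and success holds. For Shapley and Owen one could instead cite efficiency (\Cref{theorem:shapley-standard-principles}, \Cref{theorem:owen-classical-principles}), but the chain argument treats all three indices uniformly.

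\textbf{Part (ii).} Strict monotonicity makes every marginal contribution strictly positive. Each index is a nonempty sum with strictly positive weights, since the term $S=\emptyset$, respectively $Q=K=\emptyset$, is always present. Hence $\mu_i(v)>0$ for every $i$. This makes the first two antecedents of \Cref{principle:counterfactuality} false, so those conditions hold vacuously. The third requires $v(N)>v(N\setminus\{i\})$, which is immediate from strict monotonicity because $N\setminus\{i\}\subsetneq N$. This also covers $|N|=1$, where $N\setminus\{i\}=\emptyset$.

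\textbf{Main obstacle.} The only delicate step is the Owen case in (i): checking that the positive marginal step on the telescoping chain is one of the coalitions the Owen formula actually sums over. Building the chain block by block is what secures this, and I would verify it explicitly against the index sets $T$ and $K$ of \Cref{powerindex:owen}.
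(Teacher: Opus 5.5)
Your proposal is correct and follows essentially the same route as the paper. The paper likewise shows $v(N)>0$, extracts a strictly positive marginal contribution from a chain between $\emptyset$ and $N$ (for Owen, first a prefix of whole blocks and then a step inside the next block), and lower-bounds each index by that single positively weighted term; part (ii) is argued identically, and your single block-by-block ordering simply merges the Shapley/Banzhaf and Owen cases into one chain argument.
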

Observe that the counterfactual index does not satisfy success in the case of monotone games. 
However, it is straight-forward to notice that the principle is satisfied in the case of strictly monotone games.

Although success is not satisfied by any of the indices in the case of non-monotonic games, this negative result is still informative.
It highlights that power index-based explanations measure changes in the value function according to a chosen aggregation scheme, but they do not guarantee that some player receives non-zero information in every environment.
If all relevant changes occur in contexts that an index ignores (e.g., certain coalitions), or if positive and negative effects cancel out under the index's weighting scheme, then all payouts may be zero even though the game itself is not trivial.
This is useful from an XAI perspective because it makes explicit that a power index only explains the kinds of changes it is designed to detect.
The counterfactuality and quantitative counterfactuality results make the same point from another angle.
The intuitive idea that the sign of a player's attribution should match the effect of removing that player from the grand coalition is not guaranteed in general, except by the counterfactual index, which is defined precisely to capture that comparison.

%%%%%%%%%%%%%%%%%%%%%%%%%%%%%%%%
\subsection{Set-based Player Dimension Analysis}
\label{section:set-player-indices-analysis}
%%%%%%%%%%%%%%%%%%%%%%%%%%%%%%%%
We move from individual players to sets of players as the basic units of attribution.
Accordingly, we study the set-based variants of the indices introduced earlier, namely the set-based counterfactual $\setcounterfactempty$, set-based Shapley $\setshapempty$, set-based Banzhaf $\setbanzhafempty$, and set-based Owen $\setowenempty$.
More generally, we write $\setpowerindexempty$ for an arbitrary set-based power index.
Since these indices are defined by applying the same underlying ideas at the level of sets rather than single players, the analysis closely mirrors the single-player case.
\begin{restatable}{theorem}{setbasedanalysis}
\label{theorem:set-based-analysis}
    For any power index $\powerindexempty$ listed in \Cref{tab:axiomatic-compliance-summary-classical,tab:axiomatic-compliance-summary-non-classical}, $\powerindexempty$ and $\setpowerindexempty$ satisfy the same principles.
\end{restatable}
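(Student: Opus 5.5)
The plan is a reduction argument with two directions. Each set-based index is, by definition, the corresponding single-player index applied to the induced game $(\playerpartition, \setgame)$, with the blocks of $\playerpartition$ playing the role of players. Concretely, $\setcounterfactempty$, $\setshapempty$, $\setbanzhafempty$ and $\setowenempty$ are literally \Cref{powerindex:counterfactual,powerindex:shapley,powerindex:banzhaf,powerindex:owen} with $N$ replaced by $\playerpartition$ and $v$ by $\setgame$. For the Owen case, the a priori structure $M_\playerpartition$ is a partition of $\playerpartition$. So first I will record this identity explicitly: for every $X \in \playerpartition$, $\setpowerindexempty_X(\playerpartition,\setgame) = \powerindexempty_X(\playerpartition,\setgame)$, where $\playerpartition$ is viewed as an abstract finite player set.

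\emph{Positive direction.} Suppose $\powerindexempty$ satisfies a principle (possibly restricted to monotone or strictly monotone games). I will show that each principle, stated for the set-based index, is exactly the principle for $\powerindexempty$ instantiated at the game $(\playerpartition,\setgame)$. Efficiency becomes $\sum_{X}\setpowerindexempty_X = \setgame(\playerpartition) = v(N)$. For symmetry and null player, the hypotheses quantify over coalitions $S \subseteq \playerpartition\setminus\{X,Y\}$ of blocks, and these are precisely the hypotheses on $\setgame$. For counterfactuality, $\setgame(\playerpartition)-\setgame(\playerpartition\setminus\{X\})$ is the leave-one-out term of the induced game. Weak anonymity uses permutations of $\playerpartition$, which also act on $M_\playerpartition$. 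Two facts make the induced game a legitimate game in the same class. First, $\setgame(\emptyset)=v(\emptyset)=0$. Second, monotonicity of $v$ transfers to $\setgame$, and so does strict monotonicity, because $S\subsetneq S'$ for sets of nonempty disjoint blocks gives a strictly larger union. The single-player theorems then apply verbatim. These are \Cref{theorem:counterfactual-non-standard-positive}, \Cref{theorem:counterfactual-positive}, the cited Banzhaf, Shapley and Owen theorems, and \Cref{theorem:non-stanadard-principles-of-monotone-games}.

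\emph{Negative direction.} Suppose $\powerindexempty$ violates a principle. I will take the singleton partition $\playerpartition=\{\{i\}: i\in N\}$. Then $\setgame$ is isomorphic to $v$ under $\{i\}\mapsto i$, and $\setpowerindexempty_{\{i\}}(\playerpartition,\setgame)=\powerindexempty_i(N,v)$. For the Owen index, the a priori structure is transported along the same bijection. Each counterexample from the single-player analysis then becomes a counterexample for the set-based index. This covers the efficiency, success and counterfactuality examples, the Owen symmetry counterexample of \Cref{theorem:owen-symmetry-principle-negative}, and the non-monotone failures behind the \xmark/(\cmark) and \xmark/[\cmark] entries. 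Because every principle is invariant under this relabelling isomorphism, the violation carries over.

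The main obstacle is making precise what it means for a set-based index to satisfy a principle, since the principles are quantified over all games on a player set. I will fix the reading that the set-based index satisfies a principle iff it holds for every $v$ and every partition $\playerpartition$, with all quantifiers applied to the induced game. Under that reading, the only substantive check is that the class of induced games contains the relevant single-player games up to isomorphism; the singleton partition gives this. It also has to be closed under the monotonicity restrictions, which the transfer of monotonicity and strict monotonicity above gives. Once this is settled, every principle transfers in both directions.
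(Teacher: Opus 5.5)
Your proposal is correct and follows the paper's proof essentially verbatim. The positive direction instantiates each single-player principle at the induced game $(\playerpartition,\setgame)$, and the negative direction transports counterexamples through the singleton partition $\{\{x\} : x \in N\}$, relabelling the a priori structure along the same bijection for the Owen value; your explicit check that $\setgame(\emptyset)=0$ and that (strict) monotonicity passes from $v$ to $\setgame$ is extra care the paper leaves implicit, and it is what justifies the monotone-restricted table entries.
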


\Cref{theorem:set-based-analysis} substantially reduces the burden for the set-based player analysis. Accordingly, whether a set-based power index satisfies a principle depends on whether its single-player counterpart satisfies the principle.
We summarise this information in \Cref{tab:axiomatic-compliance-summary-classical,tab:axiomatic-compliance-summary-non-classical}.

%%%%%%%%%%%%%%%%%%%%%%%%%%%%%%%%%%%%%%%%%%%%%%%%%%%%%%%%%%%%%
\subsection{Cardinality-based Player Dimension Analysis}
\label{section:cardinality-player-indices-analysis}
%%%%%%%%%%%%%%%%%%%%%%%%%%%%%%%%%%%%%%%%%%%%%%%%%%%%%%%%%%%%%

The key change from the single-player and set-based settings is that the object of attribution is no longer a particular player or set of players.
Instead, power is assigned to a coalition of size $c$.
In other words, these indices tell us how much explanatory value is associated with having $c$-many features, rather than with any specific feature or feature set.
This affects the design of desirable principles.
To motivate the need for new principles, we use the following example: 
\begin{table}
\setlength{\tabcolsep}{10pt}
\renewcommand{\arraystretch}{1.5}
    \centering
    \begin{tabular}{|c|c|c|c|}
    \hline
       $\mathbf{x_1}$  & $\mathbf{x_2}$ & $\mathbf{x_3}$ & ${\kappa}$ \\
       \hline
        0 & 0 & 0 & 0 \\
        0 & 0 & 1 & 0 \\
        0 & 1 & 0 & 1 \\
        0 & 1 & 1 & 1 \\
        1 & 0 & 0 & 0 \\
        1 & 0 & 1 & 0 \\
        1 & 1 & 0 & 0 \\
        1 & 1 & 1 & 0 \\
        \hline
    \end{tabular}
    \caption{Table capturing the behaviour of $\kappa$ over the features ${{x}_1, {x}_2, {x}_3}$.
    If ${x_1}$ is $1$ then $\kappa$ returns $0$, disregarding the two other features.
    $\kappa$ returns $1$ precisely when ${x_1}$ is $0$, and ${x_2}$ is $1$.
    As a logical formula, $\kappa$ can be obtained as $\neg x_1 \wedge x_2$, that is, $\kappa$ does not depend on $x_3$.
    }
    \label{tab:classifier}
\end{table}
\[
v(S):=
\begin{cases}
    1, &\text{if $S=\{ x_1\}$ or $S=\{ x_2\}$}
    \\
    0, &\text{otherwise}.
\end{cases}
\] 
The value is assigned as per the classifier $\kappa$ from \Cref{tab:classifier}, where $\upsi{2}{v} = -0.67$ but $\shap{2}{v} = 0.167$.
This mismatch can be explained as follows: $\upsiempty$ accounts for the contribution of feature subsets of $\{ x_1,x_2,x_3\}$ that are of cardinality 2, namely $\{x_1,x_2\}, \{x_2, x_3\}$ and $\{x_1, x_3\}$. 
The Shapley index, on the other hand, only considers the contribution of the feature $ x_2$. 
Hence, we need to adapt the principles from \Cref{section:principles} to match the intuitions behind the class of cardinality-based power indices.

First, we focus on the classical principles, starting with \emph{efficiency}.
The idea underpinning the principle remains unchanged, except that the intuition is applied to a particular cardinality instead of individual features.
That is, when explainability is understood solely in terms of the power attributed to individual cardinalities, efficiency is understood as the aggregate of the power of each individual cardinality.
Formally, it is expressed as:
\begin{principle}[Cardinality-based Efficiency]\label{principle:cardinality-efficiency}
     ${v}(N)=\sum_{1\leq i\leq |N|}\cardpowerindex{i}{v}$,
     for any coalitional game $\game$.
\end{principle}

Analogous to weak anonymity, we propose the principle of \emph{absolute anonymity}.
The property was first proposed in the context of $\upsiempty$-values~\cite{Upsilon-values2024} to describe a version of permutation invariance shown by cardinality-based indices. 
Intuitively, absolute anonymity implies that permutations of player indices do not affect the cardinality-based power indices.
Indeed, this principle closely aligns with cardinality-based indices, which are feature-blind in that they do not consider feature identities.
\begin{principle}[Absolute Anonymity]\label{principle:absolute-anonymity-cardinality}
    For any game $\game$, the power index $\cardpowerindex{i}{v}$ satisfies absolute anonymity iff for any permutation $\pi$, the following holds:
    $$
    \cardpowerindex{i}{\pi_v}=
    \cardpowerindex{i}{v}.
    $$
\end{principle}
As discussed when introducing the weak anonymity principle, a permutation acts on the whole specification of the game, so the a priori block structure is renamed along with the players.

Analogous to absolute anonymity, we introduce dummy cardinality to replace the null player principle.
Intuitively, dummy cardinality arises whenever all possible extensions to a specific cardinality yield the same result.
In other words, all possible ways of forming a particular cardinality, and all possible ways of extending it by 1, have the same effect.
\begin{definition}
    Given a game $\game$, a cardinality $i$ is a dummy cardinality iff there is a constant $h_i$ such that, for every player $x\in N$, and $S \subseteq N \setminus\{x\}$ with $|S \cup \{x\}| = i$, it holds $v(S\cup \{x\})=v(S)+h_i$. 
\end{definition}
Consequently, a cardinality-based power index satisfies the \emph{dummy cardinality principle} if the payouts for dummy cardinalities are adjusted accordingly.
Specifically, dummy cardinalities get assigned the constant value generated during the extension described above.
\begin{principle}[Dummy Cardinality]\label{principle:dummy-cardinality}
     Given a game $\game$, a power index $\cardpowerindex{i}{v}$ satisfies dummy cardinality iff for every dummy cardinality $i$ it holds that $\cardpowerindex{i}{v}=h_i$.
\end{principle}
Towards the non-classical principles for cardinality, we start with the cardinality-based success principle. 
The satisfaction of this principle requires at least one of the cardinalities having a non-zero payout score.
The formal definition of the principle remains unchanged. 
\begin{principle}[Cardinality-based Success]
    \label{principle:card-success}
    Given a game $\game$, a power index $\cardpowerindexempty$ satisfies (cardinality-based) success iff every game $\game$ with $v$ not being the zero function, has a cardinality $c$ such that $\cardpowerindex{c}{v}\neq 0$.
\end{principle}

When viewed from the cardinality point of view, $\cardpowerindexempty$ satisfies success iff at least one of $\cardpowerindex{i}{v}$ is non-zero, for some cardinality $i$.
The principle of counterfactuality and its quantitative counterpart are re-defined for the cardinality setting. 
Recall that the motivation behind the single-player and set-based counterfactuality principles was to assess the value of a particular feature by constructing a scenario where the feature is removed from the feature set and considering its difference from the grand coalition.
However, we need to reinterpret this idea in terms of cardinality, which we do by introducing the following formalism.
\begin{principle}[Cardinality-based Counterfactuality]\label{principle:cardinality-counterfactuality}
    Given a game $\game$, a power index $\cardpowerindexempty$ satisfies (cardinality-based) counterfactuality iff 
    for each game $\game$ and cardinality $i$, we have that
    \begin{itemize}
        \item 
        If $\cardpowerindex{i}{v} < 0$, then $v(N)
        < 
        \frac{1}{\binom{|N|}{i}}\left(\sum_{|S|=i}v(S)\right)$;
        \item 
        If $\cardpowerindex{i}{v} = 0$,
        then $v(N)
        = 
        \frac{1}{\binom{|N|}{i}}\left(\sum_{|S|=i}v(S)\right)$;
        \item 
        If $\cardpowerindex{i}{v} > 0$,
        then $v(N)
        > 
        \frac{1}{\binom{|N|}{i}}\left(\sum_{|S|=i}v(S)\right)$.
    \end{itemize}
\end{principle}
The quantitative variant of cardinality-based counterfactuality follows analogously.
\begin{principle}[Cardinality-based Quantitative Counterfactuality]\label{principle:cardinality-quantitative-counterfactuality}
    Given a game $\game$, a power index $\cardpowerindexempty$ satisfies (cardinality-based) quantitative counterfactuality iff 
    $$
    \cardpowerindex{i}{v}=v(N)-\left(\frac{1}{\binom{|N|}{i}}\sum_{|S|=i}v(S)\right),
    $$
    for every game $(N,v)$.
\end{principle}

As in the single-player case, the quantitative principle is the stronger of the two.
\begin{restatable}{theorem}{cardqunatitativetonormal}
\label{theorem:card-quantitative-to-normal}  
    If a power index $\cardpowerindex{i}{v}$ satisfies quantitative counterfactuality, then it also satisfies counterfactuality.  
\end{restatable}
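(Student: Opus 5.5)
The plan is a direct unfolding of the definitions, mirroring \Cref{proposition:quan-implies-simple} for the single-player case. Fix a game $\game$ and a cardinality-based power index $\cardpowerindexempty$ satisfying \Cref{principle:cardinality-quantitative-counterfactuality}. For an arbitrary cardinality $i \in \{1,\ldots,|N|\}$, abbreviate the average value of size-$i$ coalitions as
\[
    A_i(v) := \frac{1}{\binom{|N|}{i}}\sum_{\substack{S\subseteq N\\ |S|=i}} v(S).
\]
By quantitative counterfactuality we then have the identity $\cardpowerindex{i}{v} = v(N) - A_i(v)$ for every $i$.

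Next, I will verify the three clauses of \Cref{principle:cardinality-counterfactuality} by a case split on the sign of $\cardpowerindex{i}{v}$.
\begin{itemize}
    \item If $\cardpowerindex{i}{v} < 0$, the identity gives $v(N) - A_i(v) < 0$, that is, $v(N) < A_i(v)$, which is the first clause.
    \item If $\cardpowerindex{i}{v} = 0$, it gives $v(N) = A_i(v)$, the second clause.
    \item If $\cardpowerindex{i}{v} > 0$, it gives $v(N) > A_i(v)$, the third clause.
\end{itemize}
Since $i$ and $\game$ were arbitrary, $\cardpowerindexempty$ satisfies cardinality-based counterfactuality.

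No step is a real obstacle. The only point needing care is notational: the sum $\sum_{|S|=i} v(S)$ in \Cref{principle:cardinality-counterfactuality} must range over the same family of coalitions, namely all $S \subseteq N$ with $|S| = i$, as in \Cref{principle:cardinality-quantitative-counterfactuality}. Both expressions must also share the normalising factor $1/\binom{|N|}{i}$, so that $A_i(v)$ is literally the same quantity in both principles. I will state this explicitly before the case split.
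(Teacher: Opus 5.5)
Your proposal is correct and follows the paper's own proof: substitute the identity from quantitative counterfactuality, then read off each of the three sign cases, exactly mirroring \Cref{proposition:quan-implies-simple}. Your abbreviation $A_i(v)$ for the average value of size-$i$ coalitions is only a notational convenience.
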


Notice that the cardinality-based counterfactuality principle and the cardinality-based counterfactual power index have a considerable overlap in their respective definitions.
This is in accordance with the discussion that followed the single-player-based counterfactual principle (see Principle~\ref{principle:counterfactuality}).

We proceed to the analysis of cardinality-based power indices in light of the principles introduced above.

%%%%%%%%%%%%%%%%%%%%%%%%%%%%%%%%%%
\subsubsection*{$\upsiempty$-value}
\label{section:upsilon}
%%%%%%%%%%%%%%%%%%%%%%%%%%%%%%%%%%
%%%
% Upsilon (and Shapley) Analysis
%%%
The $\upsiempty$ value satisfies all of the standard cardinality-based principles;
\Cref{theorem:upsilon-values-standard} highlights these results.
\begin{theorem}[see \cite{Upsilon-values2024}]
\label{theorem:upsilon-values-standard}
    The $\upsiempty$ value satisfies the principles of efficiency, dummy cardinality, and absolute anonymity.
\end{theorem}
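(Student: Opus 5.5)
The plan is to work from the closed form of $\upsiempty$ rather than from chains. By Proposition~2 of \cite{Upsilon-values2024}, recalled before \Cref{powerindex:upsilon}, $\upsi{c}{v}$ equals the cardinality-based Shapley value of \Cref{powerindex:cardinality-shapley}. So for each $c$ we have $\upsi{c}{v} = A_c - A_{c-1}$, where $A_c := \binom{|N|}{c}^{-1}\sum_{|S|=c} v(S)$ is the average value over coalitions of size $c$. Nothing in that reduction uses monotonicity; it is only a counting argument over maximal chains. It therefore transfers verbatim to our non-monotone adaptation. We will double-check this by re-deriving it: a uniformly random maximal chain has $\mathcal{C}_c$ uniformly distributed over the $c$-subsets. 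Linearity of expectation then gives $\mathbb{E}[v(\mathcal{C}_c) - v(\mathcal{C}_{c-1})] = A_c - A_{c-1}$.

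\emph{Efficiency} then follows by telescoping. Summing $\upsi{c}{v}$ over $c = 1, \ldots, |N|$ gives $A_{|N|} - A_0$. This equals $v(N) - v(\emptyset) = v(N)$, since the only sets of sizes $|N|$ and $0$ are $N$ and $\emptyset$.

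\emph{Absolute anonymity} follows from the fact that $\pi$ maps the $c$-subsets bijectively onto themselves. Hence $\sum_{|S|=c} v_\pi(S) = \sum_{|S|=c} v(\pi^{-1}(S)) = \sum_{|S|=c} v(S)$, so every $A_c$, and thus every $\upsi{c}{v}$, is unchanged. Equivalently, in chain language, $\pi$ permutes $\mathcal{M}(N)$.

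\emph{Dummy cardinality} is the one step needing care. Let $c$ be a dummy cardinality with constant $h_c$, so that $v(S \cup \{x\}) - v(S) = h_c$ for every pair with $x \notin S$ and $|S \cup \{x\}| = c$. In the chain form, every maximal chain $\mathcal{C}$ satisfies $\mathcal{C}_c = \mathcal{C}_{c-1} \cup \{x\}$ for some $x \notin \mathcal{C}_{c-1}$ with $|\mathcal{C}_c| = c$. So each summand in \Cref{powerindex:upsilon} equals $h_c$, and averaging gives $\upsi{c}{v} = h_c$.

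The main obstacle is bookkeeping rather than mathematics. The indexing of chains must be fixed so that $\mathcal{C}_c$ has cardinality exactly $c$, with $\mathcal{C}_0 = \emptyset$ and $\mathcal{C}_{|N|} = N$. We also need to confirm that removing the monotonicity assumption of \cite{Upsilon-values2024} invalidates none of these steps. Since every argument above is purely combinatorial and linear in $v$, we expect no issue.
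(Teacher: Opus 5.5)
Your proof is correct. The paper gives no argument of its own for this theorem and simply defers to \cite{Upsilon-values2024}, so your proposal is a self-contained replacement rather than a variant of an in-paper proof.

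Your route works in three steps:
\begin{itemize}
\item You use the closed form $A_c - A_{c-1}$, which you re-derive from the fact that $\mathcal{C}_c$ is uniform over $c$-subsets: each $c$-set arises as $\mathcal{C}_c$ in exactly $c!(|N|-c)!$ of the $|N|!$ chains.
\item From the closed form you get efficiency by telescoping and absolute anonymity from the bijection that $\pi$ induces on $c$-subsets.
\item You return to the chain form for dummy cardinality, where every summand is $h_c$ and so the average is $h_c$.
\end{itemize}

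What this buys over the bare citation is the explicit check that every step is purely combinatorial and linear in $v$, so nothing depends on the monotonicity assumed in \cite{Upsilon-values2024}. The paper needs exactly this, since it applies $\Upsilon$ to non-monotone games (e.g.\ the game with $v(\{x_2\})=-1$ in its success counterexample), but it leaves the point implicit.

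Your arguments also match how the paper handles neighbouring results. The $c$-subset bijection is the same one used for \Cref{theorem:absolute-anonymity-card-counterfactuality}. Your dummy-cardinality step is the chain-level analogue of the proof of \Cref{theorem:cardinality-owen-value-dummy}. For $\Upsilon$, the ``weights sum to one'' part of that argument is immediate, because you average uniformly over all chains.
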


Towards the non-classical principles, the $\upsiempty$ value does not satisfy cardinality-based success (Principle~\ref{principle:card-success}), even when excluding value functions that always return $0$.
Intuitively, this is because the $\upsiempty$ value averages the change in value when moving from one coalition size to the next.
As a result, positive and negative changes across different coalition sizes can cancel each other out.
In such cases, the game is still non-trivial, but every cardinality-based payout is $0$, so success fails.
The following example illustrates this phenomenon.
\begin{example}[Success failure in $\upsiempty$ value]
\label{ex:success-failure-upsilon-value}
    Consider a game over the feature set $\{x_1, x_2\}$, where $v$ is defined as follows:
    \begin{equation*}
        v(\{{x_1}\})=1, \qquad v(\{{x_2}\})=-1, \qquad v(\{{x_1, x_2}\})=0.
    \end{equation*}
    Calculating the $\upsiempty$ values yield:
    \begin{equation}
    \label{eq:upsilon-value-calc}
        \begin{split}
            \upsiempty_1(v)&=\frac{1}{2!}[v(\{{x_1}\})+v(\{{x_2}\})]-v(\emptyset)=0,
            \\
            \upsiempty_2(v)&=v(\{{x_1,x_2}\})-\frac{1}{2!}[v(\{{x_1}\})+v(\{{x_2}\})]=0.
        \end{split}
    \end{equation}
    Success demands that either $\upsiempty_1(v)$ or $\upsiempty_2(v)$ should be non-zero, thereby contradicting the above.
    Hence, success is not satisfied by $\upsiempty$ values.  
\end{example}

As we show using \Cref{ex:counterfactuality-failure-upsilon-value}, cardinality-based counterfactuality is not satisfied by $\Upsilon$-values.
By \Cref{theorem:card-quantitative-to-normal}, the quantitative counterpart is also violated.
\begin{example}[(Quantitative) Counterfactuality failure in $\upsiempty$ value]
\label{ex:counterfactuality-failure-upsilon-value}
    Consider the value function $v$ over the feature set $\{ x_1, x_2, x_3\}$ defined as follows.
    \[
    v(S):=
    \begin{cases}
       1; &\text{if $S=\{{ x_1}\}$ or $S=\{{ x_1, x_2, x_3}\}$} ,
       \\
       -1; &\text{if $S=\{ x_2\}$},
       \\
       0; &\text{otherwise}.
    \end{cases}
    \]    
    Note that $\upsi{1}{v}=0$, and, hence, if the cardinality-based counterfactuality was satisfied, then it would imply $v(N)-\frac{1}{\binom{|N|}{1}}\left(\sum_{|S|=1}v(S)\right)=0$.
    However, $v(N)-\frac{1}{\binom{|N|}{1}}\left(\sum_{|S|=1}v(S)\right)=1>0$, contradicting the implication. 
    Hence, the cardinality-based counterfactuality principle is not satisfied by $\upsiempty$.
\end{example}

By Proposition~2 in~\cite{Upsilon-values2024}, the cardinality-based Shapley value (\Cref{powerindex:cardinality-shapley}) coincides with the $\upsiempty$ value.
Hence, $\cardshapempty$ satisfies exactly the same principles as $\upsiempty$ under our definitions.

%%%%%%%%%%%%%%%%%%%%%%%%%%%%%%%%%%
\subsubsection*{Cardinality-based Banzhaf value}
\label{section:analysis-cardinality-banzhaf}
%%%%%%%%%%%%%%%%%%%%%%%%%%%%%%%%%%
%%%
% Cardinality Banzhaf Analysis
%%%
Unlike the $\upsiempty$-value, the cardinality-based Banzhaf value $\cardbanzhafempty$ does not satisfy all of the classical principles.
This is conceptually consistent with the behaviour at the single-player dimension compared to the Shapley value.
%%%%%%%%%%%%%%%%%
% Card. Banzhaf violates efficiency
%%%%%%%%%%%%%%%%%
\begin{restatable}{theorem}{cardBanzhafNegativeEfficiency}
    \label{theorem:cardinality-banzhaf-value-violates-efficiency}
    The $\cardbanzhafempty$ value violates the efficiency principle.
\end{restatable}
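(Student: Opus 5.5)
The plan is to give an explicit counterexample. The cleanest route goes through the identity in \Cref{powerindex:cardinality-banzhaf}, $\cardbanzhaf{c}{v} = w_c\,\cardshap{c}{v}$, with weight $w_c = |N|\binom{|N|-1}{c-1}/2^{|N|-1}$. I would also use the fact (\Cref{theorem:upsilon-values-standard} together with Proposition~2 of \cite{Upsilon-values2024}) that $\cardshapempty$ is efficient. Indeed, $\sum_{c}\cardshap{c}{v}$ telescopes to $v(N) - v(\emptyset) = v(N)$. Efficiency of $\cardbanzhafempty$ therefore fails as soon as the weights $w_c$ are not all equal to $1$ on the cardinalities where $\cardshap{c}{v}$ is non-zero.

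First I would note that the counterexample needs at least three players. For $|N| = 2$ one gets $w_1 = w_2 = 1$, so $\cardbanzhafempty = \cardshapempty$ and efficiency holds there. I would mention this to explain the choice of player set.

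Then I would take $N = \{x_1, x_2, x_3\}$ and reuse the unanimity game from the counterfactual efficiency counterexample: $v(S) = 1$ if $S = N$, and $v(S) = 0$ otherwise. The average values per cardinality are $0, 0, 0, 1$ for $c = 0, 1, 2, 3$. Hence $\cardshap{1}{v} = \cardshap{2}{v} = 0$ and $\cardshap{3}{v} = 1$. The weight for $c = 3$ is $w_3 = 3\binom{2}{2}/2^{2} = 3/4$. This gives $\sum_{c=1}^{3}\cardbanzhaf{c}{v} = 3/4 \neq 1 = v(N)$, which violates \Cref{principle:cardinality-efficiency}.

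There is no real obstacle here. The only care needed is computing the weight correctly, and choosing $|N| \ge 3$ so that the weights do not happen to be all equal to $1$. Optionally, I would add a short general remark: $\sum_c w_c = |N|$, while $\sum_c \cardshap{c}{v} = v(N)$. So in any game where the marginal average changes are concentrated at a single cardinality $c$ with $w_c \neq 1$, efficiency is violated.
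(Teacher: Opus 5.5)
Your counterexample is correct and is essentially the paper's proof. The paper uses the mirror-image game $v(S)=1$ for every non-empty $S$ on three players, which puts all of the change at $c=1$ (weight $3/4$) rather than at $c=3$ (also weight $3/4$), and likewise obtains a total of $3/4 \neq 1$.

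One caution: your motivating sentence, that efficiency fails as soon as the weights are not all $1$ on the support of $\cardshapempty$, is too strong. For $v(S)=|S|$ on three players every $\cardshap{c}{v}=1$, yet $\sum_{c}\cardbanzhaf{c}{v} = \tfrac34+\tfrac32+\tfrac34 = 3 = v(N)$. Your actual counterexample does not rely on that claim.
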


%%%%%%%%%%%%%%%%%
% Card. Banzhaf violates dummy cardinality
%%%%%%%%%%%%%%%%%
Intuitively, the cardinality-based Banzhaf value detects the same local change between coalition sizes as the cardinality-based Shapley value, but it does not report this change directly.
Instead, it multiplies the change by a cardinality-dependent weight, which is precisely why dummy cardinality can fail.
\begin{restatable}{theorem}{cardBanzhafNegativeDummyCardinality}
    \label{theorem:cardinality-banzhaf-value-violates-dummy-cardinality}
    The $\cardbanzhafempty$ value violates dummy cardinality.
\end{restatable}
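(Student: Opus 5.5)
The claim is a negative one, so I will refute it with a single explicit counterexample. The key identity is the second equality in \Cref{powerindex:cardinality-banzhaf}:
\[
\cardbanzhaf{c}{v} = w_c\,\cardshap{c}{v}, \qquad w_c := \frac{|N|\binom{|N|-1}{c-1}}{2^{|N|-1}}.
\]
By Proposition~2 of \cite{Upsilon-values2024}, $\cardshap{c}{v}$ coincides with the $\upsiempty$ value. By \Cref{theorem:upsilon-values-standard}, the $\upsiempty$ value satisfies dummy cardinality, so $\cardshap{c}{v} = h_c$ at every dummy cardinality $c$. It follows that $\cardbanzhaf{c}{v} = w_c h_c$ there. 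Dummy cardinality for $\cardbanzhafempty$ therefore fails as soon as we find a dummy cardinality $c$ with $h_c \neq 0$ and $w_c \neq 1$.

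To avoid leaning on the cited theorem, I will check directly that $\cardshap{c}{v}=h_c$. Every pair $(S, S\cup\{x\})$ with $|S\cup\{x\}|=c$ satisfies $v(S\cup\{x\})-v(S)=h_c$. Averaging this over all such pairs gives the difference between the average of $v$ over size-$c$ coalitions and over size-$(c-1)$ coalitions. This works because each size-$c$ coalition appears in exactly $c$ pairs and each size-$(c-1)$ coalition in exactly $|N|-c+1$ pairs. The pair count $c\binom{|N|}{c} = (|N|-c+1)\binom{|N|}{c-1}$ makes the two normalisations agree.

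For the concrete witness, take $N=\{x_1,x_2,x_3\}$ and the additive game $v(S)=|S|$. Every cardinality $c\in\{1,2,3\}$ is then a dummy cardinality with $h_c=1$. The weights are $w_c = 3\binom{2}{c-1}/4$, so $w_1 = 3/4 \neq 1$. This gives $\cardbanzhaf{1}{v} = 3/4 \neq 1 = h_1$, which violates \Cref{principle:dummy-cardinality}. As a direct sanity check, the average over singletons is $1$ and $v(\emptyset)=0$, so $\cardshap{1}{v}=1$ and $\cardbanzhaf{1}{v}=\tfrac{3}{4}$.

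I do not expect any step to be a real obstacle. The only point needing care is confirming that the chosen $c$ satisfies the definition of a dummy cardinality, namely a single constant $h_c$ valid for all players and all $S$. In this game that holds trivially.
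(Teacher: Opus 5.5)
Your proof is correct and is essentially the paper's: the paper uses the same witness $N=\{x_1,x_2,x_3\}$ with $v(S)=|S|$, notes that cardinality $1$ is dummy with $h_1=1$, and computes the cardinality-based Banzhaf payout at $c=1$ as $\tfrac{3}{4}\neq 1$. Your extra observation that the payout equals $w_c h_c$ at every dummy cardinality is correct, and it shows the failure occurs whenever $w_c\neq 1$ and $h_c\neq 0$; the counterexample itself does not need it.
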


%%%%%%%%%%%%%%%%%
% Card. Banzhaf satisfies absolute anonymity
%%%%%%%%%%%%%%%%%
Absolute anonymity holds because the cardinality-based Banzhaf value ignores player identities altogether.
Renaming the players can change which coalitions have which names, but not how many coalitions of each size exist or what average change between sizes is measured.
\begin{restatable}{theorem}{cardBanzhafPositiveAbsoluteAnonymity}
    \label{theorem:cardinality-banzhaf-value-satisfies-absolute-anonymity}
    The $\cardbanzhafempty$ value satisfies absolute anonymity.
\end{restatable}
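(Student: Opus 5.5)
The plan is to show that, for each fixed cardinality $c$, the two level averages in \Cref{powerindex:cardinality-banzhaf} are invariant under relabelling the players, while the prefactor depends only on $|N|$ and $c$. Fix a permutation $\pi$ of $N$ and the renamed game $(N,v_\pi)$ with $v_\pi(\pi(S)) = v(S)$, as in the discussion of weak anonymity. The prefactor $\frac{|N|\binom{|N|-1}{c-1}}{2^{|N|-1}}$ involves only $|N|$ and $c$, and renaming does not change $|N|$. So it suffices to show that, for each $k \in \{c-1, c\}$,
\[
\frac{1}{\binom{|N|}{k}}\sum_{\substack{S\subseteq N\\|S|=k}} v_\pi(S) \;=\; \frac{1}{\binom{|N|}{k}}\sum_{\substack{S\subseteq N\\|S|=k}} v(S).
\]

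The key step is a reindexing argument. The induced map $\pi\colon 2^N\to 2^N$ is a bijection because $\pi$ is bijective on $N$, with inverse induced by $\pi^{-1}$. It also preserves cardinality, $|\pi(S)|=|S|$, again by injectivity. Hence $\pi$ restricts to a bijection of $\{S\subseteq N : |S|=k\}$ onto itself. Substituting $S=\pi(T)$ in the left-hand sum gives
\[
\sum_{|S|=k} v_\pi(S) = \sum_{|T|=k} v_\pi(\pi(T)) = \sum_{|T|=k} v(T).
\]
The normalising binomials are identical on both sides, so the averages agree for $k=c$ and for $k=c-1$. The case $c=1$, where the $k=0$ term is just $v(\emptyset)=v_\pi(\emptyset)=0$, is covered automatically.

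Finally, I combine these pieces. The bracketed difference is unchanged under renaming and the prefactor is unchanged, so $\cardbanzhaf{c}{v_\pi}=\cardbanzhaf{c}{v}$ for every $c$, which is exactly absolute anonymity (Principle~\ref{principle:absolute-anonymity-cardinality}). A shorter alternative is to use the identity $\cardbanzhaf{c}{v}=\frac{|N|\binom{|N|-1}{c-1}}{2^{|N|-1}}\cardshap{c}{v}$ from \Cref{powerindex:cardinality-banzhaf}, together with the fact that $\cardshapempty$ coincides with $\upsiempty$, which satisfies absolute anonymity by \Cref{theorem:upsilon-values-standard}. I would present the direct argument and mention this route as a remark.

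The only point needing care is the convention. Principle~\ref{principle:absolute-anonymity-cardinality} writes $\pi_v$, and I read it as the game $v_\pi$ defined earlier. I would state this explicitly and note that the argument works equally for $v\circ\pi$, since both maps are bijections of the level sets. No Owen-style block structure is involved, so nothing else needs to be renamed.
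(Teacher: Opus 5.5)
Your argument is correct, but your main route differs from the paper's: the paper's proof is exactly what you relegate to a remark. It writes $\cardbanzhaf{c}{v}=\frac{|N|\binom{|N|-1}{c-1}}{2^{|N|-1}}\upsi{c}{v}$ and notes that the weight depends only on $|N|$ and $c$. It then imports absolute anonymity of $\upsiempty$ from \Cref{theorem:upsilon-values-standard}, via Proposition~2 of Torra's paper.

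You instead prove directly that each level average $\frac{1}{\binom{|N|}{k}}\sum_{|S|=k}v(S)$ is invariant. You do this by reindexing along the cardinality-preserving bijection $S\mapsto\hat\pi(S)$ of $2^N$. This is the same device the paper uses for the cardinality-based counterfactual in \Cref{theorem:absolute-anonymity-card-counterfactuality}.

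The paper's route is shorter and makes the conceptual point explicit: $\cardbanzhafempty$ is just $\cardshapempty$ rescaled by an identity-independent weight. Your route is self-contained and needs neither the chain-based identity nor the external theorem. That matters because, as the paper itself notes, the theorem was originally set in monotone games, whereas here $v$ is arbitrary. As a by-product, your argument also establishes absolute anonymity of $\cardshapempty$ directly.
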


%%%%%%%%%%%%%%%%%
% Card. Banzhaf violates success
%%%%%%%%%%%%%%%%%
For the non-classical principles, the cardinality Banzhaf value does not satisfy cardinality-based success (Principle~\ref{principle:card-success}), even when excluding value functions that always return $0$.
Intuitively, this is because $\cardbanzhafempty$ compares average values between adjacent cardinalities, as $\upsiempty$ does, but additionally rescales these differences by a Banzhaf weight.
Thus, if the average difference between adjacent cardinalities is $0$, the corresponding payout remains $0$.
\begin{restatable}{theorem}{cardBanzhafNegativeSuccess}
    \label{theorem:cardinality-banzhaf-value-violates-success}
    The $\cardbanzhafempty$ value violates the success principle.
\end{restatable}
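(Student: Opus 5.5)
The plan is to exhibit a single non-trivial game on which every cardinality-based Banzhaf payout vanishes. The natural choice is the game of \Cref{ex:success-failure-upsilon-value}, since the proof should exploit the identity already recorded in \Cref{powerindex:cardinality-banzhaf}, namely
\[
\cardbanzhaf{c}{v} = \frac{|N|\binom{|N|-1}{c-1}}{2^{|N|-1}}\,\cardshap{c}{v}.
\]
The prefactor is a finite real number for every $c \in \{1,\dots,|N|\}$. Hence whenever $\cardshap{c}{v}=0$ for all $c$, the same holds for $\cardbanzhaf{c}{v}$.

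The concrete steps are as follows.
\begin{enumerate}
    \item Take $N=\{x_1,x_2\}$ with $v(\{x_1\})=1$, $v(\{x_2\})=-1$, $v(\{x_1,x_2\})=0$ and $v(\emptyset)=0$. This $v$ is not the zero function, so the hypothesis of \Cref{principle:card-success} applies.
    \item Recall that $\cardshapempty$ coincides with $\upsiempty$ (Proposition~2 in~\cite{Upsilon-values2024}). The computation in \Cref{eq:upsilon-value-calc} then gives $\cardshap{1}{v}=\cardshap{2}{v}=0$. As a sanity check, directly from \Cref{powerindex:cardinality-shapley}, the average over singletons is $0$, the average over the pair is $0$, and the empty set has value $0$.
    \item Multiply by the weights, which are $\frac{2\binom{1}{0}}{2}=1$ for $c=1$ and $\frac{2\binom{1}{1}}{2}=1$ for $c=2$. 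This gives $\cardbanzhaf{1}{v}=\cardbanzhaf{2}{v}=0$.
\end{enumerate}
So no cardinality receives a non-zero payout even though $v$ is non-trivial, and success is violated.

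There is no real obstacle here. The only point needing care is confirming that the rescaling factor is well defined and finite, so that it cannot turn a zero into something else. This is immediate because $\binom{|N|-1}{c-1}\ge 1$ for $1\le c\le|N|$.
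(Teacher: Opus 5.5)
Your proposal is correct and takes essentially the same approach as the paper: it uses the same two-player game with $v(\{x_1\})=1$, $v(\{x_2\})=-1$, $v(\{x_1,x_2\})=0$ and shows that both payouts vanish. The only difference is that the paper substitutes directly into the defining formula of $\cardbanzhafempty$ (where both weights equal $1$), while you factor through the identity $\cardbanzhaf{c}{v} = \frac{|N|\binom{|N|-1}{c-1}}{2^{|N|-1}}\cardshap{c}{v}$, which amounts to the same calculation.
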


%%%%%%%%%%%%%%%%%
% Card. Banzhaf violates (quantitative) counterfactuality
%%%%%%%%%%%%%%%%%
The $\cardbanzhafempty$ index, just like the $\upsiempty$, violates the cardinality-based counterfactuality principle and
by \Cref{theorem:card-quantitative-to-normal}, the quantitative counterpart.
\begin{restatable}{theorem}{cardBanzhafNegativeCounterfactuality}
    \label{theorem:cardinality-banzhaf-value-violates-counterfactuality}
    The $\cardbanzhafempty$ value violates the Counterfactuality and Quantitative Counterfactuality principles.
\end{restatable}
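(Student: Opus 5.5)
By \Cref{theorem:card-quantitative-to-normal}, any index satisfying cardinality-based quantitative counterfactuality also satisfies cardinality-based counterfactuality. The contrapositive therefore reduces the statement to one task: exhibit a single game $\game$ and a cardinality $c$ for which $\cardbanzhaf{c}{v}$ violates the sign condition of \Cref{principle:cardinality-counterfactuality}. Both violations then follow at once.

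The natural witness is the game of \Cref{ex:counterfactuality-failure-upsilon-value}, because \Cref{powerindex:cardinality-banzhaf} gives $\cardbanzhaf{c}{v} = \frac{|N|\binom{|N|-1}{c-1}}{2^{|N|-1}}\cardshap{c}{v}$. The scaling factor is strictly positive for every $c \in \{1,\dots,|N|\}$, so $\cardbanzhafempty$ and $\cardshapempty$ always share the same sign. Moreover, $\cardshapempty$ coincides with $\upsiempty$ by Proposition~2 of~\cite{Upsilon-values2024}.

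Concretely, take $N=\{x_1,x_2,x_3\}$ with $v(\{x_1\}) = v(N) = 1$, $v(\{x_2\}) = -1$, and $v(S)=0$ otherwise. For $c=1$, the average value of the singletons is $\frac{1}{3}(1-1+0)=0$ and $v(\emptyset)=0$, so $\cardshap{1}{v}=0$. Multiplying by the positive factor $\frac{3\binom{2}{0}}{4}=\frac{3}{4}$ gives $\cardbanzhaf{1}{v}=0$.

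The zero case of \Cref{principle:cardinality-counterfactuality} would then require $v(N)=\frac{1}{3}\sum_{|S|=1}v(S)=0$. But $v(N)=1$, a contradiction, so $\cardbanzhafempty$ violates cardinality-based counterfactuality, and by \Cref{theorem:card-quantitative-to-normal} it also violates the quantitative version.

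As a direct check of the quantitative failure, note that the quantitative principle would require $\cardbanzhaf{1}{v}=1$, whereas $\cardbanzhaf{1}{v}=0$.

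The only point needing care is confirming that the Banzhaf weight is nonzero for every $c$ in range, so that the zero payout transfers from $\cardshapempty$. This holds because $\binom{|N|-1}{c-1}\geq 1$ for $1\le c\le |N|$.
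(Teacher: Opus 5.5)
Your proof is correct and matches the paper's argument. It uses the same game from \Cref{ex:counterfactuality-failure-upsilon-value}, shows $\cardbanzhaf{1}{v}=0$ while $v(N)-\frac{1}{3}\sum_{|S|=1}v(S)=1$, and appeals to \Cref{theorem:card-quantitative-to-normal} for the quantitative version. Your detour through $\cardshapempty$ and the positivity of the Banzhaf weight is harmless but unnecessary, since the paper simply evaluates the definition of $\cardbanzhafempty$ directly.
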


%%%%%%%%%%%%%%%%%%%%%%%%%%%%%%%%%%
\subsubsection*{Cardinality-based Owen value}
\label{section:analysis-cardinality-owen}
%%%%%%%%%%%%%%%%%%%%%%%%%%%%%%%%%%
%%%
% Cardinality Owen Analysis
%%%
We now turn to the cardinality-based Owen value $\cardowenempty$.
Unlike the cardinality-based Banzhaf value, it satisfies all classical cardinality-based principles, but, like every other index considered here, it fails the non-classical ones.
%%%%%%%%%%%%%%%%%
% Card. Owen satisfies efficiency
%%%%%%%%%%%%%%%%%
Efficiency follows from how the index is constructed.
The cardinality-based Owen value re-bundles the terms of the ordinary Owen value according to the size of the coalition that is formed, so no term is lost or counted twice.
\begin{restatable}{theorem}{cardOwenEfficiency}\label{theorem:cardinality-owen-value-efficiency}
    The $\cardowenempty$ value satisfies the efficiency principle.
\end{restatable}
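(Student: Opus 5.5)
The plan is to sum $\cardowen{c}{v}$ over all $c \in \{1,\dots,|N|\}$, exchange the order of summation so that the sum over $c$ becomes innermost, and then telescope. First I will write out $m_{k,R}(t)$ explicitly, following Appendix~\ref{appendix:derivation-owen-cardinality}, as
\[
m_{k,R}(t) = \frac{1}{\binom{|S^k|}{t}}\sum_{\substack{K\subseteq S^k\\ |K|=t}} v(Q_R\cup K) \;-\; \frac{1}{\binom{|S^k|}{t-1}}\sum_{\substack{K\subseteq S^k\\ |K|=t-1}} v(Q_R\cup K).
\]
This is the within-block Shapley aggregation of the Owen value regrouped by the size of $K$. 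The index $k$ and the summation over $K \subseteq S^k\setminus\{i\}$ run over $i\in S^k$, which is where the ordinary Owen value is re-bundled.

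Next I will exchange the sums. We have $\sum_{c=1}^{|N|}\cardowen{c}{v} = \sum_{k=1}^m\sum_{r=0}^{m-1}\frac{r!(m-1-r)!}{m!}\sum_{R\subseteq\mathcal I\setminus\{k\},|R|=r}\sum_{c}[\,1\le c-|Q_R|\le |S^k|\,]\,m_{k,R}(c-|Q_R|)$. For fixed $k$ and $R$, the map $c\mapsto t=c-|Q_R|$ is a bijection from the admissible $c$ onto $\{1,\dots,|S^k|\}$. This uses that $|Q_R|+|S^k|\le|N|$ because $R$ excludes $k$, so every such $t$ gives some $c\in\{1,\dots,|N|\}$. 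Hence each pair $(k,R)$ contributes exactly $\sum_{t=1}^{|S^k|} m_{k,R}(t)$, which telescopes to $v(Q_R\cup S^k)-v(Q_R)$. The endpoints have single coalitions, $K=S^k$ and $K=\emptyset$, so their averages are exact.

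What remains is $\sum_{k}\sum_{R\subseteq \mathcal I\setminus\{k\}}\frac{|R|!(m-1-|R|)!}{m!}\bigl(v(Q_R\cup S^k)-v(Q_R)\bigr)$. This is exactly the sum over all blocks of the Shapley values of the quotient game on blocks, $R\mapsto v(Q_R)$, which is the set-based game $\setgame$ for the partition $M_N$. By efficiency of the Shapley value (\Cref{theorem:shapley-standard-principles}, applied via \Cref{theorem:set-based-analysis} to the set-based Shapley value), this equals $v(Q_{\mathcal I})-v(Q_\emptyset)=v(N)-v(\emptyset)=v(N)$. Alternatively, I can argue directly by the permutation representation, where each block ordering telescopes to $v(N)$.

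I expect the main obstacle to be the bookkeeping in the exchange step. I need to check that the constraint $t=c-|Q_R|$, $1\le t\le|S^k|$, in the definition selects each $(k,R,t)$ exactly once, and that no $t$ falls outside $\{1,\dots,|N|\}$. I also need to confirm that the appendix's $m_{k,R}$ is indeed the size-averaged difference above. If the appendix's normalisation differs by carrying the within-block weights $\frac{|K|!(|S^k|-|K|-1)!}{|S^k|!}$ summed over $i$, I will first verify that those weights collapse to the uniform average over $|K|=t$ subsets.
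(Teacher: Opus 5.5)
Your proof is correct, but it takes a different route from the paper's. The paper's proof is two lines. It appeals to the regrouping carried out in the appendix derivation: every aggregated Owen term is assigned to exactly one cardinality, so the cardinality-Owen payouts summed over $c$ equal the ordinary Owen values summed over $i\in N$. It then invokes efficiency of the ordinary Owen value, cited from Owen (1977).

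You instead sum the definition directly, in three steps:
\begin{itemize}
\item for each fixed pair $(k,R)$, the constraint $t=c-|Q_R|$ gives a bijection onto $t\in\{1,\dots,|S^k|\}$;
\item the size-averaged increments $m_{k,R}(t)$ telescope to $v(Q_R\cup S^k)-v(Q_R)$;
\item what remains is the total Shapley value of the block game $R\mapsto v(Q_R)$, and Shapley efficiency finishes.
\end{itemize}
Shapley efficiency applies to that game directly, since it holds for every game, so the detour through the set-based theorem is unnecessary, though harmless.

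In effect, your telescoping step re-proves the quotient-game property of the Owen value: the payouts inside block $k$ sum to that block's Shapley value in the game between blocks. That is the standard route to Owen efficiency. So your argument needs only Shapley efficiency, not Owen's theorem or the somewhat informal ``by construction'' appeal to the appendix. It also spells out the bookkeeping the paper leaves implicit, namely that $|Q_R|+|S^k|\le|N|$ because $k\notin R$, so no admissible $t$ escapes the range of cardinalities.

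The paper's version is shorter once the appendix derivation is in hand. It also foregrounds the conceptual point the authors want: the cardinality Owen value is a lossless re-bundling of the ordinary Owen payouts.

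Your normalisation concern does not arise. The appendix defines $m_{k,R}(t)$ exactly as the difference of size-averages you wrote, after showing that the within-block Owen weights cancel against the binomial counts.
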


%%%%%%%%%%%%%%%%%
% Card. Owen satisfies dummy cardinality
%%%%%%%%%%%%%%%%%
When a cardinality is dummy, every step into that size adds the same constant.
The weights collected at a given cardinality sum to one, so the index returns that constant unchanged.
\begin{restatable}{theorem}{cardOwenDummy}\label{theorem:cardinality-owen-value-dummy}
    The $\cardowenempty$ value satisfies the dummy cardinality principle.
\end{restatable}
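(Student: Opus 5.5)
The plan is to show that, when $c$ is a dummy cardinality with constant $h_c$, every summand $m_{k,R}(t)$ in the $c$-th term of \Cref{powerindex:cardinality-owen} equals $h_c$. The claim then reduces to showing that the coefficients $\frac{r!(m-1-r)!}{m!}$ collected in that term sum to exactly one.

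\emph{Step 1: each local average equals $h_c$.} Following Appendix~\ref{appendix:derivation-owen-cardinality}, I take
\[
m_{k,R}(t)=\frac{1}{\binom{|S^k|}{t}}\sum_{K\subseteq S^k,\,|K|=t} v(Q_R\cup K)-\frac{1}{\binom{|S^k|}{t-1}}\sum_{K'\subseteq S^k,\,|K'|=t-1} v(Q_R\cup K').
\]
I rewrite both averages over the same index set, namely pairs $(K',x)$ with $K'\subseteq S^k$, $|K'|=t-1$ and $x\in S^k\setminus K'$.
\begin{itemize}
\item Each $t$-subset $K$ arises from exactly $t$ such pairs.
\item Each $(t-1)$-subset $K'$ arises from exactly $|S^k|-t+1$ such pairs.
\end{itemize}
So the uniform distribution over pairs pushes forward to the uniform distribution on $t$-subsets (via $K'\cup\{x\}$) and on $(t-1)$-subsets (via $K'$). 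Hence $m_{k,R}(t)$ is the average over pairs of the marginal contributions
\[
v\bigl((Q_R\cup K')\cup\{x\}\bigr)-v(Q_R\cup K').
\]
Set $S=Q_R\cup K'$. Since $x\notin Q_R$ and $t=c-|Q_R|$, we have $x\notin S$ and $|S\cup\{x\}|=|Q_R|+t=c$. By the definition of a dummy cardinality, every such contribution equals $h_c$, so $m_{k,R}(t)=h_c$.

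\emph{Step 2: the weights sum to one.} It remains to show
\[
\sum_{k=1}^m\sum_{\substack{R\subseteq\mathcal{I}\setminus\{k\}\\ |Q_R|<c\le |Q_R|+|S^k|}}\frac{|R|!\,(m-1-|R|)!}{m!}=1 .
\]
I interpret $\frac{r!(m-1-r)!}{m!}$ probabilistically. For a uniformly random ordering of the $m$ blocks, it is the probability that the blocks preceding $S^k$ are exactly those indexed by $R$ with $|R|=r$. For a fixed ordering, list the blocks and their cumulative sizes, which run from $0$ up to $|N|$. Since $1\le c\le|N|$, there is exactly one block $S^k$ whose preceding union $Q_R$ satisfies $|Q_R|<c\le|Q_R|+|S^k|$. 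Thus the events indexed by admissible pairs $(k,R)$ partition the space of orderings, and their probabilities sum to $1$.

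Combining the two steps gives $\cardowen{c}{v}=h_c\cdot 1=h_c$, as dummy cardinality requires.

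I expect the main obstacle to be Step 1. The definition of $m_{k,R}$ in the main text is only verbal, so I must fix the precise formula from the appendix. I must also check the boundary cases: $t=1$, where the subtracted term is $v(Q_R)$, and $t=|S^k|$. In both cases the pair-counting argument still goes through, since the relevant binomial coefficients are positive.
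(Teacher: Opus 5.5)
Your proof is correct, and its first step matches the paper's: every summand $m_{k,R}(t)$ in the $c$-th term averages steps into a coalition of size $c$, so it equals $h_c$. Your double counting over pairs $(K',x)$ supplies a justification the paper omits. The paper simply asserts that $m_{k,R}(t)$, defined as a difference of two averages, is itself an average of such marginal contributions.

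The routes diverge in showing that the collected weight $W_c$ equals one. The paper never computes $W_c$ directly. It observes that $W_c$ is game-independent and evaluates $\cardowenempty$ on the auxiliary game $v^*$, which assigns $1$ to coalitions of size at least $c$ and $0$ otherwise; there every cardinality is dummy, with constant $1$ at $c$ and $0$ elsewhere. It then invokes cardinality-based efficiency (\Cref{theorem:cardinality-owen-value-efficiency}) to conclude $W_c = v^*(N) = 1$. You instead read the weights as probabilities over uniformly random orderings of the blocks. In each ordering, exactly one block straddles position $c$, so the admissible events $(k,R)$ partition the sample space.

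The paper's route is shorter once efficiency is available, but it depends on the term re-bundling in the appendix and on Owen's efficiency theorem. Yours is self-contained and independent of efficiency. It also makes transparent why the weights collected at a fixed cardinality form a probability distribution.
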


%%%%%%%%%%%%%%%%%
% Card. Owen satisfies absolute anonymity
%%%%%%%%%%%%%%%%%
Renaming the players also renames the a priori blocks, which leaves the block sizes and the weights untouched. 
Only the labels change, so the payouts stay the same.
\begin{restatable}{theorem}{cardOwenAbsoluteAnonymity}\label{theorem:cardinality-owen-value-absolute-anonymity}
    The $\cardowenempty$ value satisfies the absolute-anonymity principle.
\end{restatable}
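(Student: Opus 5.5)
The plan is to show that every ingredient of $\cardowen{c}{v}$ in \Cref{powerindex:cardinality-owen} is left unchanged by a renaming of the players. Fix a permutation $\pi$ of $N$. As noted after \Cref{principle:weak-anonymity}, the renamed specification consists of the game $(N,v_\pi)$, with $v_\pi(\pi(S)) = v(S)$, together with the renamed a priori structure $\pi(M_\players) = \{\pi(S^1),\ldots,\pi(S^m)\}$. We index the renamed blocks by the same index set $\mathcal{I}=\{1,\dots,m\}$, so that block $k$ of the new structure is $\pi(S^k)$.

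First, the combinatorial data are invariant. Since $\pi$ is a bijection, $|\pi(S^j)| = |S^j|$ for every $j$, and the number of blocks $m$ is unchanged. For each $R \subseteq \mathcal{I}$ the union of the renamed blocks is $\pi(Q_R)$, and $|\pi(Q_R)| = |Q_R|$. Consequently the following are all identical for the original and the renamed specification:
\begin{itemize}
    \item the weights $\frac{r!(m-1-r)!}{m!}$;
    \item the range of the outer sums over $k$, $r$ and $R \subseteq \mathcal{I}\setminus\{k\}$ with $|R| = r$;
    \item the local cardinality $t = c - |Q_R|$;
    \item the admissibility condition $1 \le t \le |S^k|$.
\end{itemize}

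The main step is to show that the averaged marginal change is invariant, that is, $m^{\pi}_{k,R}(t) = m_{k,R}(t)$, where $m^{\pi}$ is computed in the renamed specification. Following the appendix definition, I would write $m_{k,R}(t)$ explicitly as
\[
\frac{1}{\binom{|S^k|}{t}}\sum_{\substack{K\subseteq S^k\\ |K|=t}} v(Q_R\cup K) \;-\; \frac{1}{\binom{|S^k|}{t-1}}\sum_{\substack{K\subseteq S^k\\ |K|=t-1}} v(Q_R\cup K).
\]
This is the average value of coalitions built from $Q_R$ plus $t$ members of $S^k$, minus the corresponding average with $t-1$ members. (If the appendix instead phrases it via orderings within the block, the same argument applies.)

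For the renamed specification, the relevant sums run over $K' \subseteq \pi(S^k)$ with $|K'| = t$. The map $K \mapsto \pi(K)$ is a bijection from the $t$-subsets of $S^k$ onto the $t$-subsets of $\pi(S^k)$, and the same holds for $(t-1)$-subsets. Substituting $K' = \pi(K)$ gives
\[
v_\pi\bigl(\pi(Q_R)\cup\pi(K)\bigr) = v_\pi\bigl(\pi(Q_R\cup K)\bigr) = v(Q_R\cup K).
\]
Hence each of the two averages is equal term by term, and therefore $m^{\pi}_{k,R}(t) = m_{k,R}(t)$.

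Combining the invariance of the weights and index sets with the invariance of $m_{k,R}(t)$ yields $\cardowen{c}{v_\pi} = \cardowen{c}{v}$ for every $c$, which is exactly absolute anonymity (\Cref{principle:absolute-anonymity-cardinality}).

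The only real obstacle is bookkeeping. We must be careful that the block indexing is carried along by $\pi$, and that $m_{k,R}$ is expressed purely through subset sums so that the bijection argument applies. If the structure were not renamed together with the players, the claim would fail, so this convention has to be invoked explicitly.
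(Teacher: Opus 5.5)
Your proposal is correct and follows essentially the same argument as the paper. Both show that $m$, the block sizes, the weights, and the constraints $t = c - |Q_R|$ and $1 \le t \le |S^k|$ are preserved because $|\pi(S^k)| = |S^k|$ and $|\pi(Q_R)| = |Q_R|$, and that each $m_{k,R}(t)$ is unchanged because $v_\pi$ assigns $\pi(Q_R \cup K)$ the value $v(Q_R \cup K)$. Your explicit subset-average formula for $m_{k,R}(t)$ matches the appendix, and the bijection $K \mapsto \pi(K)$ just spells out that last step, which the paper states briefly.
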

Note that $\cardowenempty$ satisfies properties that carry the same names as the ones used to characterise $\upsiempty$-values in Theorem~1 of \cite{Upsilon-values2024}, which would make the two indices coincide.
However, they do not (see \Cref{table:card-owen-shap-comparison}), because in that paper the permutation acts on the game alone, whereas in our work we say that it also renames the a priori block structure, as ignoring it when renaming would not be meaningful.

%%%%%%%%%%%%%%%%%
% Card. Owen violates success.
%%%%%%%%%%%%%%%%%
The non-classical principles fail.
For success, similarly to $\cardshapempty$ and $\upsiempty$ positive and negative changes across coalition sizes can cancel out so that every payout is zero.
\begin{restatable}{theorem}{cardOwenSuccess}\label{theorem:cardinality-owen-value-success}
    The $\cardowenempty$ value violates the success principle.
\end{restatable}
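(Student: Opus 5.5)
The plan is to give an explicit counterexample: a non-zero game on which every payout $\cardowen{c}{v}$ vanishes. The cheapest route is to reduce to a case already settled. Take the trivial a priori structure consisting of singleton blocks, $M_N = \{\{x_1\},\ldots,\{x_n\}\}$. I expect $\cardowenempty$ to collapse to $\cardshapempty$, and hence to $\upsiempty$ by Proposition~2 of~\cite{Upsilon-values2024}. This mirrors \Cref{theorem:simplest-case-owen} at the cardinality level. The failure of success would then follow directly from \Cref{ex:success-failure-upsilon-value}.

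The key steps are as follows.

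\emph{Step one.} Specialise to $N=\{x_1,x_2\}$ with $v(\{x_1\})=1$, $v(\{x_2\})=-1$, $v(N)=0$, and take $M_N=\{\{x_1\},\{x_2\}\}$. Then $m=2$ and every block has size one. Since $t = c-|Q_R|$ must satisfy $1\le t\le 1$, the only admissible $R$ for a given $k$ are those with $|Q_R| = c-1$.

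\emph{Step two.} For a singleton block, $m_{k,R}(1)$ is the change from selecting zero players of $S^k$ to selecting one, with $Q_R$ present. This equals $v(Q_R\cup S^k)-v(Q_R)$.

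\emph{Step three.} Evaluate the two payouts.
\begin{itemize}
\item For $c=1$ we have $R=\emptyset$ and weight $0!\,1!/2! = 1/2$, giving $\cardowen{1}{v} = \tfrac12\bigl(v(\{x_1\})+v(\{x_2\})\bigr) = 0$.
\item For $c=2$ we have $|R|=1$ and weight $1!\,0!/2!=1/2$, giving $\cardowen{2}{v}=\tfrac12\bigl[(v(N)-v(\{x_2\}))+(v(N)-v(\{x_1\}))\bigr]=0$.
\end{itemize}
Since $v$ is not the zero function, success (\Cref{principle:card-success}) fails.

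The main obstacle is reading $m_{k,R}(t)$ correctly, since its exact definition is deferred to the appendix. The computation only needs the singleton case, where "average marginal change from $t-1$ to $t$ selected players" can only mean the single difference above, so I expect little risk there. If one preferred not to rely on that reading, a fallback is the one-block structure $M_N=\{N\}$. There $m=1$, $R=\emptyset$ and $t=c$, so $\cardowen{c}{v}=m_{1,\emptyset}(c)$, which is the adjacent-size average difference, i.e. $\cardshap{c}{v}$. The same example then gives zero payouts again.
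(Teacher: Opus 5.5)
Your proposal is correct and follows essentially the paper's proof: the same two-player game from \Cref{ex:success-failure-upsilon-value}, the same singleton structure $M_N=\{\{x_1\},\{x_2\}\}$, and the same observation that $t=1$ forces $|Q_R|=c-1$, so $\cardowen{c}{v}$ reduces to $\cardshap{c}{v}=\upsi{c}{v}$. The only difference is that you evaluate both payouts explicitly (each equals $0$), whereas the paper gets the zeros by citing that example.
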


%%%%%%%%%%%%%%%%%
% Card. Owen violates
% (Quantitative) Counterfactuality.
%%%%%%%%%%%%%%%%%
For counterfactuality and quantitative counterfactuality, the violation is essentially trivial from the definition as we have see previously.
\begin{restatable}{theorem}{cardOwenCounterfactuality}\label{theorem:cardinality-owen-value-counterfactuality}
    The $\cardowenempty$ value violates the Counterfactuality and Quantitative Counterfactuality principles.
\end{restatable}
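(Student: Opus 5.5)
The plan is to reduce to the case already settled for $\upsiempty$ by choosing a degenerate a priori structure under which $\cardowenempty$ collapses to $\cardshapempty$. Then we reuse the witness from \Cref{ex:counterfactuality-failure-upsilon-value}. Since violating counterfactuality forces a violation of quantitative counterfactuality (the contrapositive of \Cref{theorem:card-quantitative-to-normal}), it suffices to exhibit one game and one a priori structure for which cardinality-based counterfactuality (\Cref{principle:cardinality-counterfactuality}) fails.

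\textbf{Step 1: reduction to $\cardshapempty$.} Take the singleton structure $M_\players = \{\{1\},\ldots,\{n\}\}$, so $m = n$ and $|S^k| = 1$ for all $k$. In \Cref{powerindex:cardinality-owen} the only admissible local cardinality is $t = 1$, which forces $|Q_R| = |R| = c-1$, i.e.\ $r = c-1$. Moreover, $m_{k,R}(1) = v(Q_R \cup S^k) - v(Q_R)$, because moving from $0$ to $1$ selected players of a singleton block is just adding that player. Hence
\[
\cardowen{c}{v} = \sum_{k=1}^{n} \sum_{\substack{R \subseteq \mathcal{I}\setminus\{k\}\\ |R| = c-1}} \frac{(c-1)!\,(n-c)!}{n!}\bigl(v(Q_R \cup S^k) - v(Q_R)\bigr).
\]
We then regroup this double sum by the resulting coalition $T = Q_R \cup S^k$ with $|T| = c$, and by the starting coalition $Q_R$ with $|Q_R| = c-1$:
\begin{itemize}
\item each set of size $c$ appears $c$ times with sign $+$;
\item each set of size $c-1$ appears $n-c+1$ times with sign $-$.
\end{itemize}
Since $c\,(c-1)!\,(n-c)!/n! = 1/\binom{n}{c}$ and $(n-c+1)\,(c-1)!\,(n-c)!/n! = 1/\binom{n}{c-1}$, this gives exactly $\cardshap{c}{v}$ from \Cref{powerindex:cardinality-shapley}. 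This bookkeeping is the only real step; the main risk is misreading $m_{k,R}$, so I would double-check it against the appendix definition of $m_{k,R}$.

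\textbf{Step 2: the witness.} Take the game of \Cref{ex:counterfactuality-failure-upsilon-value} on $\{x_1,x_2,x_3\}$ with the singleton structure. By Step~1 and Proposition~2 of \cite{Upsilon-values2024},
\[
\cardowen{1}{v} = \upsi{1}{v} = \tfrac{1}{3}(1 - 1 + 0) - 0 = 0,
\]
whereas $v(N) - \tfrac{1}{3}\sum_{|S|=1} v(S) = 1 \neq 0$. This contradicts the second bullet of \Cref{principle:cardinality-counterfactuality}, so counterfactuality fails. The failure of quantitative counterfactuality follows by \Cref{theorem:card-quantitative-to-normal}; it can also be seen directly, since $0 \neq 1$.
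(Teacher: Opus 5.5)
Your proof is correct and takes the same route as the paper. The paper also uses the game of \Cref{ex:counterfactuality-failure-upsilon-value} with the singleton structure $M_N = \{\{x_1\},\{x_2\},\{x_3\}\}$, observes that this forces $t=1$ so that $\cardowenempty$ reduces to $\cardshapempty = \upsiempty$, and transfers the violation, with the quantitative case following from \Cref{theorem:card-quantitative-to-normal}; your double-counting (each size-$c$ set counted $c$ times, each size-$(c-1)$ set counted $n-c+1$ times) simply spells out the identity $\cardowen{c}{v} = \cardshap{c}{v}$, which the paper only asserts.
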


%%%%%%%%%%%%%%%%%%%%%%%%%%%%%%%%%%%%%%%%%%
\subsubsection*{Cardinality-based Counterfactual Index}
\label{section:card-based-countrfactual}
%%%%%%%%%%%%%%%%%%%%%%%%%%%%%%%%%%%%%%%%%%%%
%%%
% Cardinality Counterfactual Analysis
%%%
As in the single-player case, the behaviour of the cardinality-based counterfactual index $\cardcounterfactempty$ with respect to the counterfactuality principles is largely due to its duality with the cardinality-based counterfactual principle: it compares the value of the grand coalition with the average value of coalitions of a fixed size.
For this reason, its relationship with the cardinality-based counterfactuality principles is intuitively straightforward.
From this, it follows that the cardinality-based counterfactual satisfies quantitative counterfactuality and the case of counterfactuality is handled by \Cref{theorem:card-quantitative-to-normal}. 
However, the task is trivial since we have that $\cardcounterfact{i}{v}=v(N)-\left(\frac{1}{\binom{|N|}{i}}\sum_{|S|=i}v(S)\right)$, which coincides with the requirement of quantitative counter-factuality.
\begin{theorem}
    The cardinality-based counterfactual index $\cardcounterfactempty$ satisfies the quantitative counterfactuality principle, and hence counterfactuality. $\hfill \square$
\end{theorem}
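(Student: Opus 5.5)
The statement follows directly from \Cref{powerindex:cardinality-counterfactual} and \Cref{theorem:card-quantitative-to-normal}, so the plan has two short steps.

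First, I fix an arbitrary game $\game$ and a cardinality $i \in \{1,\ldots,|N|\}$. I then unfold the definition of the cardinality-based counterfactual value:
\[
\cardcounterfact{i}{v} = v(N) - \frac{1}{\binom{|N|}{i}}\sum_{\substack{S\subseteq N\\ |S|=i}} v(S).
\]
This is literally the right-hand side demanded in \Cref{principle:cardinality-quantitative-counterfactuality}. Since the summation ranges $\{S \subseteq N : |S| = i\}$ and $\{|S| = i\}$ denote the same index set, the equality holds for every game and every $i$. That establishes quantitative counterfactuality.

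Second, I invoke \Cref{theorem:card-quantitative-to-normal} to conclude cardinality-based counterfactuality. For completeness I would also recall why that theorem holds. If $\cardpowerindex{i}{v}$ equals $v(N)$ minus the size-$i$ average, then the payout is negative, zero, or positive exactly when $v(N)$ is respectively less than, equal to, or greater than that average. These are precisely the three cases of \Cref{principle:cardinality-counterfactuality}.

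No real obstacle is expected. The only point needing care is to confirm that the normalisation $\binom{|N|}{i}^{-1}$ and the range of $i$ in the index definition match those in the principle. In particular, the case $i = |N|$ has to be checked: there the payout is $0$, and $v(N)$ equals the size-$|N|$ average, so the zero case of counterfactuality holds.
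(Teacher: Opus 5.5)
Your proposal is correct and follows the same route as the paper: the paper likewise notes that the definition of $\cardcounterfactempty$ coincides with the requirement of cardinality-based quantitative counterfactuality, and then obtains counterfactuality from \Cref{theorem:card-quantitative-to-normal}. Your extra check of the edge case $i = |N|$ is harmless; it is already covered by the general argument.
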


In general, the principle of success is not satisfied by the cardinality-based counterfactual index.
Violating cases arise when the contributions pertaining to a cardinality cancel each other out while calculating the average.   
One instance of this effect is shown in the Example below.
\begin{example}
\label{ex:success-violation-card-counterfact}
    Consider the example of the game $\game$ where $N=\{x_1,x_2\}$ and the following value function:
    \begin{equation*}
        v(\emptyset)=0 \qquad v(\{{x_1}\})=1 \qquad v(\{{x_2}\})=-1 \qquad v(\{{x_1,x_2}\})=0
    \end{equation*}
    Note that $\cardcounterfact{2}{v} = 0$, since $v(S)=0$ for the single $S$ of size $2$.
    Finally,
    $$
    \cardcounterfact{1}{v}=v(\{{x_1, x_2}\})-\frac{1}{\binom{2}{1}}\bigl(v(\{{x_2}\})+v(\{{x_1}\})\bigr) = 0.
    $$
    For cardinality-based counterfactual index to satisfy success, every game must have a cardinality with a non-zero payout.
    Hence, success is clearly not satisfied as this game requires at least one of $\cardcounterfact{1}{v}$ or $ \cardcounterfact{2}{v}$ to be non-zero.
\end{example}
The cardinality-based counterfactual index violates the principle of efficiency. 
In cases where strictly smaller subsets are assigned a higher value than the full feature set, according to the value function, an imbalance arises between the left- and right-hand sides of the efficiency equation.
To illustrate a concrete instance of the above point, consider a variation of \Cref{ex:success-violation-card-counterfact} below.
\begin{example}
\label{ex:efficiency-violation-card-counterfact}
    Consider the example of the game $\game$ where $N=\{x_1,x_2\}$ and the value function $v$ is binary, returning $1$ iff the feature coalition is of size $1$. 
    Formally, it is defined in the following manner:
    \begin{equation*}
        v(\emptyset)=0 \qquad v(\{{x_1}\})=1 \qquad v(\{{x_2}\})=1 \qquad v(\{{x_1,x_2}\})=0
    \end{equation*}
    Note that $\cardcounterfact{2}{v}=0$, since $v(S)=0$ for the single $S$ of size $2$.
    Finally,
    $$
    \cardcounterfact{1}{v}=v(\{{x_1,x_2}\})-\frac{1}{\binom{2}{1}}\bigl(v(\{{x_2}\})+v(\{{x_1}\})\bigr)=-1.
    $$
    Efficiency requires that $v(\{{x_1,x_2}\}) = \cardcounterfact{1}{v}+\cardcounterfact{2}{v}$ holds for the cardinality-based counterfactual index. 
    However, 
    $$
    0 \neq \cardcounterfact{1}{v}+\cardcounterfact{2}{v} = -1.
    $$
    Hence, the index does not satisfy efficiency. 
\end{example}

The cardinality-based counterfactual index also violates the dummy cardinality principle.
The cardinality-based counterfactual index considers the difference between the grand coalition and the average value of all coalitions of a given cardinality, rather than directly measuring the marginal change when the coalition cardinality increases by one.
The following counterexample is based on \Cref{ex:efficiency-violation-card-counterfact}.
In that example, cardinality 1 is a dummy cardinality with $h_1 = 1$ since $v(\{x\}) = v(\emptyset)+1$ for every $x \in \{x_1, x_2\}$.
If the cardinality-based counterfactual index satisfied dummy cardinality, it would follow that $\cardcounterfact{1}{v} = 1$.
However, as calculated above, $\cardcounterfact{1}{v} = -1$.
Hence, the cardinality-based counterfactual index $\cardcounterfactempty$ violates the dummy cardinality principle.

To conclude the analysis, we observe that the cardinality-based counterfactual index satisfies absolute anonymity.
This follows from its definition since it only depends on the grand coalition and the average value of coalitions of a fixed size, not on the identity of the players themselves.
Therefore, renaming players does not affect the resulting payouts.
\begin{restatable}{theorem}{absoluteanonymitycardcounterfactuality}
\label{theorem:absolute-anonymity-card-counterfactuality}
    The cardinality-based counterfactual index $\cardcounterfactempty$ satisfies absolute anonymity.  
\end{restatable}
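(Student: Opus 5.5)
The claim is that the cardinality-based counterfactual index $\cardcounterfactempty$ satisfies absolute anonymity (Principle~\ref{principle:absolute-anonymity-cardinality}). I will prove it by direct computation from \Cref{powerindex:cardinality-counterfactual}. Fix a game $\game$, a permutation $\pi$ of $N$, and a cardinality $c \in \{1,\dots,|N|\}$. The permuted game is $(N, v_\pi)$, defined by $v_\pi(\pi(S)) = v(S)$ for every $S \subseteq N$, as in the discussion of weak anonymity. The goal is to show $\cardcounterfact{c}{v_\pi} = \cardcounterfact{c}{v}$.

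\textbf{Grand-coalition term.} Since $\pi$ is a bijection on $N$, we have $\pi(N) = N$. Hence $v_\pi(N) = v_\pi(\pi(N)) = v(N)$.

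\textbf{Averaged term.} The set map $S \mapsto \pi(S)$ is a bijection of $2^N$, and it preserves cardinality: $|\pi(S)| = |S|$ because $\pi$ is injective. So it restricts to a bijection of $\{S \subseteq N : |S| = c\}$ onto itself. Reindexing the sum with $T = \pi(S)$ gives
\[
    \sum_{\substack{T \subseteq N\\ |T| = c}} v_\pi(T) \;=\; \sum_{\substack{S \subseteq N\\ |S| = c}} v_\pi(\pi(S)) \;=\; \sum_{\substack{S \subseteq N\\ |S| = c}} v(S).
\]
The normalising factor $\binom{|N|}{c}$ depends only on $|N|$ and $c$, so it is the same for both games.

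\textbf{Conclusion.} Substituting both identities into \Cref{powerindex:cardinality-counterfactual} yields $\cardcounterfact{c}{v_\pi} = \cardcounterfact{c}{v}$ for every $c$, which is exactly absolute anonymity.

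There is no real obstacle here. The only point needing care is that reindexing by $\pi$ maps size-$c$ coalitions bijectively onto size-$c$ coalitions, and that follows immediately from $\pi$ being a bijection. Unlike the Owen case, there is no a priori block structure to rename alongside the players.
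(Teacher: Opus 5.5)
Your proof is correct and follows essentially the same route as the paper. Both arguments show $v_\pi(N)=v(N)$ and use the fact that $S\mapsto\pi(S)$ maps the size-$c$ coalitions bijectively onto themselves, so the averaged term is unchanged. You also state explicitly that the factor $\binom{|N|}{c}$ is the same for both games, which the paper's chain of equivalences leaves out.
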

%

%%%%%%%%%%%%%%%%%%%%%
\section{Illustrative Examples}
\label{section:examples}
%%%%%%%%%%%%%%%%%%%%%
We complement the principle-based analysis by showing how different power indices can yield distinct, meaningful explanations even when they appear closely related on paper, using simple classifiers as examples.
The goal is not to ``pick a winner'', but to build intuition for what kinds of questions need to be answered when choosing an index.
In particular, we want to understand which coalition contexts an index implicitly treats as plausible, what it rewards, and how these choices interact with the structure of the underlying model.

We focus on a few recurring phenomena that are easy to overlook when only considering axioms or closed-form definitions.
First, indices that share many classical properties can still disagree on induced rankings, even in small examples, which matters whenever explanations are used to prioritise features, allocate responsibility, or justify interventions.
Second, when there is prior structure in how coalitions form, for example, through feature groups, dependencies, or natural evidence channels, indices that explicitly model such structure can yield qualitatively different attributions than indices that treat all coalitions as equally plausible, or than simply collapsing a group into a single unified player.
The implementation of our examples is openly available on GitHub at: \url{https://github.com/filipnaudot/powerXAI}.

%%%%%%%%%%%%%
\subsection{Ranking Discrepancies}
\label{section:ranking-example}
%%%%%%%%%%%%%
\begin{figure}[ht!]
  \centering
  \includegraphics[width=0.5\linewidth]{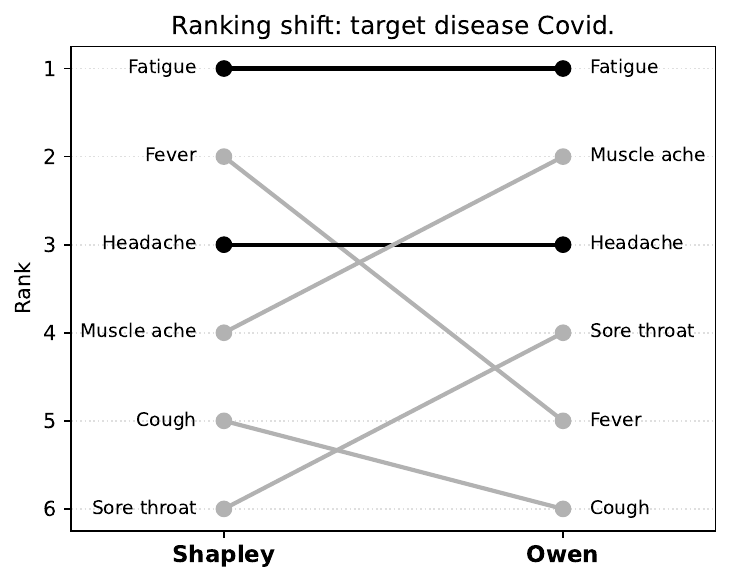}
  \caption{Ranking shift between the Shapley value and the Owen value on a disease prediction example.}
  \label{fig:shapley-banzhaf-ranking}
\end{figure}

Different power indices applied to the same model may induce different rankings of feature importance \cite{surprising-properties2001Saari}.
To illustrate this, consider a small disease-classification model with symptom features such as \emph{fever}, \emph{cough}, \emph{headache}, \emph{muscle ache}, \emph{sore throat}, and \emph{fatigue}.
For a fixed target class, we define the value of a feature coalition as the model's predicted probability for that class when exactly those features are present (cf.~\cite{lundberg2017SHAP}).
This naturally turns the model output into a cooperative game.

This example contains both individual effects and feature interactions.
Some symptoms raise the target prediction on their own, while others become important mainly through combinations with related symptoms.
Already at this level, it becomes clear that ``importance'' can be interpreted in different ways.
An index that emphasises how often a feature is decisive across coalition contexts may rank one symptom highly, while another index that aggregates contributions differently may place more weight on a symptom whose effect is spread across several interactions.
As a result, two indices can agree that the same small set of symptoms matters overall, yet still disagree on their relative ordering.

This illustrates that the choice of index is not merely a technical detail.
Even when two indices are conceptually similar, they can emphasise different aspects, thereby affecting the features' ranking.
\Cref{fig:shapley-banzhaf-ranking} illustrates this phenomenon for the disease example, where the induced rankings from the Shapley and Owen values differ.
When explanations are used to prioritise features, compare interventions, or justify decisions, these differences matter.
To judge whether a ranking is meaningful, one therefore has to look not only at the model, but also at the principles built into the chosen index and at whether those principles match the structure of the problem.

%%%%%%%%%%%%%
\subsection{When Group Structure Changes Attribution}
\label{section:coalition-example}
%%%%%%%%%%%%%
In many machine-learning settings, features do not act as isolated pieces of information, that is, they can be correlated.
Some features belong naturally together, either because they come from the same source or because they describe the same underlying phenomenon.
In such cases, treating all coalitions as equally likely can obscure influential features rather than reveal them.
The Owen value addresses this by incorporating an a priori coalition structure into the attribution process, and the difference in the resulting attributions can be substantial.
Crucially, as we will see, this is not the same as simply grouping features together under a set-based index.

To illustrate this, consider a fraud-detection setting where a classifier uses signals such as \emph{transaction amount}, \emph{odd hour}, \emph{location mismatch}, \emph{foreign IP}, and \emph{VPN usage} to predict different fraud types.
We create a simple classifier and use synthetic data (available in the GitHub repository).
For the \emph{Account Takeover} class, some of these signals naturally form a block: location, foreign IP, and VPN usage all describe the login environment and together point to the same underlying concern, namely that the account is accessed from a suspicious context.
This preference structure induces an a priori partition in which these features form one group, while the remaining fraud signals form other parts of the game.
We model the explanation problem as a cooperative game, where the value of a coalition reflects the model's output for the target class.

If we ignore this structure, a standard single-player attribution treats all features as separate contributors.
Once we define the grouping information, however, the Owen value tells a more nuanced story: attribution is first distributed across groups, and then split within each group.
In this way, the explanation reflects not only whether a feature is useful on its own, but also that some features belong to the same evidence channel, as illustrated in \Cref{fig:shapley-vs-owen}.
\begin{figure}[h]
    \centering
    \begin{subfigure}[t]{0.49\linewidth}
        \includegraphics[width=\linewidth]{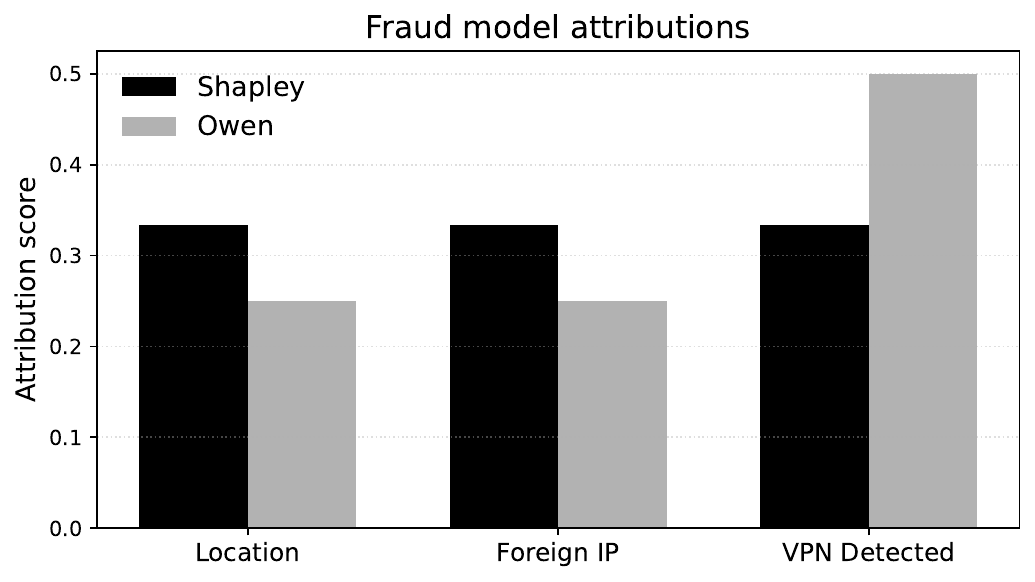}
        \caption{Standard versus group-aware attributions for the fraud example.
        Once the natural grouping of related signals is taken into account, the allocation of credit changes.}
        \label{fig:shapley-vs-owen}
    \end{subfigure}
    \hfill
    \begin{subfigure}[t]{0.49\linewidth}
        \includegraphics[width=\linewidth]{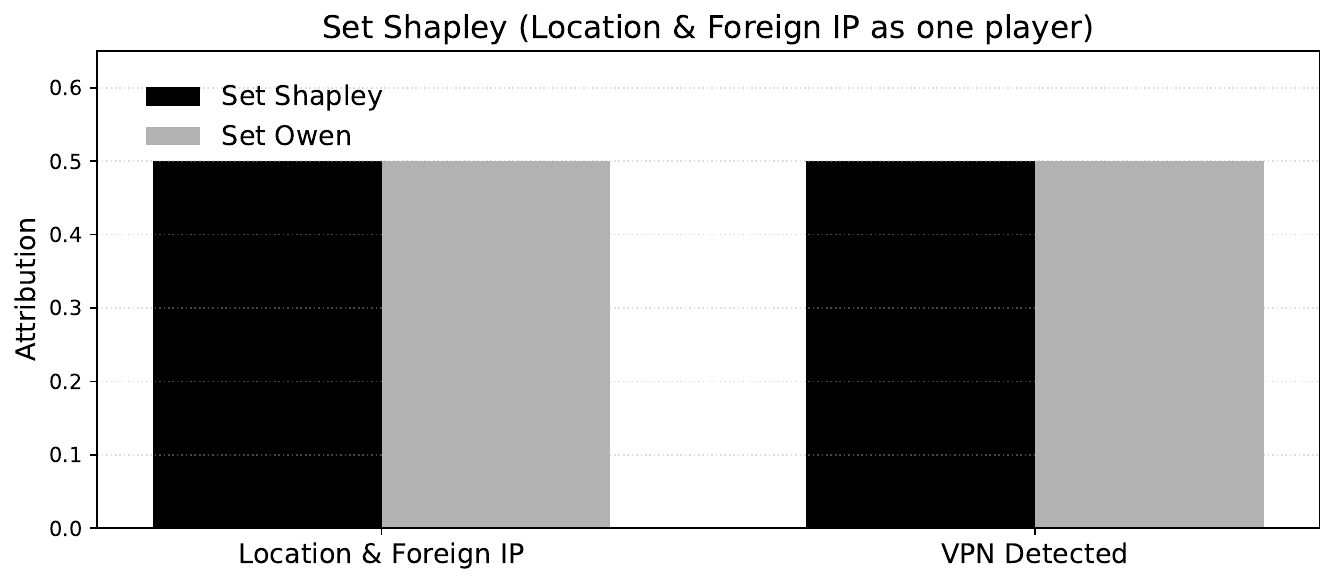}
        \caption{Set-based attributions when the grouped fraud signals are treated as a single composite player.
        The group's contribution is retained, but attribution to individual players within the group is lost.}
        \label{fig:setshapley-vs-setowen}
    \end{subfigure}
    \caption{Comparison of attribution schemes for the fraud-detection example.}
    \label{fig:comparison}
\end{figure}

One might wonder whether the same effect could be achieved more simply by treating the grouped signals as a single composite player and applying a standard set-based value.
However, doing so answers a different question.
A set-based index can tell us how much the grouped evidence contributes as a whole, but it does not attempt to distribute credit among the individual features inside that group.
The Owen value, by contrast, respects both levels simultaneously: it accounts for the structure between groups while still attributing credit to individual features within them.
When one instead turns to a set-based formulation and interprets the suspicious login signals as a single player, the internal dynamics are ignored, as illustrated in \Cref{fig:setshapley-vs-setowen}.

This difference in attribution has a direct practical implication:
If the goal is to understand which individual signals the model is most sensitive to, a standard single-player index may be appropriate.
If the goal is to reflect that several signals belong to the same coherent evidence channel, then an Owen-style attribution is more appropriate.
Alternatively, if we only care about the contribution of the grouped channel as a whole, then a set-based index may be sufficient.
Different power indices do not merely produce different numbers; they answer related, but fundamentally different questions.

To make this even clearer, one can compute the $\upsiempty$-value and ask how large a coalition must be for the model to reach a strong prediction for the target class.
In this example, the $\upsiempty$-value highlights that the model's strongest evidence emerges only when enough of the relevant fraud signals are present together, as illustrated in \Cref{fig:upsilon-vs-shapley-and-owen}.

\begin{figure}[ht!]
    \centering
    \includegraphics[width=0.6\linewidth]{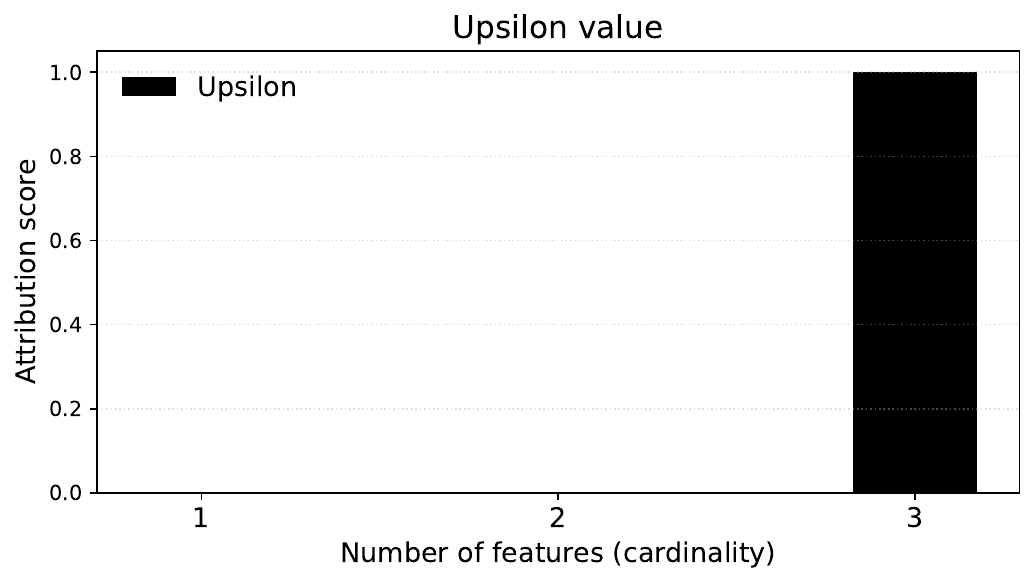}
    \caption{$\upsiempty$-value attribution for the fraud example.
        The $\upsiempty$-value highlights the coalition size at which the model's prediction becomes most valuable.}
    \label{fig:upsilon-vs-shapley-and-owen}
\end{figure}

%%%%%%%%%%%%%%%%%%%%%%%%%%%%%%%%
\section{Discussion}
\label{section:discussion}
%%%%%%%%%%%%%%%%%%%%%%%%%%%%%%%%
This paper set out to make the topic of power indices easier to navigate. 
There are many power indices with different mathematical properties, and it is not always obvious which index to use given a specific explainability scenario.
Rather than declaring a winner, we hope our examples and analyses provide researchers and practitioners with a clearer map of the terrain.
A useful place to start is with a question that precedes any choice of index: \emph{what is a player in our problem?}
This might seem like a technical detail, but it shapes everything that follows. 
If what matters is to understand the contribution of individual features, for instance, to rank them by importance or to decide which ones to keep, then single-player or set-based indices are the natural fit. 
These two power-index dimensions have the benefit that the same axioms hold, since grouping features can be seen as a simple relabelling.
In practice, the choice should be guided by what the explanation is meant to convey.
Use single-player indices when the units of interest are meaningful on their own, and set-based indices when it is only meaningful to work with coherent bundles of features that would naturally be interpreted or removed together.
Thus, feature groups should be treated as sets only when the grouping reflects a real semantic or operational unit, not merely because several quantities can technically be encoded into one player.
If, instead, the focus lies less on which features matter and more on how many tend to matter, for example, whether a model's predictions are typically driven by a few dominant features versus all features are similarly predictive, then the cardinality-based perspective may be more informative.
The $\upsiempty$-value, for instance, does not ask ``how important is feature X?'' but rather ``how much in general do coalitions improve when they grow?'' 
These are fundamentally different questions, and it is worth being deliberate about which one we are actually asking.

Once the player dimension is set, the next consideration is what properties matter in the specific use case.
The tables in \Cref{section:principle-analysis} are intended to support this decision.
A natural starting point is counterfactuality, the idea that a feature's attribution score should agree with what happens when it is removed from the full set of features. 
If removing a feature changes the outcome in one direction, the power index score should reflect this.
Only the counterfactual index guarantees this. 
It is worth noting, however, that what the value function actually measures is not always obvious in practice. 
In machine learning classifiers and neural networks, it typically captures change rather than quality, so the scores are relative quantities and should be interpreted with that in mind.
The Shapley and Banzhaf indices can, in principle, assign a score that does not align with the leave-one-out effect, because they average contributions across all coalition contexts rather than focusing on that single scenario.
Whether this is a problem depends on whether the attributions should be used to make direct decisions about individual features.
If so, then the counterfactual may be the appropriate index to choose.
If, however, the goal is a broader picture of how features contribute across all possible subsets, the averaging behaviour of the Shapley or Banzhaf indices may be more informative. 
Counterfactuality and several other desirable properties are not simultaneously satisfied by any single index. 
A practitioner has to decide which properties are most important in their context---a choice that often depends on factors such as how the results will be used and what kind of analysis the application requires.

Even when two indices agree on the overall structure of a problem, they can disagree on the relative ordering of players.
Since feature importance analyses are often used to rank features, prioritise them for further investigation, remove the least important ones, or explain a decision, this disagreement is not purely theoretical. 
Before committing to an index, it is worth asking: \emph{what would change if we used a different power index?} 
If the top-ranked feature shifts depending on the choice of index, this is a signal worth taking seriously, not necessarily a reason to distrust the indices, but a reason to understand what each one is actually measuring. 

The comparison in \Cref{section:examples} between the set-based Owen value and the set-based Shapley value highlights a more nuanced decision.
When features naturally group together, perhaps because they come from the same data source, encode the same concept, or are known to be strongly related, there are at least two ways to handle this. 
One option is to treat the group as a single composite player and apply a set-based index; this captures how much the group as a whole contributes, but says nothing specifically about the individuals within it.
Another option is to use the Owen value, which respects the group structure while still attributing credit to individual players.
The two approaches answer different questions (see \Cref{section:coalition-example}), and collapsing a group of features into a single player, while sometimes convenient, obscures the finer-grained distinctions within the group, a trade-off that may or may not be acceptable depending on the goal.
Finally, the move to cardinality-based indices deserves a comment.
As we showed in \Cref{section:cardinality-power-indices}, moving to cardinality-based indices removes distinctions that depend on player identity, but there are still meaningful differences between cardinality indices.
In particular, the cardinality-based Shapley value coincides with the $\upsiempty$-value, whereas the cardinality-based Banzhaf value $\cardbanzhafempty$ remains distinct because it applies a cardinality-dependent weighting to the same adjacent-cardinality changes.
In one sense, this is reassuring since the cardinality setting is coarser with respect to player identities.
In another sense, it still leaves meaningful choices about how changes across coalition sizes should be weighted.
By abstracting away player identity, it gives up the ability to distinguish between coalitions that happen to have the same size but different compositions.
Whether that abstraction is appropriate, a strength, or a weakness depends, again, on the question being asked.

Recent critiques of Shapley value-based explainability show that some XAI applications of Shapley values can conflict with formal notions of feature relevance~\cite{explainability_is_not_a_game}.
However, subsequent work \cite{updated_exp_is_NOT_a_game} clarifies that these failures need not be attributed to the Shapley value itself, but to the characteristic function used to define the underlying cooperative game.
This distinction is central to our perspective, which is that power indices provide suitable explanations only when the underlying game is defined to match the intended explanatory question.
That is, only after one specifies the players, the value function, the attribution dimension, and the principles one expects the resulting scores to satisfy.

Taken together, the analysis in this paper suggests that the choice of power index is best understood as a decision driven by the structure of the explanation problem and the characteristics that are important in the problem's context. 
The principles introduced in \Cref{section:principles} are intended to make that choice more explicit and better informed.
Accordingly, the scope of this paper is conceptual and axiomatic rather than algorithmic.
We do not study the computational complexity of the indices extensively, nor approximation and heuristic methods for computing them more efficiently.
We also do not aim to survey full application pipelines or explainability frameworks beyond power-index-based approaches.
Our contribution is to clarify which kinds of attribution questions different power indices answer, and which formal principles they satisfy.

%%%%%%%%%%%%%%%%%%%%%%%%%%%%%%%%
\section{Conclusion}
\label{section:conclusion}
%%%%%%%%%%%%%%%%%%%%%%%%%%%%%%%%
Power indices offer a principled way to formalise attribution in XAI, but their interpretation depends on the dimension at which power is assigned, that is, single players, sets of players, or cardinalities.
This distinction is intended as a step toward a more explicit use of power indices in explainability and highlights that different power indices do not merely provide alternative numerical scores but often answer different attribution questions.
We have reviewed representative power indices for these dimensions, generalised existing indices, and analysed them through a set of well-established principles.
Our analysis highlights, also supported by intuitive examples, that there is no single best index.
Instead, the choice of power index depends on the explanation task.
The cardinality-based setting, inspired by~\cite{Upsilon-values2024}, also shows that removing player identity changes the nature of the attribution problem. Rather than asking which players or sets matter, cardinality-based indices ask how the game value changes as coalition size increases.
Our guide can be used before choosing and applying a power index, as a way to rule out indices that do not match the problem at hand.
This makes the choice of power index an explicit part of the problem setup, rather than an often arbitrary design decision in an explanation generation pipeline.
The guide can also help users make sense of cases where indices lead to varying outcomes, since the indices may be answering different questions.

Future work may extend our analysis in several directions.
One such direction is by considering aspects pertaining to computational costs, and another is to examine concrete explanation pipelines.

%%%%%%%%%%%%%%%%%%%%%%%%%%%%%%%%
\subsubsection*{Acknowledgments}
\label{section:acknowledgments}
%%%%%%%%%%%%%%%%%%%%%%%%%%%%%%%%
This work was partially supported by the Wallenberg AI, Autonomous Systems and Software Program (WASP) funded by the Knut and Alice Wallenberg Foundation.

% \bibliography{ref}
% \bibliographystyle{unsrtnat}

% To print the credit authorship contribution details
\printcredits

%% Loading bibliography style file
%\bibliographystyle{model1-num-names}
\bibliographystyle{cas-model2-names}

% Loading bibliography database
\bibliography{ref}

\begin{thebibliography}{18}
\expandafter\ifx\csname natexlab\endcsname\relax\def\natexlab#1{#1}\fi
\providecommand{\url}[1]{\texttt{#1}}
\providecommand{\href}[2]{#2}
\providecommand{\path}[1]{#1}
\providecommand{\DOIprefix}{doi:}
\providecommand{\ArXivprefix}{arXiv:}
\providecommand{\URLprefix}{URL: }
\providecommand{\Pubmedprefix}{pmid:}
\providecommand{\doi}[1]{\href{http://dx.doi.org/#1}{\path{#1}}}
\providecommand{\Pubmed}[1]{\href{pmid:#1}{\path{#1}}}
\providecommand{\bibinfo}[2]{#2}
\ifx\xfnm\relax \def\xfnm[#1]{\unskip,\space#1}\fi
%Type = Inproceedings
\bibitem[{Amgoud and Ben-Naim(2022)}]{Leila2022-axiomatic-foundations}
\bibinfo{author}{Amgoud, L.}, \bibinfo{author}{Ben-Naim, J.}, \bibinfo{year}{2022}.
\newblock \bibinfo{title}{Axiomatic foundations of explainability}, in: \bibinfo{editor}{Raedt, L.D.} (Ed.), \bibinfo{booktitle}{Proceedings of the Thirty-First International Joint Conference on Artificial Intelligence, {IJCAI-22}}, \bibinfo{publisher}{International Joint Conferences on Artificial Intelligence Organization}. pp. \bibinfo{pages}{636--642}.
\newblock \URLprefix \url{https://doi.org/10.24963/ijcai.2022/90}, \DOIprefix\doi{10.24963/ijcai.2022/90}. \bibinfo{note}{main Track}.
%Type = Article
\bibitem[{Banzhaf(1965)}]{Banzhaf1965}
\bibinfo{author}{Banzhaf, J.}, \bibinfo{year}{1965}.
\newblock \bibinfo{title}{Weighted voting doesn't work: {A} mathematical analysis}.
\newblock \bibinfo{journal}{Rutgers Law Review} \bibinfo{volume}{19}, \bibinfo{pages}{317--343}.
%Type = Article
\bibitem[{Chen et~al.(2023)Chen, Covert, Lundberg and Lee}]{Chen2023}
\bibinfo{author}{Chen, H.}, \bibinfo{author}{Covert, I.C.}, \bibinfo{author}{Lundberg, S.M.}, \bibinfo{author}{Lee, S.I.}, \bibinfo{year}{2023}.
\newblock \bibinfo{title}{Algorithms to estimate shapley value feature attributions}.
\newblock \bibinfo{journal}{Nature Machine Intelligence} \bibinfo{volume}{5}, \bibinfo{pages}{590--601}.
\newblock \DOIprefix\doi{10.1038/s42256-023-00657-x}.
%Type = Article
\bibitem[{Dubey and Shapley(1979)}]{mathematical_properties_Banzhaf_value}
\bibinfo{author}{Dubey, P.}, \bibinfo{author}{Shapley, L.S.}, \bibinfo{year}{1979}.
\newblock \bibinfo{title}{Mathematical properties of the banzhaf power index}.
\newblock \bibinfo{journal}{Mathematics of Operations Research} \bibinfo{volume}{4}, \bibinfo{pages}{99--131}.
\newblock \URLprefix \url{http://www.jstor.org/stable/3689345}.
%Type = Article
\bibitem[{Grabisch and Roubens(1999)}]{Grabisch1999}
\bibinfo{author}{Grabisch, M.}, \bibinfo{author}{Roubens, M.}, \bibinfo{year}{1999}.
\newblock \bibinfo{title}{An axiomatic approach to the concept of interaction among players in cooperative games}.
\newblock \bibinfo{journal}{International Journal of Game Theory} \bibinfo{volume}{28}, \bibinfo{pages}{547--565}.
\newblock \DOIprefix\doi{10.1007/s001820050125}.
%Type = Article
\bibitem[{Kampik et~al.(2024)Kampik, Potyka, Yin, Cyras and Toni}]{KampikPYCT24-principle}
\bibinfo{author}{Kampik, T.}, \bibinfo{author}{Potyka, N.}, \bibinfo{author}{Yin, X.}, \bibinfo{author}{Cyras, K.}, \bibinfo{author}{Toni, F.}, \bibinfo{year}{2024}.
\newblock \bibinfo{title}{Contribution functions for quantitative bipolar argumentation graphs: {A} principle-based analysis}.
\newblock \bibinfo{journal}{Int. J. Approx. Reason.} \bibinfo{volume}{173}, \bibinfo{pages}{109255}.
\newblock \URLprefix \url{https://doi.org/10.1016/j.ijar.2024.109255}, \DOIprefix\doi{10.1016/J.IJAR.2024.109255}.
%Type = Inproceedings
\bibitem[{Lundberg and Lee(2017)}]{lundberg2017SHAP}
\bibinfo{author}{Lundberg, S.M.}, \bibinfo{author}{Lee, S.I.}, \bibinfo{year}{2017}.
\newblock \bibinfo{title}{A unified approach to interpreting model predictions}, in: \bibinfo{booktitle}{Proceedings of the 31st International Conference on Neural Information Processing Systems}, \bibinfo{publisher}{Curran Associates Inc.}, \bibinfo{address}{Red Hook, NY, USA}. p. \bibinfo{pages}{4768–4777}.
%Type = Article
\bibitem[{Marques-Silva and Huang(2024)}]{explainability_is_not_a_game}
\bibinfo{author}{Marques-Silva, J.}, \bibinfo{author}{Huang, X.}, \bibinfo{year}{2024}.
\newblock \bibinfo{title}{Explainability is not a game}.
\newblock \bibinfo{journal}{Commun. ACM} \bibinfo{volume}{67}, \bibinfo{pages}{66–75}.
\newblock \URLprefix \url{https://doi.org/10.1145/3635301}, \DOIprefix\doi{10.1145/3635301}.
%Type = Misc
\bibitem[{Marques-Silva et~al.(2025)Marques-Silva, Huang and Letoffe}]{updated_exp_is_NOT_a_game}
\bibinfo{author}{Marques-Silva, J.}, \bibinfo{author}{Huang, X.}, \bibinfo{author}{Letoffe, O.}, \bibinfo{year}{2025}.
\newblock \bibinfo{title}{The explanation game -- rekindled (extended version)}.
\newblock \URLprefix \url{https://arxiv.org/abs/2501.11429}, \href{http://arxiv.org/abs/2501.11429}{\tt arXiv:2501.11429}.
%Type = Masterthesis
\bibitem[{Naudot(2025)}]{naudot2025scalable}
\bibinfo{author}{Naudot, F.}, \bibinfo{year}{2025}.
\newblock \bibinfo{title}{Scalable Feature Attribution in LLM-Based Recommender Systems}.
\newblock \bibinfo{type}{Degree project in computing science and engineering}. Umea University.
\newblock \URLprefix \url{https://urn.kb.se/resolve?urn=urn:nbn:se:umu:diva-239871}. \bibinfo{note}{diVA, id: diva2:1966000}.
%Type = Misc
\bibitem[{Naudot et~al.(2025)Naudot, Sundqvist and Kampik}]{naudot2025llmshap}
\bibinfo{author}{Naudot, F.}, \bibinfo{author}{Sundqvist, T.}, \bibinfo{author}{Kampik, T.}, \bibinfo{year}{2025}.
\newblock \bibinfo{title}{llmshap: A principled approach to llm explainability}.
\newblock \URLprefix \url{https://arxiv.org/abs/2511.01311}, \href{http://arxiv.org/abs/2511.01311}{\tt arXiv:2511.01311}.
%Type = Inproceedings
\bibitem[{Owen(1977)}]{owen1977}
\bibinfo{author}{Owen, G.}, \bibinfo{year}{1977}.
\newblock \bibinfo{title}{Values of games with a priori unions}, in: \bibinfo{editor}{Henn, R.}, \bibinfo{editor}{Moeschlin, O.} (Eds.), \bibinfo{booktitle}{Mathematical Economics and Game Theory}, \bibinfo{publisher}{Springer Berlin Heidelberg}, \bibinfo{address}{Berlin, Heidelberg}. pp. \bibinfo{pages}{76--88}.
%Type = Inproceedings
\bibitem[{Rozemberczki et~al.(2022)Rozemberczki, Watson, Bayer, Yang, Kiss, Nilsson and Sarkar}]{ijcai2022p778}
\bibinfo{author}{Rozemberczki, B.}, \bibinfo{author}{Watson, L.}, \bibinfo{author}{Bayer, P.}, \bibinfo{author}{Yang, H.T.}, \bibinfo{author}{Kiss, O.}, \bibinfo{author}{Nilsson, S.}, \bibinfo{author}{Sarkar, R.}, \bibinfo{year}{2022}.
\newblock \bibinfo{title}{The shapley value in machine learning}, in: \bibinfo{editor}{Raedt, L.D.} (Ed.), \bibinfo{booktitle}{Proceedings of the Thirty-First International Joint Conference on Artificial Intelligence, {IJCAI-22}}, \bibinfo{publisher}{International Joint Conferences on Artificial Intelligence Organization}. pp. \bibinfo{pages}{5572--5579}.
\newblock \DOIprefix\doi{10.24963/ijcai.2022/778}. \bibinfo{note}{survey Track}.
%Type = Article
\bibitem[{Saari and Sieberg(2001)}]{surprising-properties2001Saari}
\bibinfo{author}{Saari, D.G.}, \bibinfo{author}{Sieberg, K.K.}, \bibinfo{year}{2001}.
\newblock \bibinfo{title}{Some surprising properties of power indices}.
\newblock \bibinfo{journal}{Games and Economic Behavior} \bibinfo{volume}{36}, \bibinfo{pages}{241--263}.
\newblock \URLprefix \url{https://www.sciencedirect.com/science/article/pii/S0899825600908194}, \DOIprefix\doi{https://doi.org/10.1006/game.2000.0819}.
%Type = Incollection
\bibitem[{Shapley(1953)}]{shapley1953}
\bibinfo{author}{Shapley, L.S.}, \bibinfo{year}{1953}.
\newblock \bibinfo{title}{A value for n-person games}, in: \bibinfo{editor}{Kuhn, H.W.}, \bibinfo{editor}{Tucker, A.W.} (Eds.), \bibinfo{booktitle}{Contributions to the Theory of Games II}. \bibinfo{publisher}{Princeton University Press}, \bibinfo{address}{Princeton}, pp. \bibinfo{pages}{307--317}.
%Type = Article
\bibitem[{Shapley and Shubik(1954)}]{shapley-shubik1954}
\bibinfo{author}{Shapley, L.S.}, \bibinfo{author}{Shubik, M.}, \bibinfo{year}{1954}.
\newblock \bibinfo{title}{A method for evaluating the distribution of power in a committee system}.
\newblock \bibinfo{journal}{The American Political Science Review} \bibinfo{volume}{48}, \bibinfo{pages}{787--792}.
\newblock \URLprefix \url{http://www.jstor.org/stable/1951053}.
%Type = Inproceedings
\bibitem[{Sundararajan and Najmi(2020)}]{pmlr-v119-sundararajan20b}
\bibinfo{author}{Sundararajan, M.}, \bibinfo{author}{Najmi, A.}, \bibinfo{year}{2020}.
\newblock \bibinfo{title}{The many shapley values for model explanation}, in: \bibinfo{editor}{III, H.D.}, \bibinfo{editor}{Singh, A.} (Eds.), \bibinfo{booktitle}{Proceedings of the 37th International Conference on Machine Learning}, \bibinfo{publisher}{PMLR}. pp. \bibinfo{pages}{9269--9278}.
%Type = Article
\bibitem[{Torra(2024)}]{Upsilon-values2024}
\bibinfo{author}{Torra, V.}, \bibinfo{year}{2024}.
\newblock \bibinfo{title}{{$\Upsilon$-Values: Power Indices à La Orness for Nonadditive Measures}}.
\newblock \bibinfo{journal}{IEEE Transactions on Fuzzy Systems} \bibinfo{volume}{32}, \bibinfo{pages}{4099--4108}.
\newblock \DOIprefix\doi{10.1109/TFUZZ.2024.3392268}.

\end{thebibliography}

% Biography
%\bio{}
% Here goes the biography details.
%\endbio

%\bio{pic1}
% Here goes the biography details.
%\endbio

 \clearpage
%%%%%%%%%%%%%%%%%%%%%%%%%%%%%%%%%%%%%%%%%%%%%%%%%%%%
\appendix
\section*{Appendix}
\label{section:appendix}
%%%%%%%%%%%%%%%%%%%%%%%%%%%%%%%%%%%%%%%%%%%%%%%%%%%%

%%%%%%%%%%%%%%%%%%%%%%%%%%%%%%%%%%%%%%%%%%%
% BEGINNING PRINCIPLE-BASED ANALYSIS PROOFS
%%%%%%%%%%%%%%%%%%%%%%%%%%%%%%%%%%%%%%%%%%%
\section{Principle-based Analysis Proofs}\label{appendix:principle-based-analysis}
This part of the appendix contains the explicit proofs of the results stated in Section~\ref{section:principle-analysis}.
We restate the respective theorems and propositions before their proofs. 
\quanimpliesimple*
\begin{proof}
    Suppose a power index $\powerindexempty$ satisfies quantitative counterfactuality.
    As per the definition,
    \begin{equation}
    \label{eq:counterfactuality-condition}
        \mu_i(v)=v(N)-v(N\setminus\{i\})
    \end{equation} for any game $(N,v)$. 
    Now, considering three distinct cases based on the positivity of the counterfactual power index $\counterfact{i}{v}$ yields:
    \begin{itemize}
        \item 
        If $\mu_i(v)>0$, then by \Cref{eq:counterfactuality-condition} $v(N)-v(N\setminus\{i\})>0$, i.e. 
        $v(N)>v(N\setminus\{i\})$,
        \item 
        If $\mu_i(v)=0$, then by \Cref{eq:counterfactuality-condition} $v(N)-v(N\setminus\{i\})=0$, i.e. 
        $v(N)=v(N\setminus\{i\})$,
        \item 
        If $\mu_i(v)<0$, then by \Cref{eq:counterfactuality-condition} $v(N)-v(N\setminus\{i\})<0$, i.e. 
        $v(N)<v(N\setminus\{i\})$.
    \end{itemize}
    The three conditions above are precisely the ones required for counterfactuality to hold.
    This closes the proof. 
\end{proof}
\counterfactualnonstandardpositive*
\begin{proof}
    We prove that the counterfactual power index satisfies quantitative counterfactuality, then, by \Cref{proposition:quan-implies-simple} counterfactuality follows.
    For the proof of the quantitative part, notice that the definition of counterfactual power index $\counterfact{i}{v} = v(N) - v(N \setminus \{i\})$ coincides with that of quantitative counterfactuality in Principle~\ref{principle:quantitative-counterfactuality}. 
\end{proof}
\counterfactualpositive*
\begin{proof}
    For proving that counterfactuality satisfies symmetry, assume that players $i,j$ satisfy $v(S\cup \{i\})=v(S\cup \{j\})$ for every $S\subseteq N\setminus\{i,j\}$, for some arbitrary game $(N,v)$.
    The assumption implies: 
    \begin{equation}
    \label{eq:symmetry-condition}
        v(N\setminus\{i\})=v(N\setminus\{j\}).
    \end{equation}
    Now, we need to show that $\counterfact{i}{v} =\counterfact{j}{v}$; towards that goal,
    expand the definition of $\counterfact{i}{v}$, which yields: 
    \begin{align*}
        \counterfact{i}{v}
        &= v(N) - v(N \setminus\{i\})\\
        &= v(N) - v(N \setminus\{j\}) = \counterfact{j}{v}, 
    \end{align*}
    where the second equality follows from \Cref{eq:symmetry-condition}.
    This closes the proof for symmetry.

    For the case of a null player, suppose that there is a null player $i$ in the game $(N,v)$, that is,
    $v(S \cup \{i\}) = v(S)$ for every $S \subseteq N \setminus \{i\}$.
    Now, this assumption implies $v(N) = v(N \setminus \{i\})$, or equivalently, 
    \begin{equation}
    \label{eq:counterfactuality-null-player-condition}
        v(N) - v(N \setminus \{i\}) = 0.
    \end{equation}
    Since $\counterfact{i}{v} = v(N) - v(N \setminus \{i\})$, \Cref{eq:counterfactuality-null-player-condition} implies $\counterfact{i}{v} = 0$. 
    This closes the proof for the null player.

    Finally, we consider the case for weak anonymity. 
    Consider any permutation $\pi$ over the set of players $N$, for the game $(N,v)$. It is immediate that:
    \begin{equation}
    \label{eq:permutation-condition}
        \hat{\pi}(N)=N\hspace{1 em}\text{ and }\hspace{1 em}\hat{\pi}(N\setminus\{i\})=N\setminus\{\pi(i)\}.
    \end{equation}
    Therefore, for any player $i\in N$ we have the following chain of equality from $\counterfact{\pi(i)}{v_\pi}$ to $\counterfact{i}{v}$. 
    \begin{align*}
        \counterfact{\pi(i)}{v_\pi}
        &= v_\pi(\hat\pi(N))-v_\pi(\hat\pi(N)\setminus\{\pi(i)\})
        \\
        &= v(N)-v_\pi(\hat\pi(N\setminus \{i\}))
        \\
        &= v(N) - v(N\setminus\{i\}) = \counterfact{i}{v},
    \end{align*}
    where the second equality is by \Cref{eq:permutation-condition}. This closes the proof. 
\end{proof}
\simplestcaseowen*
\begin{proof}
    Let $(N,v)$ be a game with $N = \{1,\dots,m\}$ and with only singleton blocks $M_N = \{S^1, \dots, S^m\}$, i.e. $|S^j| = 1$ for every $1 \leq j \leq m$. 
    WLOG assume that $S^j = \{j\}$ and remember the general definition of Owen values:
    \begin{equation}
    \label{eq:owen-values-case-study}
        \owen{i}{v} = \sum_{T \subseteq M_N \setminus \{S^k\}} \sum_{K \subseteq S^k \setminus \{i\}} \frac{|K|! \left(|S^k| - |K| - 1\right)! |T|! \left(|M_N| - |T| - 1\right)!}{|S^k|! |M_N|!} \left( v\left(Q \cup K \cup \left\{i \right\}\right) - v\left(Q \cup K \right) \right).
    \end{equation}
    Now, we refine Equation~\ref{eq:owen-values-case-study} as per our case.
    Since each $S^j$ is singleton, any $K \subseteq S^k \setminus \{i\}$ means $K = \emptyset$.
    Hence, there is only one such $K$ with $|K|=0$.
    Substituting the above in Equation~\ref{eq:owen-values-case-study} yields:
    \begin{align*}
        \owen{i}{v} &=
        \sum_{T \subseteq M_N \setminus \{S^k\}} \frac{0! \left(1 - 0 - 1\right)! |T|! \left(m - |T| - 1\right)!}{1! \, m!} \left( v\left(Q \cup \left\{i\right\}\right) - v\left(Q \right) \right)
        \\
        &=
        \sum_{T \subseteq M_N \setminus \{S^k\}} \frac{|T|! \left(m - |T| - 1\right)!}{m!} \left( v\left(Q \cup \left\{i\right\}\right) - v\left(Q \right) \right)
        \\
        &=
        \shap{i}{v},
    \end{align*}
    where the last equality holds because $T \mapsto Q$ is a bijection from the subsets of $M_N \setminus \{S^k\}$ onto the subsets of $N \setminus \{i\}$ with $|T| = |Q|$, so the sum coincides with \Cref{powerindex:shapley}.
\end{proof}

\owennegative*
\begin{proof}
    Pick any non-classical principle out of success, counterfactuality or quantitative counterfactuality, and assume for the sake of contradiction that Owen values satisfy it. 
    This implies that the chosen principle is also satisfied for singleton partitions.
    Now, we can apply \Cref{theorem:simplest-case-owen} to obtain that the chosen principle is satisfied by Shapley values. 
    This is a contradiction as  Shapley satisfied none of the non-classical principles.  
\end{proof}

\owenSymmetryNegativeProof*
\begin{proof}
    Let $N = \{x_1, x_2, x_3\}$ with a priori coalition structure $M_N = \{\{x_1, x_2\}, \{x_3\}\}$, and let
    \[
        v(S) :=
        \begin{cases}
            1, &\text{if $|S| \geq 2$};\\
            0, &\text{otherwise}.
        \end{cases}
    \]
    Since $v(S)$ depends only on $|S|$, for all $i, j \in N$ and every $S \subseteq N \setminus \{i, j\}$ we have $|S \cup \{i\}| = |S \cup \{j\}|$ and hence $v(S \cup \{i\}) = v(S \cup \{j\})$.
    Any two players are therefore symmetric.

    Player $x_3$ forms its own block, so the only $K \subseteq \{x_3\} \setminus \{x_3\}$ is $K = \emptyset$, and $T$ ranges over $\emptyset$ and $\{\{x_1,x_2\}\}$, giving $Q = \emptyset$ and $Q = \{x_1,x_2\}$ respectively.
    Both marginal contributions vanish,
    \[
        v(\{x_3\}) - v(\emptyset) = 0,
        \qquad
        v(\{x_1,x_2,x_3\}) - v(\{x_1,x_2\}) = 1 - 1 = 0,
    \]
    so $\owen{x_3}{v} = 0$.
    By \Cref{theorem:owen-classical-principles}, $\owenempty$ is efficient, so $\owen{x_1}{v} + \owen{x_2}{v} = v(N) - \owen{x_3}{v} = 1 - 0 = 1$, and therefore at least one of $\owen{x_1}{v}$ and  $\owen{x_2}{v}$ is non-zero.
    That player is symmetric with $x_3$ but receives a different payout, contradicting symmetry.
\end{proof}

\nonstanadardprinciplesofmonotonegames*
\begin{proof}
    \begin{enumerate}
        \item[i.]
        Let $(N,v)$ be an arbitrary monotone game on the player set $N=\{i_1,\dots,i_n\}$.
        Notice that, if $v(N)\leq v(\emptyset)$, then $v(N)=0$, and 
        $$
        0=v(\emptyset)\leq v(S)\leq v(N)=0,
        $$
        by monotonicity.
        The latter condition implies $v(S)=0$ for every coalition $S \subseteq N$, which leads to the trivial game.
        Hence, for the sake of non-triviality, we assume $v(N)>0$. 
        
        We claim that there is at least one $D\subseteq N$ and $i \in N$ such that $v(D\cup\{i\})-v(D)>0$. 
        For the sake of contradiction, assume it is not the case.
        Then for every $S$ and $j$ we have $v(S\cup\{j\})=v(S)$ by monotonicity.
        This leads to the chain of equality:
        $$
        v(N)=v(N\setminus\{i_1\})=v(N\setminus\{i_1,i_2\})=...=v(\emptyset)
        $$
        
        and that implies $v(N)=0$, contradicting our initial assumption. 
        Therefore, we have: 
        \begin{align*}
            \banzhaf{i}{v}
            &=
            \frac{1}{2^{n-1}}\sum_{S \subseteq N \setminus\{i\}}(v(S\cup \{i\})-v(S))\geq \frac{v(D\cup\{i\})-v(D)}{2^{n-1}}>0.
            \\
            \shap{i}{v}
            &=
            \sum_{S \subseteq N \setminus\{i\}}\frac{|S|!(n-|S|-1)!}{n!}(v(S\cup \{i\})-v(S))
            \\
            &\geq
            \frac{|D|!(n-|D|-1)!}{n!}{(v(D\cup\{i\})-v(D))}>0.
        \end{align*}
        This proves success for that class of monotonic games under the Banzhaf and Shapley power indices. 
        For Owen values, by a similar argument as above, there is a set of partitions $S^1, \dots ,S^i\in M_N=\{S^1, \dots, S^m\}$ s.t. $v(S^1\cup \dots \cup S^i)=0$, and $v(S^1\cup \dots \cup S^i \cup S^{i+1})>0$. 
        Repeating the same argument, there are subsets $K, \{j\}\subseteq S^{i+1}$ s.t.
        \begin{equation}
        \label{eq:owen-proof-ineq}
            v(S^1\cup \dots \cup S^i \cup K \cup \{j\})-v(S^1\cup \dots \cup S^i \cup K)>0.
        \end{equation} 
        For any $T \subseteq M_N \setminus \{S^{i+1}\}$, let $Q = \bigcup_{S^j \in T} S^j$.
        Now, following the equalities:
        \begin{align*}
            \owen{j}{v} &= 
            \sum_{T \subseteq M_N \setminus \{S^{i+1}\}} \sum_{U \subseteq S^{i+1} \setminus \{j\}} \scalebox{0.95}{$\frac{|U|! \left(|S^{i+1}| - |U| - 1\right)! |T|! \left(|M_N| - |T| - 1\right)!}{|S^{i+1}|! |M_N|!}$}
            \left( v\left(Q \cup U \cup \left\{j\right\}\right) - v\left(Q \cup U \right) \right)
            \\
            &\geq \scalebox{0.95}{$\frac{|K|! \left(|S^{i+1}| - |K| - 1\right)! |T|! \left(|M_N| - |T| - 1\right)!}{|S^{i+1}|! |M_N|!}$}
            v(S^1\cup \dots \cup S^i \cup K \cup \{j\})-v(S^1\cup \dots \cup S^i \cup K)
            \\
            &=\frac{1}{B} v(S^1\cup \dots \cup S^i \cup K \cup \{j\})-v(S^1\cup \dots \cup S^i \cup K),
        \end{align*}
        where $B$ is the normalising factor in the last inequality. 
        By \Cref{eq:owen-proof-ineq}, it follows:
        \begin{equation*}
            \owen{j}{v}\geq \frac{1}{B} v(S^1\cup \dots \cup S^i \cup K \cup \{j\})-v(S^1\cup \dots \cup S^i \cup K)>0,
        \end{equation*}
        which proves success for the class of monotonic games under Owen values.
        \item[ii.]
        Again, assume $(N,v)$ to be an arbitrary strictly monotone game. 
        Following the condition on monotonicity,  every $i\in N$ and $D\subseteq N\setminus\{i\}$ satisfies $v(D\cup\{i\})-v(D)>0$. 
        Hence, to verify that $\shapempty,\banzhafempty$ and $\owenempty$ satisfy counterfactuality, we only need to establish the first condition related to counterfactuality, that is, $v(N)-v(N\setminus \{i\})>0$.
        However, the latter condition again follows from the strict monotonicity condition. 
    \end{enumerate}
\end{proof}
\setbasedanalysis*
\begin{proof}
    Let $A$ be any set of players, and $\playerpartition$ be a set-based partition of it.
    Fix any game $(A,v)$ with $v: 2^A \rightarrow \mathbb{R}$, and consider the set-based power index $\setpowerindexempty$.
    Each set-based index is its single-player counterpart applied to the game $(\playerpartition, \setgame)$, whose players are the blocks of $\playerpartition$.
    Since every principle in \Cref{section:principles} quantifies over arbitrary games, a principle satisfied by $\powerindexempty$ on all games holds in particular on $(\playerpartition, \setgame)$, and hence is satisfied by $\setpowerindexempty$.

    For the negative direction, suppose a principle is not satisfied by 
    $\powerindexempty$. 
    The counterexample of that fact for $\powerindexempty$ can be translated to the counterexample vis-a-vis $\setpowerindexempty$ in the following way:
    The feature set $A = \{x_1, \ldots, x_n\}$ in the original example is now partitioned as $\playerpartition$ in the new example construction, that is, $\playerpartition = \{\{x_1\}, \ldots, \{x_n\}\}$. 
    Now, the new players are the singleton partition of $A$, that is, $\playerpartition$. 
    The function $\setgame$ corresponding to $\playerpartition$ is defined as:
    $$
    \setgame(S) = v\left(\bigcup_{s\in S} s\right).
    $$
    For instance, $\setgame(\{\{x_1\},\{x_2\}\})=v(\{x_1,x_2\})$.
    For the Owen value, the a priori coalition structure is translated in the same way so each block $S^k \in M_A$ becomes the block $\{\{x\} \mid x \in S^k\}$ of $M_\playerpartition$.
    Since $x \mapsto \{x\}$ is a bijection from $A$ onto $\playerpartition$ with $\setgame(\{\{x\} \mid x \in S\}) = v(S)$ for every $S \subseteq A$, the two games are identical up to renaming and the counterexample carries over.
\end{proof}
\cardqunatitativetonormal*
\begin{proof}
    Suppose $\cardpowerindexempty$ satisfies quantitative counterfactuality, and let $(N,v)$ be an arbitrary but fixed game.
    We will prove that $\cardpowerindexempty$ also satisfies counterfactuality. 
    Straight from the definition, it follows that:
    $$
    \cardpowerindex{i}{v}=v(N)-\left(\frac{1}{\binom{|N|}{i}}\sum_{|S|=i}v(S)\right),
    $$ 
    and hence whenever $\cardpowerindex{i}{v}>0$, then: 
    \begin{align*}
        &\hspace{2.7 em}\cardpowerindex{i}{v}>0
        \\
        &\iff v(N)-\left(\frac{1}{\binom{|N|}{i}}\sum_{|S|=i}v(S)\right)>0
        &\text{due to the equality above.}\\
        &\iff v(N)>\left(\frac{1}{\binom{|N|}{i}}\sum_{|S|=i}v(S)\right).
    \end{align*}
    A similar argument holds when $>$ is changed by $=,<$, in the cases of $\cardpowerindexempty=0$ and $\cardpowerindexempty<0$, respectively.
    This ends the proof.
\end{proof}

%%%%%%%%%%%%%%%%%%%%%%%%%%%%%%%%%%%%%%%%%%%%%
% CARDINALITY ANALYSIS
%%%%%%%%%%%%%%%%%%%%%%%%%%%%%%%%%%%%%%%%%%%%%
\cardBanzhafNegativeEfficiency*
\begin{proof}
    Consider the game over the feature set $N = \{x_1, x_2, x_3\}$ defined by
    \[
        v(S)=
        \begin{cases}
        1; &\text{if $S\neq \emptyset$},\\
        0; &\text{if $S=\emptyset$}.
        \end{cases}
    \]
    Then $v(N)=1$.
    The cardinality Banzhaf payouts are
    \[
    \begin{split}
        \cardbanzhaf{1}{v}
        &=
        \frac{3\binom{2}{0}}{2^2}
        \left(
        \frac{1}{\binom{3}{1}}
        \sum_{\substack{S\subseteq N\\ |S|=1}}v(S)
        -
        v(\emptyset)
        \right)
        =
        \frac{3}{4}
        \left( \frac{1+1+1}{3} - 0 \right)
        = \frac{3}{4},\\
        \cardbanzhaf{2}{v}
        &=
        \frac{3\binom{2}{1}}{2^2}
        \left(
        \frac{1}{\binom{3}{2}}
        \sum_{\substack{S\subseteq N\\ |S|=2}}v(S) - \frac{1}{\binom{3}{1}} \sum_{\substack{S\subseteq N\\ |S|=1}}v(S)
        \right)
        =
        \frac{3}{2}
        \left(
        \frac{1+1+1}{3} - \frac{1+1+1}{3}
        \right)
        =
        0,\\
        \cardbanzhaf{3}{v}
        &=
        \frac{3\binom{2}{2}}{2^2}
        \left(
        v(N) - \frac{1}{\binom{3}{2}} \sum_{\substack{S\subseteq N\\ |S|=2}}v(S)
        \right)
        =
        \frac{3}{4}
        \left(
        1 - \frac{1+1+1}{3}
        \right)
        =
        0.
    \end{split}
    \]
    Hence, $\sum_{c=1}^{3}\cardbanzhaf{c}{v} = \frac{3}{4} \neq 1 = v(N)$.
    This contradicts efficiency.
    Hence, $\cardbanzhafempty$ does not satisfy efficiency.
\end{proof}

\cardBanzhafNegativeDummyCardinality*
\begin{proof}
    Consider the game over the feature set $N=\{x_1, x_2, x_3\}$ defined by
    $v(S)=|S|$ for every $S\subseteq N$.
    Then, for cardinality 1, every $x \in N$, $v(\{x\}) = 1 = v(\emptyset) + 1$ (since $v(\emptyset) = 0$).
    Hence, cardinality $1$ is dummy with constant contribution $h_1 = 1$.
    Dummy cardinality therefore requires $\cardbanzhaf{1}{v}=1.$
    However,
    \[
        \begin{split}
        \cardbanzhaf{1}{v}
        &=
        \frac{3\binom{2}{0}}{2^2}
        \left(
        \frac{1}{\binom{3}{1}}
        \sum_{\substack{S\subseteq N \\ |S| = 1}}v(S)
        -
        v(\emptyset)
        \right)
        =
        \frac{3}{4}
        \left(
        \frac{1+1+1}{3}
        -
        0
        \right)
        =
        \frac{3}{4}.
        \end{split}
    \]
    Thus, $\cardbanzhaf{1}{v}\neq h_1$, contradicting dummy cardinality.
    Hence, $\cardbanzhafempty$ does not satisfy dummy cardinality.
\end{proof}

\cardBanzhafPositiveAbsoluteAnonymity*
\begin{proof}
    The claim follows from the fact that $\cardbanzhafempty$ is the $\upsiempty$-value with an added weight (Proposition~2 in~\cite{Upsilon-values2024}).
    For every cardinality $c \in \{1, \ldots, |N|\}$,
    \[
        \cardbanzhaf{c}{v}
        \;=\;
        \frac{|N| \binom{|N|-1}{c-1}}{2^{|N|-1}}
        \left(
        \frac{1}{\binom{|N|}{c}}
        \sum_{\substack{S \subseteq N\\ |S| = c}} v(S) - 
        \frac{1}{\binom{|N|}{c-1}}
        \sum_{\substack{S' \subseteq N\\
        |S'|=c-1}} v(S')
        \right)
        =
        \frac{|N|\binom{|N|-1}{c-1}}{2^{|N|-1}}
        \upsi{c}{v}.
    \]
    The weight depends only on $|N|$ and $c$, and not on the identity of any player.
    Since $\upsiempty$ satisfies absolute anonymity by \Cref{theorem:upsilon-values-standard}, relabelling the players leaves each $\upsiempty$ payout unchanged.
    Multiplying each such payout by the Banzhaf index's identity-independent weight therefore preserves absolute anonymity.
    Hence, $\cardbanzhafempty$ also satisfies absolute anonymity.
\end{proof}

\cardBanzhafNegativeSuccess*
\begin{proof}
    Consider a game over the feature set $\{x_1,  x_2\}$, where $v$ is defined as follows:
    \begin{equation*}
        v(\{x_1\})=1, \qquad v(\{x_2\})=-1, \qquad v(\{x_1, x_2\})=0.
    \end{equation*}
    Calculating the $\cardbanzhafempty$ values yields:
    \begin{equation}
    \label{eq:card-banzhaf-value-calc}
        \begin{split}
            \cardbanzhaf{1}{v}
            &=
            \frac{2\binom{1}{0}}{2}
            \left(
            % \frac{v({x_1}) + v({x_2})}{2}
            \frac{1}{\binom{2}{1}}\left(v(\{x_1\}) + v(\{x_2\})\right)
            -
            v(\emptyset)
            \right)
            =
            0,\\
            \cardbanzhaf{2}{v}
            &=
            \frac{2\binom{1}{1}}{2}
            \left(
            v(\{x_1,x_2\})
            -
            % \frac{v({x_1}) + v({x_2})}{2}
            \frac{1}{\binom{2}{1}}\left(v(\{x_1\}) + v(\{x_2\})\right)
            \right)
            =
            0.
        \end{split}
    \end{equation}
    Success demands that either $\cardbanzhaf{1}{v}$ or $\cardbanzhaf{2}{v}$ should be non-zero, thereby contradicting the above.
    Hence, success is not satisfied by $\cardbanzhafempty$ values.
\end{proof}

\cardBanzhafNegativeCounterfactuality*
\begin{proof}
    Consider the value function $v$ over the feature set $\{x_1, x_2, x_3\}$ defined as follows.
    \[
        v(S):=
        \begin{cases}
           1; &\text{if $S=\{{ x_1}\}$ or $S=\{{ x_1, x_2, x_3}\}$} ,
           \\
           -1; &\text{if $S=\{ x_2\}$},
           \\
           0; &\text{otherwise}.
        \end{cases}
    \]
    Note that
    \[
        \cardbanzhaf{1}{v}
        =
        \frac{3\binom{2}{0}}{2^2}
        \left(
            \frac{1}{\binom{3}{1}}
            \sum_{\substack{S\subseteq N\\ |S|=1}}v(S)
            -
            v(\emptyset)
        \right)
        =
        0.
    \]
    Since $\frac{3\binom{2}{0}}{2^2} = \frac{3}{4} \neq 0$ and $v(\emptyset) = 0$, this means that the average value of singleton coalitions is $0$.
    Hence, if cardinality-based counterfactuality was satisfied, then it would imply $v(N) - \frac{1}{\binom{3}{1}} \sum_{\substack{S\subseteq N \\ |S|=1}}v(S) = 0.$
    However, $v(N) - \frac{1}{\binom{3}{1}} \sum_{\substack{S\subseteq N\\ |S|=1}}v(S) = 1 > 0,$
    contradicting the implication.
    Hence, the cardinality-based counterfactuality principle is not satisfied by $\cardbanzhafempty$.
\end{proof}

%%% CARDINALITY BANZHAF %%%
\cardOwenEfficiency*
\begin{proof}
    By construction, $\cardowen{c}{v}$ collects precisely those terms of the ordinary Owen values for which the coalition formed after the relevant player joins has cardinality $c$ (see \Cref{appendix:derivation-owen-cardinality}).
    Since every Owen term is assigned to exactly one cardinality we have: $\sum_{c=1}^{|N|} \cardowen{c}{v} = \sum_{i \in N} \owen{i}{v}$.
    By \Cref{theorem:owen-classical-principles} the single player Owen value $\owenempty$ is efficient, so $\sum_{i \in N} \owen{i}{v} = v(N)$.
    Hence, $\cardowenempty$ satisfies cardinality-based efficiency.
\end{proof}

\cardOwenDummy*
\begin{proof}
    Let $c$ be a dummy cardinality with constant $h_c$.
    Recall from \Cref{powerindex:cardinality-owen} that $\cardowen{c}{v}$ is a weighted sum of terms $m_{k,R}(t)$, where the weights and the collection of terms depend only on the coalition structure $M_\players$, the cardinality $c$ and the number of players.
    The value function enters only through the terms themselves.

    We first show that every term equals $h_c$.
    The constraint $t = c - |Q_R|$ ensures that each term describes what happens when one player of $S^k$ joins a coalition of size $c-1$, namely the players of $Q_R$ together with $t-1$ teammates from $S^k$.
    Since $c$ is a dummy cardinality, every such step increases the value by exactly $h_c$, no matter which coalition of size $c-1$ we start from or which player joins.
    As $m_{k,R}(t)$ is an average of these steps, it equals $h_c$ as well.
    Writing $W_c$ for the sum of the weights collected at cardinality $c$, we therefore have $\cardowen{c}{v} = W_c \, h_c$ whenever $c$ is a dummy cardinality.

    As noted above, $W_c$ is a fixed number once $M_\players$, $c$ and $|N|$ are fixed.
    Now, let $v^*$ assign $1$ to every coalition of size at least $c$ and $0$ to all others.
    For $v^*$ every cardinality is dummy since the step into size $c$ always adds $1$, while every other step adds $0$.
    By the first part of this proof, $\cardowen{c}{v^*} = W_c$ and $\cardowen{c'}{v^*} = 0$ for all $c' \neq c$.
    By \Cref{theorem:cardinality-owen-value-efficiency} these payouts sum to $v^*(N) = 1$, so $W_c = 1$.

    Since $W_c$ does not depend on the game, we get $\cardowen{c}{v} = W_c \cdot h_c = 1 \cdot h_c = h_c$ for every game in which $c$ is dummy.
\end{proof}

\cardOwenAbsoluteAnonymity*
\begin{proof}
    Let $\pi$ be a permutation of $N$, where renaming the players renames the a priori block structure along with them, so the permuted game is played under the structure $\pi(M_\players) = \{\pi(S^1), \ldots, \pi(S^m)\}$.

    We show that the two payouts are built from exactly the same terms.
    The number of blocks $m$ is unchanged, and each block keeps its size, since $|\pi(S^k)| = |S^k|$.
    The weights in \Cref{powerindex:cardinality-owen} depend only on $m$ and the number of blocks that have joined, so they are unchanged.
    Likewise, the outer coalition determined by $R$ becomes $\pi(Q_R)$, which has the same size as $Q_R$, so the constraints $t = c - |Q_R|$ and $1 \leq t \leq |S^k|$ select the same pairs $(k, R)$ as before.
    Finally, the coalitions entering $m_{k,R}(t)$ are simply renamed such that a coalition $Q_R \cup U$ becomes $\pi(Q_R \cup U)$, and by definition $v_\pi$ assigns it the same value as $v$ assigns to $Q_R \cup U$.
    Hence every term of the permuted sum equals the corresponding term of the original, and the two payouts coincide.
\end{proof}

\cardOwenSuccess*
\begin{proof}
    Consider the game of \Cref{ex:success-failure-upsilon-value} together with the singleton coalition structure $M_N = \{\{x_1\}, \{x_2\}\}$.
    Each block has $|S^k| = 1$, so the constraint $1 \leq t \leq |S^k|$ in \Cref{powerindex:cardinality-owen} forces $t = 1$ and hence $|Q_R| = c - 1$.
    Each term is then an ordinary marginal contribution $v(Q_R \cup \{x_k\}) - v(Q_R)$ with $|Q_R| = c-1$, weighted as in the Shapley value, so $\cardowen{c}{v} = \cardshap{c}{v} = \upsi{c}{v}$ for every $c$.
    By \Cref{ex:success-failure-upsilon-value}, $\upsi{1}{v} = \upsi{2}{v} = 0$, hence $\cardowen{1}{v} = \cardowen{2}{v} = 0$ while $v$ is not the zero function.
    Success demands a cardinality with a non-zero payout, so it is not satisfied by $\cardowenempty$.
\end{proof}

\cardOwenCounterfactuality*
\begin{proof}
    Consider the game of \Cref{ex:counterfactuality-failure-upsilon-value} together with the singleton coalition structure $M_N = \{\{x_1\}, \{x_2\}, \{x_3\}\}$.
    As in the proof of \Cref{theorem:cardinality-owen-value-success}, a singleton structure forces $t = 1$, so $\cardowen{c}{v} = \cardshap{c}{v} = \upsi{c}{v}$ for every $c$.
    By \Cref{ex:counterfactuality-failure-upsilon-value}, $\upsiempty$ violates counterfactuality and quantitative counterfactuality on this game, and the violation carries over to $\cardowenempty$.
\end{proof}

\absoluteanonymitycardcounterfactuality*
\begin{proof}
    Fix an arbitrary game $(N,v)$, and let $\pi$ be an arbitrary but fixed permutation on $N$. 
    For our purposes, it suffices to show that the condition for absolute anonymity, namely $\cardcounterfact{i}{\pi_v}=\cardcounterfact{i}{v}$
    is equivalent to the following: 
    \begin{equation}
    \label{eq:absolute-anonymity-card-counterfactuality}\sum_{|S|=i}v(S)=\sum_{|S|=i} v_{\pi}(S).  
    \end{equation}
    Notice that \Cref{eq:absolute-anonymity-card-counterfactuality} is a summation over every coalition of size $i$.
    Since $|S|=i$  iff $|\hat\pi(S)|=i$, any coalition $S$ appears on the left-hand-side of \Cref{eq:absolute-anonymity-card-counterfactuality} iff $\hat\pi^{-1}(T)$ appears on the right-hand-side.
    Since \Cref{eq:absolute-anonymity-card-counterfactuality} takes the sum over all the possible coalitions of size $i$, both the left-hand-side and right-hand-side are indeed equal.
    Now, the proof follows via the following chain of equivalences:
    \begin{equation*}
       \sum_{|S|=i}v(S)=\sum_{|S|=i} v_{\pi}(S) 
        \iff
       \left[v(N)-\sum_{|S|=i}v(S)\right]=\left[v_{\pi}(N)-\sum_{|S|=i} v_{\pi}(S)\right] 
       \iff
       \cardcounterfact{i}{v}=\cardcounterfact{i}{v_{\pi}}.
    \end{equation*}

    Recall that $v_{\pi}(\hat\pi(S))=v(S)$, or equivalently
    $v_{\pi}(S)=v(\hat\pi^{-1}(S))$.
    Hence, the first equivalence above follows due to the fact that $v_\pi(N)=v_\pi(\hat\pi(N))=v(N)$. 
    The second equivalence follows directly from the definition of the cardinality-based counterfactual power index. 
    This closes the proof.
\end{proof}
%%%%%%%%%%%%%%%%%%%%%%%%%%%%%%%%%%%%%%%%%%%
% END OF PRINCIPLE-BASED ANALYSIS PROOFS
%%%%%%%%%%%%%%%%%%%%%%%%%%%%%%%%%%%%%%%%%%%

%%%%%%%%%%%%%%%%%%%%%%%%%%%%%%%%%%%%%%%%%%%
% BEGINNING DERIVATION OF CARDINALITY INDICES
%%%%%%%%%%%%%%%%%%%%%%%%%%%%%%%%%%%%%%%%%%%
\section{Derivation of cardinality-based versions of power indices}\label{appx:cardinality-derivations}
We perform a size decomposition of the Shapley, Banzhaf, and Owen values to attribute power to cardinalities instead of players.
We do this by considering the original definitions of these power indices, expanding their sums, and reordering the terms.
In doing so, we shift the focus from players to coalition sizes.
In a sense, we consider the game from the perspective in which cardinalities are the ``players''.

\subsection{Cardinality Shapley and Cardinality Banzhaf}
Let $N$ be the set of players, $n = \left|N\right|$ be the number of players, and $i \in N$ a player.
We call $S \subseteq N$ a coalition, and define the shorthand
\begin{equation}
    \Delta_i\left(S\right) = v\left(S \cup \left\{i\right\}\right) - v\left(S\right).
\end{equation}
The Shapley value $\phi$ and Banzhaf value $\beta$, respectively, are defined as
\begin{equation}
    \phi_i = \sum_{S \subseteq N \setminus \left\{i\right\}} \gamma^\text{Sh}\left(s\right) \Delta_i\left(S\right),
    \qquad
    \beta_i = \sum_{S \subseteq N \setminus \left\{i\right\}} \gamma^\text{Bz}\left(s\right) \Delta_i\left(S\right).
\end{equation}
They differ only in their weights,
\begin{equation}\label{eqn:weights}
    \gamma^\text{Sh}\left(s\right) = \frac{s! \left(n-s-1\right)!}{n!} = \frac{1}{n} \binom{n-1}{s}^{-1},
    \qquad
    \gamma^\text{Bz}\left(s\right) = \frac{1}{2^{n-1}}.
\end{equation}
In the above, we have made use of the binomial coefficient's definition,
\begin{equation}\label{eqn:binomial}
    \binom{n}{k} = \frac{n!}{k!\left(n-k\right)!} \quad \Rightarrow \quad \binom{n}{k}^{-1} = \frac{k!\left(n-k\right)!}{n!},
\end{equation}
which will also be useful in the following derivation.

% ==============================================================================
\subsubsection{Group Player $i$'s Terms by Coalition Size}
We begin by grouping the terms of player $i$ by coalition size.
There are $\binom{n-1}{s}$ coalitions of size $s$ that player $i$ could join, which is, so far independent of the power index,
\begin{equation}
    \Delta_i\left(s\right) = \frac{1}{\binom{n-1}{s}} \sum_{\substack{S \subseteq N \setminus \left\{i\right\} \\ \left|S\right| = s}} \Delta_i\left(S\right),
\end{equation}
and, moving the weight term to the other side, we get
\begin{equation}
    \sum_{\substack{S \subseteq N \setminus \left\{i\right\} \\ \left|S\right| = s}} \Delta_i\left(S\right) = \binom{n-1}{s} \Delta_i\left(s\right).
\end{equation}
The common per-player form for the contribution across all coalition sizes is
\begin{equation}\label{eqn:per-player-form}
    \Phi_i = \sum_{s=0}^{n-1} \gamma\left(s\right) \binom{n-1}{s} \Delta_i\left(s\right),
\end{equation}
where $\Phi$ is an arbitrary power index associated with weights $\gamma$.
Now we plug in the Shapley weight $\gamma^\text{Sh}$ and the Banzhaf weight $\gamma^\text{Bz}$, respectively, and get
\begin{equation}\label{eqn:per-player-simplified}
    \phi_i = \sum_{s=0}^{n-1} \frac{1}{n} \Delta_i\left(s\right), \qquad \beta_i = \sum_{s=0}^{n-1} \frac{\binom{n-1}{s}}{2^{n-1}} \Delta_i\left(s\right).
\end{equation}
Here, Shapley's inverse binomial weight cancelled out, but Banzhaf's constant weight factors did not.

% ==============================================================================
\subsubsection{Drop Player Identity}
We drop player identities by summing over all players, and we require conservation---the total power over all players must equal the total power over all cardinalities.
\begin{align}
    \sum_{i\in N}\Phi_i
    & = \sum_{s=0}^{n-1} \gamma(s) \sum_{i\in N} \sum_{\substack{S\subseteq N\setminus \left\{i\right\} \\ \left|S\right|=s}}\Delta_i(S) \\
    & = \sum_{s=0}^{n-1} \gamma(s) \underbrace{\Biggl[ \underbrace{\sum_{i\in N} \sum_{\substack{S\subseteq N\setminus \left\{i\right\} \\ \left|S\right|=s}} v\left(S \cup \left\{i\right\}\right)}_{A\left(s\right)} - \underbrace{\sum_{i\in N} \sum_{\substack{S\subseteq N\setminus \left\{i\right\} \\ \left|S\right|=s}} v\left(S\right)}_{B\left(s\right)} \Biggr]}_{C\left(s\right)}
\end{align}
What are $A$ and $B$?
$A$ contains those cases where player $i$ has joined the coalition,
\begin{equation}
    A\left(s\right) = \sum_{i\in N} \sum_{\substack{S\subseteq N\setminus \left\{i\right\} \\ \left|S\right|=s}} v\left(S \cup \left\{i\right\}\right)
    = \left(s+1\right) \sum_{\left|S'\right| = s+1} v\left(S'\right)
    = \left(s+1\right) \binom{n}{s+1} \overline{v}\left(s+1\right)
\end{equation}
And $B$ contains those cases where player $i$ has not joined the coalition,
\begin{equation}
    B\left(s\right) = \sum_{i\in N} \sum_{\substack{S\subseteq N\setminus \left\{i\right\} \\ \left|S\right|=s}} v\left(S\right)
    = \left(n-s\right) \sum_{\left|S\right| = s} v\left(S\right)
    = \left(n-s\right) \binom{n}{s} \overline{v}\left(s\right).
\end{equation}
Here, $\overline{v}\left(s\right)$ is the average value of coalitions of size $s$,
\begin{equation}
    \overline{v}\left(s\right) = \frac{1}{\binom{n}{s}} \sum_{\substack{S \subseteq N \\ \left|S\right| = s}} v\left(S\right).
\end{equation}
Now we put $A$ and $B$ together to get $C$,
\begin{equation}
    C\left(s\right) = A\left(s\right) - B\left(s\right) = \left(s+1\right) \binom{n}{s+1} \overline{v}\left(s+1\right) - \left(n-s\right) \binom{n}{s} \overline{v}\left(s\right),
\end{equation}
where $s \in \left[0,n-1\right]$.
Because $\left(s+1\right)\binom{n}{s+1} = n \binom{n-1}{s}$ and $\left(n-s\right) \binom{n}{s} = n \binom{n-1}{s}$, we have
\begin{equation}
    C\left(s\right) = n \binom{n-1}{s} \overline{v}\left(s+1\right) - n \binom{n-1}{s} \overline{v}\left(s\right) = n \binom{n-1}{s} \underbrace{\left[\overline{v}\left(s+1\right) - \overline{v}\left(s\right) \right]}_{=: m\left(s+1\right)}.
\end{equation}
Here, $m\left(s+1\right)$ is the difference in average values between size-$\left(s+1\right)$ and size-$s$ coalitions.
Together, we have
\begin{equation}
    \sum_i \Phi_i = \sum_{s=0}^{n-1} \gamma\left(s\right) C\left(s\right) = \sum_{s=0}^{n-1} \gamma\left(s\right) n \binom{n-1}{s} m\left(s+1\right) = \sum_{s=1}^n \gamma\left(s-1\right) n \binom{n-1}{s-1} m\left(s\right).
\end{equation}

% ==============================================================================
\subsubsection{Plug in Shapley and Banzhaf Weights}
We plug in the Shapley and Banzhaf weights from \Cref{eqn:weights},
\begin{equation}
    \sum_i \phi_i = \sum_{s=0}^{n-1} m\left(s+1\right) = \sum_{s=1}^{n} m\left(s\right) \qquad \sum_i \beta_i = \sum_{s=0}^{n-1} \frac{n \binom{n-1}{s}}{2^{n-1}} m\left(s+1\right) = \sum_{s=1}^{n} \frac{n \binom{n-1}{s-1}}{2^{n-1}} m\left(s\right).
\end{equation}
For Shapley, the weights have completely cancelled, but for Banzhaf, they have not.
The cardinality-based versions of Shapley and Banzhaf are
\begin{equation}
    \phi_s = m\left(s\right), \qquad \beta_s = \frac{n \binom{n-1}{s-1}}{2^{n-1}} m\left(s\right), \qquad s = 1,\ldots,n.
\end{equation}

% ==============================================================================
\subsubsection{Example}
We provide an example where cardinality Shapley and Banzhaf return different ranking of cardinalities.
Let
\begin{equation}
    N = \left\{X,Y,Z\right\},
    \qquad 
    v\left(S\right) = \begin{cases}
      0 & \left|S\right| = 0 \\
      3 & \left|S\right| = 1 \\
      5 & \left|S\right| = 2 \\
      6 & \left|S\right| = 3
    \end{cases}
\end{equation}
Because $v\left(S\right)$ only depends on the size $s$ of $S$, we have $\overline{v}\left(s\right) = v\left(S\right)$.
Now plug the cardinalities into $m$,
\begin{equation}
    m\left(s\right) = \overline{v}\left(s\right) - \overline{v}\left(s-1\right),
    \qquad
    m\left(1\right) = 3,
    \qquad
    m\left(2\right) = 2,
    \qquad
    m\left(3\right) = 1.
\end{equation}
This game is a game with diminishing returns, where adding one more player adds less value as the cardinality increases.
Now we compute the weights for Shapley and Banzhaf,
\begin{equation}
    w_S = \left(1, 1, 1\right), \qquad w_B\left(s\right) = \frac{n \binom{n-1}{s-1}}{2^{n-1}} = \frac{3 \binom{2}{s-1}}{2^2} = \left(\frac{3}{4}, \frac{6}{4}, \frac{3}{4}\right).
\end{equation}
And, putting things together, we get
\begin{equation}
    \phi_1 = 3, \qquad \phi_2 = 2, \qquad \phi_3 = 1, \qquad \beta_1 = \frac{9}{4}, \qquad \beta_2 = \frac{12}{4}, \qquad \beta_3 = \frac{3}{4},
\end{equation}
corresponding to the rankings of cardinalities
\begin{equation}
    \text{Shapley:~} 1 > 2 > 3 \qquad \text{Banzhaf:~} 2 > 1 > 3
\end{equation}

%%%%%%%%%%%%%%%%%%
% OWEN CARDINALITY
%%%%%%%%%%%%%%%%%%
\subsection{Cardinality Owen}\label{appendix:derivation-owen-cardinality}
Recall that the Owen value models the situation in which a priori coalition structures hold, that is, certain coalitions would never form in practice.
Let $M_\players = \left\{S^1, \ldots, S^m\right\}$ be a partition of the players $N$ that defines the coalition structure, where $\mathcal{I} = \left\{1, \ldots m\right\}$ are indices into $M_\players$.
For each subset $R \subseteq \mathcal{I}$, we define $Q_R = \bigcup_{j \in R} S^j$, that is, $Q_R$ is the set of players in the blocs indexed by $R$.
With player $i \in S^k$, the Owen value is defined as
\begin{equation}
    \omega_i = \sum_{R \subseteq \mathcal{I} \setminus \left\{k\right\}} \sum_{T \subseteq S^k \setminus \left\{i\right\}} \gamma^\text{Ow}\left(r,t\right) \Delta_i \left( Q_R \cup T \right),
\end{equation}
which can also be written as
\begin{equation}
    \omega_i = \sum_{r=0}^{m-1} \sum_{t=0}^{s_k-1} \gamma^\text{Ow}\left(r,t\right) \sum_{\substack{R \subseteq \mathcal{I} \setminus \left\{k\right\} \\ \left|R\right| = r}} \sum_{\substack{T \subseteq S^k \setminus \left\{i\right\} \\ \left|T\right| = t}} \Delta_i\left(Q_R \cup T\right),
\end{equation}
where $s_k = \left|S^k\right|$.
That is, we replace the outer sums to run over the number of outside blocs $r$ and the number of inside-bloc players $t$---from the perspective of player $i$---and introduce inner sums where we pick out the outside and inside coalitions of those sizes $r$ and $t$.
The Owen value follows the idea that blocs in the ``outer game'' join ``all or nothing'', and players in the ``inner game'' join one player at a time.

The Owen weight is
\begin{equation}\label{eqn:owen-weights}
    \gamma^\text{Ow}\left(r,t\right) = \frac{r! \left(m-1-r\right)!}{m!} \cdot \frac{t! \left(s_k-1-t\right)!}{s_k!} = \frac{1}{m} \binom{m-1}{r}^{-1} \frac{1}{s_k} \binom{s_k-1}{t}^{-1},
\end{equation}
where $r = \left|R\right|$, $t = \left|T\right|$, consists of two Shapley weights and is the probability that players $Q_R \cup T$ have joined before $i$.
In the case of a trivial partition $M_\players = \left\{N\right\}$ or $M_\players = \left\{\left\{i\right\} \,|\, i \in N\right\}$, the Owen value becomes identical with the Shapley value because the ``outer game'' collapses.

%We also define the quotient game $\left(M, v^\mathcal{P}\right)$, where each bloc $P_i \in \mathcal{P}$ acts like a player, and
%
% \begin{equation}
%     v^\mathcal{P}\left(R\right) = v\left(Q_R\right), \qquad R \subseteq M.
% \end{equation}

% ------------------------------------------------------------------------------
\subsubsection{Group Player $i$'s Terms by Coalition Size}
Different from before, we need to consider coalition formation in two stages, that is, in the ``outer game'', and the ``inner game''.
We denote the number of blocs that have joined in the outer game by $r$, and the number of players who have joined in the inner game as $t$.
However, we are aiming for a cardinality-based version of Owen that considers only a single size dimension---the size of the coalition at the moment when a player $i$ joins.
We fix the outer coalition by fixing $R$,
\begin{equation}
    \Delta_i\left(R,t\right) = \frac{1}{\binom{s_k-1}{t}} \sum_{\substack{T \subseteq S^k \setminus \left\{i\right\} \\ \left|T\right| = t}} \Delta_i\left(Q_R \cup T\right).
\end{equation}
These are, for player $i$ and with ``outside union'' $R$ fixed, the possible ``inside unions'' of size $t$ that $i$ can join---there are $\binom{s_k-1}{t}$ many.
Moving the weight to the other side yields
\begin{equation}
    \sum_{\substack{T \subseteq S^k \setminus \left\{i\right\} \\ \left|T\right| = t}} \Delta_i\left(Q_R \cup T\right) = \binom{s_k-1}{t} \Delta_i\left(R,t\right).
\end{equation}
In analogy with \Cref{eqn:per-player-form}, this gives us the per-player power
\begin{equation}
    \omega_i = \sum_{r=0}^{m-1} \sum_{\substack{R \subseteq \mathcal{I} \setminus \left\{k\right\} \\ \left|R\right| = r}} \sum_{t=0}^{s_k-1} \gamma^\text{Ow}\left(r,t\right) \binom{s_k-1}{t} \Delta_i\left(R,t\right).
\end{equation}

% ------------------------------------------------------------------------------
\subsubsection{Drop Player Identity}
We sum over all players to get rid of player identities,
\begin{align}
    \sum_{i \in N} \omega_i & = \sum_{k} \sum_{i \in S^k} \omega_i \\
    & = \sum_k \sum_{i \in S^k} \sum_{r=0}^{m-1} \sum_{\substack{R \subseteq \mathcal{I} \setminus \left\{k\right\} \\ \left|R\right| = r}} \sum_{t=0}^{s_k-1} \gamma^\text{Ow}\left(r,t\right) \binom{s_k - 1}{t} \Delta_i\left(R,t\right) \\
    & = \sum_k \sum_{r=0}^{m-1} \sum_{\substack{R \subseteq \mathcal{I} \setminus \left\{k\right\} \\ \left|R\right| = r}} \sum_{t=0}^{s_k-1} \gamma^\text{Ow}\left(r,t\right) \sum_{i \in S^k} \sum_{\substack{T \subseteq S^k \setminus \left\{i\right\} \\ \left|T\right| = t}} \Delta_i\left(Q_R \cup T\right) \\
    & = \sum_k \sum_{r=0}^{m-1} \sum_{\substack{R \subseteq \mathcal{I} \setminus \left\{k\right\} \\ \left|R\right| = r}} \sum_{t=0}^{s_k-1} \gamma^\text{Ow}\left(r,t\right) \underbrace{\Biggl[ \underbrace{\sum_{i \in S^k} \sum_{\substack{T \subseteq S^k \setminus \left\{i\right\} \\ \left|T\right| = t}} v\left(Q_R \cup T \cup \left\{i\right\} \right)}_{A_{k,R}\left(t\right)} - \underbrace{\sum_{i \in S^k} \sum_{\substack{T \subseteq S^k \setminus \left\{i\right\} \\ \left|T\right| = t}} v\left(Q_R \cup T \right)}_{B_{k,R}\left(t\right)} \Biggr]}_{C_{k,R}\left(t\right)}
\end{align}
As before, $A$ captures the cases where $i$ has joined the coalition, and $B$ captures the cases where $i$ has not joined.
The important detail here is that we have fixed the outer coalition $R$ rather than merely counting how many blocs are present in the outer game.
Why?
Because, if we only tracked the number of outer blocs, $r$, a single $r$ could form coalitions of different sizes $\left|Q_R\right|$ if the only constraint is $\left|R\right| = r$.
We now consider and simplify $A$ and $B$ in turn.
\begin{equation}
    A_{k,R}\left(t\right)
    = \sum_{i \in S^k} \sum_{\substack{T \subseteq S^k \setminus \left\{i\right\} \\ \left|T\right| = t}} v\left(Q_R \cup T \cup \left\{i\right\} \right)
    = \left(t+1\right) \binom{s_k}{t+1} \overline{v}_{k,R}\left(t+1\right)
\end{equation}
\begin{equation}
    B_{k,R}\left(t\right) 
    = \sum_{i \in S^k} \sum_{\substack{T \subseteq S^k \setminus \left\{i\right\} \\ \left|T\right| = t}} v\left(Q_R \cup T \right)
    = \left(s_k-t\right) \binom{s_k}{t} \overline{v}_{k,R}\left(t\right)
\end{equation}
Here, as before, $\overline{v}$ is the average value of a coalition with a specific size, but now we fix the outer coalition $R$,
\begin{equation}
    \overline{v}_{k,R}\left(\tau\right) = \frac{1}{\binom{s_k}{\tau}} \sum_{\substack{U \subseteq S^k \\ \left|U\right| = \tau}} v\left(Q_R \cup U\right).
\end{equation}
Now, via the definition of binomial coefficients, we use the fact that $\left(t+1\right) \binom{s_k}{t+1} = \left(s_k-t\right) \binom{s_k}{t} = s_k \binom{s_k-1}{t}$, and put $A$ and $B$ together to get $C$,
\begin{equation}
    C_{k,R}\left(t\right) = A_{k,R}\left(t\right) - B_{k,R}\left(t\right) = s_k \binom{s_k-1}{t} \underbrace{\left[\overline{v}_{k,R}\left(t+1\right) - \overline{v}_{k,R}\left(t\right)\right]}_{=: m_{k,R}\left(t+1\right)}
\end{equation}

% ------------------------------------------------------------------------------
\subsubsection{Plug in Owen Weight}
The binomial and inverse binomial coefficients cancel again,
\begin{align}
    \gamma^\text{Ow} \left(r,t\right) C_{k,R}\left(t\right) 
    & = \frac{1}{m} \binom{m-1}{r}^{-1} \frac{1}{s_k} \binom{s_k-1}{t}^{-1} \cdot s_k \binom{s_k-1}{t} m_{k,R}\left(t+1\right) \\
    & = \frac{r!\left(m-1-r\right)!}{m!} m_{k,R}\left(t+1\right).
\end{align}
Now we reindex $t \to t-1$ and get
\begin{equation}
    \sum_{i \in N} \omega_i = \sum_{k=1}^{m} \sum_{r=0}^{m-1} \frac{r!\left(m-1-r\right)!}{m!} \sum_{\substack{R \subseteq \mathcal{I} \setminus \left\{k\right\} \\ \left|R\right| = r}} \sum_{t=1}^{s_k} m_{k,R}\left(t\right).
\end{equation}

% ------------------------------------------------------------------------------
\subsubsection{Make the size explicit}
We can interpret $m_{k,R}\left(t\right)$ as the ``moment'' when a player of $S^k$ joins $\left(t-1\right)$ teammates who are already present, producing a coalition of size $c = \left|Q_R\right| + t$, where $\left|Q_R\right| = \sum_{j \in R} |S^j|$.
Now, we collect all terms whose resulting coalition has size $c$.
For each $\left(k,R\right)$, the summation constraints ensure that $t = c - \left|Q_R\right|$ takes on only one value, and every term is counted exactly once.
We get the cardinality version of Owen,
%
% \begin{equation}
%     \omega_\ell = \sum_{k=1}^m \sum_{r=0}^{m-1} \frac{r!\left(m-1-r\right)!}{m!} \sum_{\substack{R \subseteq M \setminus \left\{k\right\}, \left|R\right| = r \\ t = \ell - \left|Q_R\right|, 1 \leq t \leq p_k}} m_{k,R}\left(t\right), \qquad \ell = 1,\ldots,n.
% \end{equation}
\begin{equation}
    \cardowen{c}{v} = \sum_{k=1}^m \sum_{r=0}^{m-1} \frac{r!\left(m-1-r\right)!}{m!} \sum_{\substack{R \subseteq \mathcal{I} \setminus \left\{k\right\}, \left|R\right| = r \\ t = c - \left|Q_R\right|, 1 \leq t \leq |S^k|}} m_{k,R}\left(t\right), \qquad c = 1,\ldots,n.
\end{equation}

% ------------------------------------------------------------------------------
\subsubsection{Example}
Cardinality Owen and Shapley can rank cardinalities differently when there is a meaningful bloc structure.
The reason is that Cardinality Shapley assumes naively that all coalitions can form in any order.
From the perspective of orderings, Shapley considers all orderings to be possible, whereas Owen only allows those that are consistent with the assumption that blocs in the ``outer game'' join as a unit.
Let
\begin{equation}
    N = \left\{X,Y,Z\right\}, \quad M_\players = \left\{\left\{X,Y\right\}, \left\{Z\right\}\right\} \qquad \Rightarrow \qquad m=2, \quad s_1 = 2, \quad s_2 = 1.
\end{equation}
And the value function
\begin{equation}
    v\left(\emptyset\right)
  = v\left(\left\{X\right\}\right) 
  = v\left(\left\{Y\right\}\right) 
  = v\left(\left\{Z\right\}\right)
  = v\left(\left\{X,Y\right\}\right) 
  = 0 
  \quad 
    v\left(\left\{X,Z\right\}\right) 
  = v\left(\left\{Y,Z\right\}\right) 
  = v\left(N\right) = 6.
\end{equation}
Then we get the results presented in Table~\ref{table:card-owen-shap-comparison}.
\begin{table}[h!]
    \centering
    \begin{tabular}{c|cc}
        \toprule
        $c$ & $\omega_c$ & $\phi_c = \overline{v}\left(c\right) - \overline{v}\left(c-1\right)$ \\
        \midrule
        $1$ & $0$ & $0$ \\
        $2$ & $3$ & $4$ \\
        $3$ & $3$ & $2$ \\
        \bottomrule
    \end{tabular}
    \caption{Cardinality-based Owen and Shapley values for the game.}
    \label{table:card-owen-shap-comparison}
    \vspace*{-\baselineskip}
\end{table}

Why do Cardinality Owen and Cardinality Shapley disagree?
Because Shapley naively assumes that coalitions can form in any order, while Owen respects the a priori bloc structure.
This becomes clear when we view coalition formation from the perspective of random orderings.
Here are all possible random orderings of $N$, where those variants that are impermissible according to $M_\players$ are crossed out,
\begin{equation}
    \left[X,Y,Z\right], \quad
    \cancel{\left[X,Z,Y\right]}, \quad
    \left[Y,X,Z\right], \quad
    \cancel{\left[Y,Z,X\right]}, \quad
    \left[Z,X,Y\right], \quad
    \left[Z,Y,X\right].
\end{equation}
This affects how coalitions can be formed: it is possible to form the coalition $S = \left\{X,Z\right\}$ if $Z$'s bloc joins in the outer game, and then $X$ in the inner game.
It is, however, not possible that $X$ joins in the outer game alone for $Z$ to join in the inner game.
Therefore, Shapley permits more ways to form coalitions of size $2$ than Owen.

%%%%%%% END %%%%%%%
\end{document}